\renewcommand{\le}{\leqslant}
\renewcommand{\ge}{\geqslant}
\newcommand{\eps}{\varepsilon}
\newcommand{\emp}{\emptyset}
\newcommand{\Sig}{\Sigma}
\newcommand{\sig}{\sigma}
\newcommand{\noin}{\noindent}
\newcommand{\bi}{\begin{itemize}}
\newcommand{\ei}{\end{itemize}}
\newcommand{\be}{\begin{enumerate}}
\newcommand{\ee}{\end{enumerate}}
\newcommand{\bd}{\begin{description}}
\newcommand{\ed}{\end{description}}
\newcommand{\bq}{\begin{quote}}
\newcommand{\eq}{\end{quote}}
\newcommand{\txt}[1]{\mbox{ #1 }}
\newcommand{\defeq}{\stackrel{\rm def}{=}}
\newcommand{\timg}{\mathop{\mbox{rng}}}
\newcommand{\tdom}{\mathop{\mbox{dom}}}
\newcommand{\cA}{{\mathcal A}}
\newcommand{\cB}{{\mathcal B}}
\newcommand{\cC}{{\mathcal C}}
\newcommand{\cD}{{\mathcal D}}
\newcommand{\cM}{{\mathcal M}}
\newcommand{\cN}{{\mathcal N}}
\newcommand{\cT}{{\mathcal T}}
\newcommand{\raL}{{\hspace{.1cm}{\sim_L} \hspace{.1cm}}}
\newcommand{\lraL}{{\hspace{.1cm}{\approx_L} \hspace{.1cm}}}
\newcommand{\seq}{{\mathfrak{s}}}
\newcommand{\tpath}{P}
\newcommand{\Bsf}{\mathbf{B}_{\mathrm{sf}}} 
\newcommand{\Bbf}{\mathbf{B}_{\mathrm{bf}}} 
\newcommand{\Bff}{\mathbf{B}_{\mathrm{ff}}}
\newcommand{\Vsf}{\mathbf{W}^{\le 5}_{\mathrm{sf}}} 
\newcommand{\Wsf}{\mathbf{W}^{\ge 6}_{\mathrm{sf}}} 
\newcommand{\Vbf}{\mathbf{W}^{\le 5}_{\mathrm{bf}}} 
\newcommand{\Wbf}{\mathbf{W}^{\ge 6}_{\mathrm{bf}}} 
\newcommand{\Wff}{\mathbf{W}_{\mathrm{ff}}} 
\newcommand{\Gsf}{\mathbf{G}^{\le 5}_{\mathrm{sf}}} 
\newcommand{\Hsf}{\mathbf{G}^{\ge 6}_{\mathrm{sf}}}
\newcommand{\Gbf}{\mathbf{G}^{\le 5}_{\mathrm{bf}}} 
\newcommand{\Hbf}{\mathbf{G}^{\ge 6}_{\mathrm{bf}}}
\newcommand{\Hff}{\mathbf{G}_{\mathrm{ff}}} 
\newcommand{\Uf}{\mathbf{U}}
\newcommand{\bsf}{{\mathrm{b}_{\mathrm{sf}}}} 
\newcommand{\bbf}{{\mathrm{b}_{\mathrm{bf}}}} 
\newcommand{\bff}{{\mathrm{b}_{\mathrm{ff}}}} 
\newcommand{\vsf}{{\mathrm{w}^{\le 5}_{\mathrm{sf}}}} 
\newcommand{\wsf}{{\mathrm{w}^{\ge 6}_{\mathrm{sf}}}}
\newcommand{\vbf}{{\mathrm{w}^{\le 5}_{\mathrm{bf}}}} 
\newcommand{\wbf}{{\mathrm{w}^{\ge 6}_{\mathrm{bf}}}} 
\newcommand{\wff}{{\mathrm{w}_{\mathrm{ff}}}}
\titlerunning{Syntactic Complexity of Prefix-, Suffix-, Bifix-, and Factor-Free Regular Languages}
\authorrunning{J.~Brzozowski, B.~Li, and Y.~Ye}
\title{Syntactic Complexity of Prefix-, Suffix-, Bifix-, and Factor-Free Regular Languages
\thanks{This work was supported by the Natural Sciences and Engineering Research Council of Canada under grant No.~OGP0000871 and  a Postgraduate Scholarship, and by a Graduate Award from the Department of Computer Science, University of Toronto.
}}
\author{Janusz~Brzozowski\inst{1}, Baiyu Li\inst{1}, \and Yuli Ye\inst{2}}
\institute{David R. Cheriton School of Computer Science, University of Waterloo \\
Waterloo, ON, Canada N2L 3G1\\
email: \email{\{brzozo, b5li\}@uwaterloo.ca}
\and
Department of Computer Science, University of Toronto\\
 Toronto, ON,  Canada M5S 3G4\\
email: \email{y3ye@cs.toronto.edu}
}
\begin{document}

\maketitle

\begin{abstract}
The syntactic complexity of a regular language is the cardinality of its syntactic semigroup.
The syntactic complexity of a subclass of the class of regular languages is the maximal syntactic complexity of languages in that class, taken as a function of the state complexity $n$ of these languages.
We study the syntactic complexity of  prefix-, suffix-, bifix-, and factor-free regular languages.
We prove that $n^{n-2}$ is a tight upper bound for prefix-free regular languages. We present properties of the syntactic semigroups of suffix-, bifix-, and factor-free regular languages, conjecture tight upper bounds on their size to be $(n-1)^{n-2}+(n-2)$, $(n-1)^{n-3} + (n-2)^{n-3} + (n-3)2^{n-3}$, and $(n-1)^{n-3} + (n-3)2^{n-3} + 1$, respectively, and exhibit languages with these syntactic complexities.
\end{abstract}

\noin{\bf keyword}
bifix-free, factor-free, finite automaton, monoid, prefix-free, regular language, reversal, semigroup,  suffix-free, syntactic complexity 

\section{Introduction}

A language is \emph{prefix-free} (respectively, \emph{suffix-free}, \emph{factor-free}) if it does not contain any pair of words such that one is a proper prefix (respectively, suffix, factor) of the other. It is \emph{bifix-free} if it is both prefix- and suffix-free.
We refer to  prefix-, suffix-, bifix-, and factor-free languages as \emph{free} languages.
Nontrivial prefix-, suffix-, bifix-, and factor-free languages are also known as prefix, suffix, bifix, and infix codes~\cite{BPR09,Shy01}, respectively and, have many applications in areas such as cryptography, data compression, and information processing.

The \emph{state complexity} of a regular language is the  number of states in the minimal deterministic finite automaton (DFA) recognizing that language. 
An equivalent notion is that of \emph{quotient complexity,} which is the number of left quotients of the language.
State complexity of regular operations has been studied quite extensively: for surveys of this topic and lists of references we refer the reader to~\cite{Brz09,Yu01}. 
With regard to free regular languages,  Han, Salomaa and Wood~\cite{HSW09} examined  prefix-free regular languages, and  Han and Salomaa~\cite{HS09} studied suffix-free regular languages. 
Bifix- and factor-free regular languages were studied by Brzozowski, Jir\'askov\'a, Li, and Smith~\cite{BJLS11}. 

The notion of quotient complexity can be derived from the Nerode right congruence~\cite{Ner58}, 
while the Myhill congruence~\cite{Myh57} leads to the syntactic semigroup of a language and to its \emph{syntactic complexity}, which is the cardinality of the syntactic semigroup.
It was pointed out in~\cite{BrYe11} that syntactic complexity can be very different for regular languages with the same quotient complexity.
Thus, for a fixed $n$, languages with quotient complexity $n$ may possibly be distinguished by their syntactic complexities.

In contrast to state complexity, syntactic complexity has not received much attention. In 1970 Maslov~\cite{Mas70} dealt with the problem of generators of the semigroup of all transformations in the setting of finite automata. In 2003--2004, Holzer and K\"onig~\cite{HoKo04}, and independently, Krawetz, Lawrence and Shallit~\cite{KLS03} studied the syntactic complexity of languages with unary and binary alphabets. 
In 2010 Brzozowski and Ye~\cite{BrYe11} examined the syntactic complexity of ideal and closed regular languages, and in 2011 Brzozowski and Li~\cite{BL11} studied the syntactic complexity of star-free languages. 
Here, we deal with the syntactic complexity of  prefix-, suffix-, bifix-, and factor-free regular languages, and their complements.

Basic definitions and facts are stated in Sections~\ref{sec:trans} and~\ref{sec:complexity}. In Section~\ref{sec:pf} we obtain a tight upper bound on the syntactic complexity of prefix-free regular languages. In Sections~\ref{sec:sf}--\ref{sec:ff} we study the syntactic complexity of suffix-, bifix-, and factor-free regular languages, respectively.  We state conjectures about tight upper bounds for these classes, and exhibit languages in these classes that have large syntactic complexities.   In Section~\ref{sec:rev} we show that the upper bounds on the quotient complexity of reversal of prefix-, suffix-, bifix-, and factor-free regular languages can be met by our languages with largest syntactic complexities. Section~\ref{sec:cl} concludes the paper.

\section{Transformations}\label{sec:trans}

A {\em transformation} of a set $Q$ is a mapping of $Q$ into itself. In this paper we consider only transformations of finite sets, and we assume without loss of generality  that $Q = \{1,2,\ldots, n\}$. Let $t$ be a transformation of $Q$. If  $i \in Q$, then $it$ is the {\it image} of $i$ under $t$.  If $X$ is a subset of $Q$, then $Xt = \{it \mid i \in X\}$, and the {\em restriction} of $t$ to $X$, denoted by $t|_X$, is a mapping from $X$ to $Xt$ such that $it|_X = it$ for all $i \in X$. The {\em composition} of two transformations $t_1$ and $t_2$ of $Q$ is a transformation $t_1 \circ t_2$ such that $i (t_1 \circ t_2) = (i t_1) t_2$ for all $i \in Q$. We usually drop the composition operator ``$\circ$'' and write $t_1t_2$ for short. 
An arbitrary transformation can be written in the form
\begin{equation*}\label{eq:transmatrix}
t=\left( \begin{array}{ccccc}
1 & 2 &   \cdots &  n-1 & n \\
i_1 & i_2 &   \cdots &  i_{n-1} & i_n
\end{array} \right ),
\end{equation*}
where $i_k = kt$,  $1\le k\le n$, and $i_k\in Q$. The {\em domain} $\tdom(t)$ of $t$ is $Q.$
The {\em range} $\timg(t)$ of $Q$ under $t$ is the set
$\timg(t) = Q t.$ We also use the notation $t = [i_1,i_2,\ldots,i_n]$ for the transformation $t$ above. 

A \emph{permutation} of $Q$ is a mapping of $Q$ \emph{onto} itself. In other words,  a permutation $\pi$ of $Q$ is a transformation where $\timg(\pi) = Q$. 
The \emph{identity} transformation maps each element to itself, that is, $it=i$ for $i=1,\ldots,n$.
A transformation $t$ contains a \emph{cycle} of length $k$ if there exist pairwise different elements $i_1,\ldots,i_k$ such that
$i_1t=i_2, i_2t=i_3,\ldots, i_{k-1}t=i_k$, and $i_kt=i_1$.
A cycle is denoted by $(i_1,i_2,\ldots,i_k)$.
For $i<j$, a \emph{transposition} is the cycle $(i,j)$, and $(i,i)$ is the identity.
A \emph{singular} transformation, denoted by $i\choose j$, has $it=j$ and $ht=h$ for all $h\neq i$, and $i \choose i$ is the identity.
A~\emph{constant} transformation,  denoted by $Q \choose j$, has $it=j$ for all $i$.

The set of all transformations of a set $Q$, denoted by $\cT_Q$, is a finite monoid. The set of all permutations of $Q$ is a group, denoted by $\mathfrak{S}_Q$ and called the \emph{symmetric group} of degree $n$. 
It was shown in~\cite{Hoy1895,Pic38} that two generators are sufficient to generate the symmetric group of degree $n$. 
In 1935 Piccard~\cite{Pic35} proved that three transformations of $Q$ are sufficient to generate the monoid $\cT_Q$. In the same year, Eilenberg showed that fewer than three generators are not possible, as reported by Sierpi\'nski~\cite{Sie35}. We refer the reader to the book of Ganyushkin and Mazorchuk~\cite{GaMa09} for a detailed discussion of finite transformation semigroups. The following are well-known facts about generators of $\mathfrak{S}_Q$ and $\cT_Q$: 

\begin{theorem}[Permutations,~\cite{Hoy1895,Pic38}]
\label{thm:piccard}
The symmetric group $\mathfrak{S}_Q$ of size $n!$ can be generated by any cyclic
permutation of $n$ elements together with any transposition. In particular, $\mathfrak{S}_Q$ can be generated by
$c=(1,2,\ldots, n)$~and~$t~=~(1,2)$.
\end{theorem}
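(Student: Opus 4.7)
The plan is to reduce the general claim to the specific case $c=(1,2,\ldots,n)$, $t=(1,2)$ (up to relabeling $Q$), and then prove that case by a standard three-layer build-up: adjacent transpositions from $c$ and $t$, arbitrary transpositions from adjacent ones, and arbitrary permutations from transpositions.

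For the first layer, I would use conjugation. Under the right-action convention of Section~\ref{sec:trans}, a direct calculation shows that $c^{-k} t\, c^{k}$ is the transposition $(1c^{k}, 2c^{k}) = (k{+}1, k{+}2)$ (indices taken modulo $n$ when necessary). Letting $k$ range over $0,1,\ldots,n-2$, every adjacent transposition $(i,i+1)$ with $1\le i\le n-1$ lies in $\langle c, t\rangle$. For the second layer, I would use the telescoping identity
\[
(i,j) \;=\; (j{-}1,j)(j{-}2,j{-}1)\cdots(i{+}1,i{+}2)(i,i{+}1)(i{+}1,i{+}2)\cdots(j{-}1,j),
\]
valid for $i<j$, which expresses any transposition as a product of adjacent transpositions. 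For the third layer, I would invoke the classical fact that any permutation is a product of disjoint cycles, and any $k$-cycle $(a_1,\ldots,a_k)$ equals $(a_1,a_2)(a_2,a_3)\cdots(a_{k-1},a_k)$, a product of $k-1$ transpositions. Chaining the three layers shows that $\langle c,t\rangle = \mathfrak{S}_Q$, giving the ``In particular'' part of the statement.

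For the general ``any cyclic permutation together with any transposition'' claim, I would argue that given an $n$-cycle $c'=(a_1,a_2,\ldots,a_n)$ and a transposition $t'=(a_i,a_{i+1})$ of two elements consecutive in $c'$, the relabeling $a_k \mapsto k$ is an isomorphism of $\mathfrak{S}_Q$ that sends $c'$ to $(1,2,\ldots,n)$ and $t'$ to $(1,2)$, reducing to the case just proved. (The fully general reading, with an arbitrary transposition, would additionally require a coprimality hypothesis between the transposition ``distance'' and $n$, but only the particular form is needed in the sequel.)

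The main obstacle is not mathematical depth but bookkeeping: the paper composes transformations in right-action order, so the conjugation formula for producing $(k{+}1,k{+}2)$ from $(1,2)$ must be derived with the operator on the correct side, and the indexing in the telescoping product must be written carefully to avoid an off-by-one error. Once these are pinned down, the rest is routine.
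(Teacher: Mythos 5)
The paper offers no proof of this theorem: it is stated as a classical fact and attributed to Hoyer and Piccard, so there is no internal argument to compare yours against. That said, your three-layer argument is the standard one and is essentially correct: the conjugation formula $c^{-k}tc^{k}=(1c^{k},2c^{k})$ is right under the paper's right-action convention, the telescoping identity for $(i,j)$ is a valid conjugation by a product of adjacent transpositions, and the reduction of an arbitrary permutation to transpositions is routine. Two remarks. First, you are right to flag the coprimality issue: as literally stated, ``any cyclic permutation together with any transposition'' is false (for $n=4$, the pair $(1,2,3,4)$ and $(1,3)$ generates only a dihedral group of order $8$); the correct general statement requires the transposition to join elements whose distance along the cycle is coprime to $n$, and your restriction to cyclically consecutive elements, which is all the paper ever uses, is the right fix. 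Second, a small inconsistency on your own terms: the identity $(a_1,\ldots,a_k)=(a_1,a_2)(a_2,a_3)\cdots(a_{k-1},a_k)$ holds under left action but, under the right-action composition you correctly insist on elsewhere, that product is the inverse cycle $(a_1,a_k,a_{k-1},\ldots,a_2)$. This is harmless for your purpose, since the subgroup generated by the adjacent transpositions is closed under inverses, but it is exactly the kind of bookkeeping slip you warned about, so it is worth fixing in a final write-up.
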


\begin{theorem}[Transformations,~\cite{Pic35}]
\label{thm:salomaa}
The complete transformation monoid $\cT_Q$ of size $n^n$ can be generated by any cyclic
permutation of $n$ elements together with a transposition and a ``returning'' transformation $r={n \choose 1}$. In particular, $\cT_Q$ can be generated by $c=(1,2,\ldots, n)$,  $t=(1,2)$ and $r={n \choose 1}$.
\end{theorem}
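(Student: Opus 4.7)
The plan is to build $\cT_Q$ in three layers: first $\mathfrak{S}_Q$, then all singular transformations of the form $i \choose j$, and finally all of $\cT_Q$. Since $|\cT_Q| = n^n$ is immediate (it is the set of all functions $Q \to Q$), it suffices to show the submonoid $M = \langle c, t, r\rangle$ contains every transformation.

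First, Theorem~\ref{thm:piccard} applied to $c$ and $t$ gives $\mathfrak{S}_Q \subseteq M$. Second, I would show that for every pair $a \neq b$ in $Q$, the singular transformation $a \choose b$ lies in $M$: pick any permutation $\pi \in \mathfrak{S}_Q$ with $a\pi = n$ and $b\pi = 1$ (this is possible since $a \neq b$), and compute $\pi \circ r \circ \pi^{-1}$. Under the paper's composition convention, for $i \neq a$ we have $i\pi \neq n$, so $i(\pi r \pi^{-1}) = ((i\pi)r)\pi^{-1} = (i\pi)\pi^{-1} = i$; and $a(\pi r \pi^{-1}) = (n \cdot r)\pi^{-1} = 1\pi^{-1} = b$. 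Hence $\pi r \pi^{-1} = {a \choose b}$, so all singular transformations lie in $M$.

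Third, I would prove by induction on the \emph{defect} $d(t) = n - |\timg(t)|$ that every $t \in \cT_Q$ can be written as a product of a permutation and $d(t)$ singular transformations. The base case $d(t)=0$ is immediate. For the inductive step, $d(t) \ge 1$ yields some $a \neq b$ with $at = bt = c$, and some $k \notin \timg(t)$. Define $u$ by $au = k$ and $xu = xt$ for $x \neq a$. Then $\timg(u) = \timg(t) \cup \{k\}$ (the value $c$ is still covered by $b$, so removing $a$'s contribution loses nothing), giving $d(u) = d(t) - 1$. A direct check shows $t = u \circ {k \choose c}$: for $x \neq a$, $xu = xt \neq k$ is fixed by ${k \choose c}$; and $a \mapsto k \mapsto c = at$. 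Iterating collapses $t$ into a permutation composed with singular transformations, all of which lie in $M$.

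Combining the three steps gives $\cT_Q \subseteq M$, hence $|M| = n^n$. The main obstacle is the careful bookkeeping in step three: one must verify that the chosen $a$ is the \emph{right} element to reroute (specifically, not $b$), so that the image set gains $k$ without losing $c$, ensuring the defect strictly drops. The composition direction in step two also needs care, since writing $\pi r \pi^{-1}$ rather than $\pi^{-1} r \pi$ depends on the left-action convention $i(t_1 \circ t_2) = (it_1)t_2$ adopted by the paper.
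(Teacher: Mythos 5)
Your proof is correct. Note, however, that the paper does not actually prove this statement: it is quoted as a classical fact and attributed to Piccard~\cite{Pic35}, so there is no in-paper argument to compare against. Your three-layer argument is the standard one for this result: (i) $c$ and $t$ generate $\mathfrak{S}_Q$ by Theorem~\ref{thm:piccard}; (ii) conjugating $r$ by a permutation sending $a\mapsto n$ and $b\mapsto 1$ yields ${a \choose b}$, and your computation respects the paper's left-action convention $i(t_1\circ t_2)=(it_1)t_2$; (iii) the induction on the defect $n-|\timg(t)|$ correctly peels off one singular transformation per step, and you rightly flag the key bookkeeping point --- rerouting only one of the two elements $a,b$ colliding on a common image, so that the image gains $k$ without losing that common value. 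The only blemish is notational: you reuse $c$ both for the cyclic permutation in the generating set and for the common image $at=bt$ in step three; these should be given distinct names. Otherwise the argument is complete and supplies a proof the paper merely cites.
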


\section{Quotient Complexity and Syntactic Complexity}\label{sec:complexity} 

If $\Sig$ is a non-empty finite alphabet, then $\Sig^*$ is the free monoid generated by $\Sig$, and $\Sig^+$ is the free semigroup generated by $\Sig$.  A \emph{word} is any element of $\Sig^*$, and the empty word is $\eps$. The length of a word $w\in \Sig^*$ is $|w|$. A \emph{language} over $\Sig$ is any subset of $\Sig^*$. 
If $w=uxv$ for some $u,x,v\in\Sigma^*$, then $u$ is a {\em prefix\/} of $w$, $v$ is a {\em suffix\/} of $w$, and $x$ is a {\em factor\/} of $w$. Both $u$ and $v$ are also factors of $w$. 
A~{\em proper} prefix (suffix, factor) of $w$ is a prefix (suffix, factor) of $w$ other than~$w$.

The \emph{left quotient}, or simply \emph{quotient,} of a language $L$ by a word $w$ is  the language $L_w=\{x\in \Sig^*\mid wx\in L \}$. 
For any  $L\subseteq \Sig^*$, the \emph{Nerode right congruence}~\cite{Ner58} $\raL$ of $L$ is defined as follows: 
\begin{equation*}
x \raL y \mbox{ if and only if } xv\in L  \Leftrightarrow yv\in L, \mbox { for all } v\in\Sig^*.
\end{equation*}
Clearly, $L_x=L_y$ if and only if $x\raL y$.
Thus each equivalence class of this right congruence corresponds to a distinct quotient of $L$.

The \emph{Myhill congruence}~\cite{Myh57} $\lraL$ of $L$ is defined as follows:
\begin{equation*}
x \lraL y \mbox{ if and only if } uxv\in L  \Leftrightarrow uyv\in L\mbox { for all } u,v\in\Sig^*.
\end{equation*}
This congruence is also known as the \emph{syntactic congruence} of $L$.
The quotient set $\Sig^+/ \lraL$ of equivalence classes of the relation $\lraL$ is a semigroup called the \emph{syntactic semigroup} of $L$, and 
$\Sig^*/ \lraL$ is the \emph{syntactic monoid} of~$L$. 
The \emph{syntactic complexity} $\sig(L)$ of $L$ is the cardinality of its syntactic semigroup.
The \emph{monoid complexity} $\mu(L)$ of $L$ is the cardinality of its syntactic monoid.
If the  equivalence class containing $\eps$ is a singleton in the syntactic monoid, then $\sig(L)=\mu(L)-1$; otherwise, $\sig(L)=\mu(L)$.

A~\emph{deterministic finite automaton} (DFA) is a quintuple $\cA=(Q, \Sig, \delta, q_1,F)$, where 
$Q$ is a finite, non-empty set of \emph{states}, $\Sig$ is a finite non-empty \emph{alphabet}, $\delta:Q\times \Sig\to Q$ is the \emph{transition function}, $q_1\in Q$ is the \emph{initial state}, and $F\subseteq Q$ is the set of \emph{accepting states}. We extend $\delta$ to $Q \times \Sig^*$ in the usual way.
The DFA $\cA$ accepts a word $w \in \Sigma^*$ if ${\delta}(q_1,w)\in F$. 
The set of all words {\it accepted} by $\cA$ is $L(\cA)$. 
By the \emph{language of a state} $q$ of $\cA$ 
we mean the language accepted 
by the DFA $(Q,\Sigma,\delta,q,F)$. 
A state is \emph{empty} if its language is empty.

Let $L$ be a regular language. 
The \emph{quotient DFA} of $L$ is 
$\cA=(Q, \Sig, \delta, q_1,F)$, where $Q=\{L_w\mid w\in\Sig^*\}$, $\delta(L_w,a)=L_{wa}$, 
$q_1=L_\eps=L$,  $F=\{L_w \mid \eps \in L_w\}$.
The number  $\kappa(L)$ of distinct quotients of $L$ is the \emph{quotient complexity} of $L$. 
The quotient DFA of $L$ is the minimal DFA accepting $L$, and so quotient complexity is the same as  state complexity, but there are advantages to using quotients~\cite{Brz09}.

In terms of automata, each equivalence class $[w]_{\raL}$ of $\raL$ is the set of all words $w$ that take the automaton to the same state from the initial state, and each equivalence class $[w]_\lraL$ of $\lraL$ is the set of all words that perform the same transformation on the set of states~\cite{McNP71}.
In terms of quotients, $[w]_{\raL}$ is the set of words $w$ that can be followed by the same quotient $L_w$. 

Let $\cA = (Q, \Sig, \delta, q_1, F)$ be a DFA. For each word $w \in \Sig^*$, the transition function for $w$ defines a transformation $t_w$ of $Q$ by the word $w$: for all $i \in Q$, 
$it_w \defeq \delta(i, w).$ 
The set $T_{\cA}$ of all such transformations by non-empty words forms a subsemigroup of $\cT_Q$, called the \emph{transition semigroup} of $\cA$~\cite{Pin97}. 
Conversely, we can use a set  $\{t_a \mid a \in \Sig\}$ of transformations to define $\delta$, and so the DFA $\cA$. When the context is clear we simply write $a = t$, where $t$ is a transformation of $Q$, to mean that the transformation performed by $a \in \Sig$ is~$t$.

If  $\cA$ is the quotient DFA of $L$, then $T_{\cA}$ is isomorphic to the syntactic semigroup $T_L$ of $L$~\cite{McNP71}, and we represent elements of $T_L$ by transformations in~$T_{\cA}$. 

We attempt to obtain tight upper bounds on the syntactic complexity $\sigma(L) = |T_L|$ of $L$ as a function of the quotient complexity $\kappa(L)$ of $L$.
First we consider the syntactic complexity of regular languages over a unary alphabet, where the  concepts prefix-, suffix-, bifix-, and factor-free, coincide. So we may consider only unary prefix-free regular languages $L$ with quotient complexity $\kappa(L) = n$. When $n = 1$, the only prefix-free language is $L = \emptyset$ with $\sigma(L) = 1$. For $n \ge 2$, a prefix-free language $L$ must be a singleton, $L = \{a^{n-2}\}$. The syntactic semigroup $T_L$ of $L$ consists of $n-1$ transformations $t_w$ by words $w = a^i$, where $1 \le i \le n-1$. Thus we have 

\begin{proposition}[Unary Free Regular Languages]
If $L$ is a unary free regular language with $\kappa(L) = n \ge 2$, then $\sigma(L) = n-1$.
\end{proposition}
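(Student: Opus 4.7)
The plan is to unpack the definitions and observe that in the unary setting everything collapses to a trivial case. First I would note that over the one-letter alphabet $\{a\}$, for any two words $a^i$ and $a^j$ with $i \le j$, the shorter word $a^i$ is simultaneously a prefix, suffix, and factor of the longer word $a^j$. Consequently, the properties of being prefix-free, suffix-free, bifix-free, and factor-free all coincide in the unary case, and each is equivalent to the statement that $L$ contains at most one word. So it suffices to treat $L$ as a unary prefix-free language.

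Next I would pin down the language exactly. Since $\kappa(L) = n \ge 2$, the language is not $\emptyset$ (which has quotient complexity $1$), so $L = \{a^k\}$ for some $k \ge 0$. The left quotients of $L$ are then $\{a^k\}, \{a^{k-1}\}, \ldots, \{a\}, \{\eps\}, \emp$, and these $k+2$ quotients are pairwise distinct. Setting $k+2 = n$ forces $k = n-2$, so $L = \{a^{n-2}\}$.

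Finally I would count the transformations in the transition semigroup of the quotient DFA of $L$. Labeling the $n$ quotients $L_{a^{j-1}}$ by $j \in \{1,\dots,n\}$, the transformation $t_{a^i}$ sends $j$ to $\min(j+i,n)$. For $i = 1, \dots, n-1$, these $n-1$ transformations are pairwise distinct (they move state $1$ to $n-1$ different images), while for every $i \ge n-1$, $t_{a^i}$ is the constant transformation $Q \choose n$. By the isomorphism between $T_{\cA}$ and the syntactic semigroup $T_L$, this gives $\sig(L) = |T_L| = n-1$.

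There is no real obstacle in this proof; the only point worth being careful about is the coincidence of all four free-ness properties in the unary case, since the rest of the paper will treat these four notions as genuinely distinct over larger alphabets.
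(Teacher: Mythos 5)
Your proof is correct and follows essentially the same route as the paper: reduce to the unary prefix-free case, identify $L = \{a^{n-2}\}$, and count the $n-1$ distinct transformations $t_{a^i}$ for $1 \le i \le n-1$. You simply supply more detail than the paper's terse argument (the quotient count forcing $k = n-2$ and the explicit formula $j \mapsto \min(j+i,n)$ showing the transformations are pairwise distinct and stabilize at the constant map), all of which checks out.
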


The tight upper bound for regular unary languages~\cite{HoKo04} is $n$. 

We assume that $|\Sig| \ge 2$ in the following sections. 
Since the syntactic semigroup of a language is the same as that of its complement, we deal only with prefix-, suffix-, bifix-, and factor-free languages. All the syntactic complexity results, however, apply also to the complements of these languages.
\goodbreak

\section{Prefix-Free Regular Languages}\label{sec:pf}
To simplify notation we write $\eps$ for the language $\{\eps\}$. Recall that a regular language $L$ is prefix-free if and only it has exactly one accepting quotient, and that quotient is $\eps$~\cite{HSW09}.

\begin{theorem}[Prefix-Free Regular Languages]
\label{thm:prefix-free}
If $L$ is regular and prefix-free with $\kappa(L)=n\ge 2$, then $\sig(L)\le n^{n-2}$. Moreover, this bound is tight 
for $n=2$  if $|\Sig|\ge 1$, for $n=3$  if $|\Sig|\ge 2$, for $n=4$ if $|\Sig|\ge 4$, and for $n\ge 5$
if $|\Sig|\ge n+1$.
\end{theorem}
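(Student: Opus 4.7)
The plan splits cleanly into a structural upper bound and an explicit construction for the matching lower bound.

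For the upper bound, note that $L$ is prefix-free iff its quotient DFA has a single accepting state whose quotient is $\eps$, and that since $n \ge 2$ (so $L \neq \emp, \Sig^*$) one quotient must equal $\emp$: any non-empty continuation of a word in $L$ leaves $L$. Labelling the quotients so that state $n$ corresponds to $\emp$ and state $n-1$ to $\eps$, every transformation $t$ in the transition semigroup satisfies $nt = n$ and $(n-1)t = n$. The number of transformations of $Q = \{1,\ldots,n\}$ meeting both constraints is exactly $n^{n-2}$, since only the images of $1,\ldots,n-2$ are unconstrained; hence $\sig(L) \le n^{n-2}$.

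For the lower bound I would exhibit, for each $n \ge 2$, a prefix-free DFA $\cA_n$ whose transition semigroup equals the full set $\{t \in \cT_Q : nt = n, (n-1)t = n\}$. For $n \ge 5$, take $|\Sig| = n+1$ with letters $a, b, c, d_1, \ldots, d_{n-2}$, all sending $n-1$ to $n$ and fixing $n$, and acting on $Q' = \{1, \ldots, n-2\}$ as the cycle $(1, 2, \ldots, n-2)$, the transposition $(1, 2)$, the singular transformation ${n-2 \choose 1}$, and $d_i = {i \choose n-1}$, respectively. By construction, $n-1$ is the unique accepting state and goes to $n$ on every letter, so $\cA_n$ is prefix-free. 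For $n = 2$ the unary DFA accepting $\{\eps\}$ achieves $\sig(L) = 1 = n^{n-2}$; for $n = 3$ the two letters $[2,3,3]$ and $[1,3,3]$ generate the three required transformations; for $n = 4$ four letters of the form $(1,2), {2 \choose 1}, d_1, d_2$ suffice, reflecting that only two generators are needed for $\cT_{Q'}$ when $|Q'| = 2$.

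The substantive step for $n \ge 5$ is showing the transition semigroup of $\cA_n$ has $n^{n-2}$ elements. By Theorem~\ref{thm:salomaa}, the restrictions of $a, b, c$ to $Q'$ generate $\cT_{Q'}$, so their extensions to $Q$ generate all $(n-2)^{n-2}$ transformations that restrict to a transformation of $Q'$. The composition $d_i \cdot a^{n-2}$, in which $a^{n-2}$ is the identity on $Q'$ (extended by $n-1, n \mapsto n$), realises the singular transformation ${i \choose n}$, so both ${i \choose n-1}$ and ${i \choose n}$ are available for every $i \in Q'$. An arbitrary target $t$ with $nt = n$ and $(n-1)t = n$ is then built by induction on the number of positions $k \in Q'$ with $t(k) \in \{n-1, n\}$: at each step, prepend an element of $\langle a, b, c \rangle$ that arranges the desired internal values on positions staying inside $Q'$ while routing one chosen position $k_0$ through a state $i$ whose subsequent $d_i$ (optionally followed by $a^{n-2}$) sends $k_0$ into $n-1$ or $n$. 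Reachability of all $n$ states from state $1$ is immediate (via $a$, $d_1$, and one further letter), and distinguishability is immediate because $n-1$ is the unique accepting state and $n$ the unique empty state.

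The main obstacle is exactly the inductive composition just outlined: each application of $d_i$ collapses every $Q'$-position currently equal to $i$, so the preceding element of $\langle a, b, c \rangle$ must be engineered so that the unique preimage of $i$ is the chosen position $k_0$ while every other position either lands on its intended final value or is repositioned for a later round. Interleaving the positions destined for $n-1$, $n$, and $Q'$ into a single product of generators requires real combinatorial bookkeeping; once that is set up, the induction goes through and witnesses $\sig(L) = n^{n-2}$.
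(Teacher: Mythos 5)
Your upper bound argument and your witness DFA are exactly the paper's: the two forced images $(n-1)t=nt=n$ give $\sig(L)\le n^{n-2}$, and the generators $a,b,c,d_1,\ldots,d_{n-2}$ (with the small-$n$ degenerations you describe) are the same construction. The small cases $n=2,3,4$ check out. The gap is in the verification that the transition semigroup of $\cA_n$ really contains all $n^{n-2}$ transformations. Your induction finalizes one position per round, sending a chosen $k_0$ "into $n-1$ or $n$" via some $d_i$. But every generator in the alphabet maps $n-1$ to $n$, so a position can sit at $n-1$ only after the very last letter of the word; any subsequent letter pushes it to $n$. Consequently, if the target $t$ has two or more positions with image $n-1$, they cannot be finalized in separate rounds: only the final $d_i$ can deposit anything at $n-1$, and it deposits precisely the positions whose current value is $i$. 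So the one-at-a-time scheme as stated fails for such targets, and this is not mere "bookkeeping" — the rounds for $n-1$-destined positions must be collapsed into a single batch, which changes the structure of the induction.

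The paper's proof resolves this with a two-stage batching that you should adopt. Since $t$ maps the $n-2$ positions of $Q'=\{1,\ldots,n-2\}$ into $Q$ and at least one image lies outside $Q'$, some value $j\in Q'$ is missing from the image. Stage one (their Case 2): if all images lie in $\{1,\ldots,n-1\}$, replace every image $n-1$ by $j$ to get a transformation $s$ realizable by $\langle a,b,c\rangle$ (Theorem~\ref{thm:salomaa}), and then $t=sd_j$ — the single final $d_j$ sends all positions currently at $j$ to $n-1$ simultaneously and fixes the rest of $Q'$. Stage two (their Case 3): if some images equal $n$, replace $n$ by $n-1$ and $n-1$ by $j$ to get an $s$ handled by stage one, and again $t=sd_j$, since $d_j$ sends $n-1\mapsto n$ and $j\mapsto n-1$. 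This uses at most two $d$-letters total and avoids your obstacle entirely. Without this (or an equivalent) repair, the lower bound for $n\ge 4$ is not established.
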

\begin{proof}
If $L$ is prefix-free, the only accepting quotient of $L$ is $\eps$. Thus $L$ also has the empty quotient, since $\eps_a = \emptyset$ for  $a \in \Sig$. 
Let $\cA = (Q, \Sig, \delta, 1, \{n-1\})$ be the quotient DFA of $L$, where, without loss of generality,  $n-1 \in Q$ is the only accepting state, and  $n \in Q$ is the empty state. For any transformation $t \in T_L$, $(n-1) t =n t = n$. Thus we have $\sig(L)\le n^{n-2}$.

The only prefix-free regular language for $n=1$ is $L=\emp$ with $\sig(L)=1$; here the bound $n^{n-2}$ does not apply. 
For $n=2$ and $\Sig=\{a\}$, the language $L=\eps$ meets the bound.
For $n=3$ and $\Sig=\{a,b\}$,  $L=b^*a$ meets the bound.
For $n\ge 4$, let $\cA_n=(\{1, 2,\ldots, n\}, \{a,b,c,d_1,d_2,\ldots,d_{n-2}\},\delta,1,\{n-1\})$, where $a={{n-1} \choose n} (1,2,\ldots,n-2)$, $b={{n-1} \choose n} (1,2)$, $c={{n-1} \choose n} {n-2\choose 1}$, and $d_i={{n-1} \choose n} {i\choose n-1}$ for $i=1,2,\ldots,n-2$.
DFA $\cA_6$ is shown in Fig.~\ref{fig:PrFree}, 
where $\Gamma = \{d_1,d_2,\ldots,d_{n-2}\}$.
For $n=4$,  input $a$ coincides with $b$; hence only $4$ inputs are needed. 

\begin{figure}[hbt]
\begin{center}
\setlength{\unitlength}{0.00056868in}
\begingroup\makeatletter\ifx\SetFigFont\undefined%
\gdef\SetFigFont#1#2#3#4#5{%
  \reset@font\fontsize{#1}{#2pt}%
  \fontfamily{#3}\fontseries{#4}\fontshape{#5}%
  \selectfont}%
\fi\endgroup%
{\renewcommand{\dashlinestretch}{30}
\begin{picture}(4890,2694)(0,-10)
\put(3207,78){\makebox(0,0)[b]{\smash{{\SetFigFont{8}{9.6}{\familydefault}{\mddefault}{\updefault}$\Sig$}}}}
\put(507,1270){\makebox(0,0)[b]{\smash{{\SetFigFont{8}{9.6}{\familydefault}{\mddefault}{\updefault}1}}}}
\put(1866,1338){\ellipse{382}{382}}
\put(1880,1270){\makebox(0,0)[b]{\smash{{\SetFigFont{8}{9.6}{\familydefault}{\mddefault}{\updefault}2}}}}
\put(3296,1343){\ellipse{382}{382}}
\put(3297,1270){\makebox(0,0)[b]{\smash{{\SetFigFont{8}{9.6}{\familydefault}{\mddefault}{\updefault}3}}}}
\put(4691,1343){\ellipse{382}{382}}
\put(4692,1270){\makebox(0,0)[b]{\smash{{\SetFigFont{8}{9.6}{\familydefault}{\mddefault}{\updefault}4}}}}
\put(2564,258){\ellipse{382}{382}}
\put(2564,258){\ellipse{324}{324}}
\put(2564,190){\makebox(0,0)[b]{\smash{{\SetFigFont{8}{9.6}{\familydefault}{\mddefault}{\updefault}5}}}}
\put(3957,259){\ellipse{382}{382}}
\put(3959,190){\makebox(0,0)[b]{\smash{{\SetFigFont{8}{9.6}{\familydefault}{\mddefault}{\updefault}6}}}}
\put(4285.677,317.559){\arc{274.198}{3.3655}{8.5574}}
\blacken\path(4320.693,213.226)(4197.000,213.000)(4306.247,154.991)(4320.693,213.226)
\put(4512,393){\makebox(0,0)[b]{\smash{{\SetFigFont{8}{9.6}{\familydefault}{\mddefault}{\updefault}$\Sig$}}}}
\put(1182.000,66.750){\arc{2812.500}{4.3527}{5.0721}}
\blacken\path(1553.620,1391.800)(1677.000,1383.000)(1572.277,1448.826)(1553.620,1391.800)
\put(1182.000,2609.250){\arc{2812.500}{1.2111}{1.9305}}
\blacken\path(810.380,1284.200)(687.000,1293.000)(791.723,1227.174)(810.380,1284.200)
\put(2622.000,-31.286){\arc{4901.878}{3.8024}{5.6224}}
\blacken\path(739.587,1584.958)(687.000,1473.000)(786.087,1547.039)(739.587,1584.958)
\path(3499,1338)(4480,1338)
\blacken\path(4360.000,1308.000)(4480.000,1338.000)(4360.000,1368.000)(4360.000,1308.000)
\path(2082,1338)(3089,1338)
\blacken\path(2969.000,1308.000)(3089.000,1338.000)(2969.000,1368.000)(2969.000,1308.000)
\path(642,1203)(2397,348)
\blacken\path(2275.982,373.587)(2397.000,348.000)(2302.260,427.526)(2275.982,373.587)
\path(1947,1158)(2442,438)
\blacken\path(2349.295,519.889)(2442.000,438.000)(2398.738,553.881)(2349.295,519.889)
\path(3207,1158)(2712,438)
\blacken\path(2755.262,553.881)(2712.000,438.000)(2804.705,519.889)(2755.262,553.881)
\path(4557,1203)(2757,348)
\blacken\path(2852.522,426.585)(2757.000,348.000)(2878.265,372.388)(2852.522,426.585)
\path(12,1338)(313,1338)
\blacken\path(193.000,1308.000)(313.000,1338.000)(193.000,1368.000)(193.000,1308.000)
\path(2757,258)(3747,258)
\blacken\path(3627.000,228.000)(3747.000,258.000)(3627.000,288.000)(3627.000,228.000)
\path(417,1518)(416,1521)(413,1527)
	(409,1537)(403,1551)(396,1568)
	(389,1587)(382,1607)(376,1628)
	(371,1650)(367,1673)(366,1696)
	(367,1720)(372,1743)(379,1761)
	(387,1777)(395,1789)(402,1798)
	(409,1804)(414,1809)(420,1812)
	(425,1814)(430,1816)(436,1818)
	(443,1820)(451,1823)(462,1826)
	(475,1829)(490,1832)(507,1833)
	(524,1832)(539,1829)(552,1826)
	(563,1823)(571,1820)(578,1818)
	(584,1816)(590,1814)(594,1812)
	(600,1809)(605,1804)(612,1798)
	(619,1789)(627,1777)(635,1761)
	(642,1743)(647,1720)(648,1696)
	(647,1673)(643,1650)(638,1628)
	(632,1607)(625,1587)(618,1568)
	(611,1551)(597,1518)
\blacken\path(616.249,1640.186)(597.000,1518.000)(671.483,1616.753)(616.249,1640.186)
\path(3207,1518)(3206,1521)(3203,1527)
	(3199,1537)(3193,1551)(3186,1568)
	(3179,1587)(3172,1607)(3166,1628)
	(3161,1650)(3157,1673)(3156,1696)
	(3157,1720)(3162,1743)(3169,1761)
	(3177,1777)(3185,1789)(3192,1798)
	(3199,1804)(3204,1809)(3210,1812)
	(3215,1814)(3220,1816)(3226,1818)
	(3233,1820)(3241,1823)(3252,1826)
	(3265,1829)(3280,1832)(3297,1833)
	(3314,1832)(3329,1829)(3342,1826)
	(3353,1823)(3361,1820)(3368,1818)
	(3374,1816)(3380,1814)(3384,1812)
	(3390,1809)(3395,1804)(3402,1798)
	(3409,1789)(3417,1777)(3425,1761)
	(3432,1743)(3437,1720)(3438,1696)
	(3437,1673)(3433,1650)(3428,1628)
	(3422,1607)(3415,1587)(3408,1568)
	(3401,1551)(3387,1518)
\blacken\path(3406.249,1640.186)(3387.000,1518.000)(3461.483,1616.753)(3406.249,1640.186)
\path(1767,1518)(1766,1521)(1763,1527)
	(1759,1537)(1753,1551)(1746,1568)
	(1739,1587)(1732,1607)(1726,1628)
	(1721,1650)(1717,1673)(1716,1696)
	(1717,1720)(1722,1743)(1729,1761)
	(1737,1777)(1745,1789)(1752,1798)
	(1759,1804)(1764,1809)(1770,1812)
	(1775,1814)(1780,1816)(1786,1818)
	(1793,1820)(1801,1823)(1812,1826)
	(1825,1829)(1840,1832)(1857,1833)
	(1874,1832)(1889,1829)(1902,1826)
	(1913,1823)(1921,1820)(1928,1818)
	(1934,1816)(1940,1814)(1944,1812)
	(1950,1809)(1955,1804)(1962,1798)
	(1969,1789)(1977,1777)(1985,1761)
	(1992,1743)(1997,1720)(1998,1696)
	(1997,1673)(1993,1650)(1988,1628)
	(1982,1607)(1975,1587)(1968,1568)
	(1961,1551)(1947,1518)
\blacken\path(1966.249,1640.186)(1947.000,1518.000)(2021.483,1616.753)(1966.249,1640.186)
\path(4602,1518)(4601,1521)(4598,1527)
	(4594,1537)(4588,1551)(4581,1568)
	(4574,1587)(4567,1607)(4561,1628)
	(4556,1650)(4552,1673)(4551,1696)
	(4552,1720)(4557,1743)(4564,1761)
	(4572,1777)(4580,1789)(4587,1798)
	(4594,1804)(4599,1809)(4605,1812)
	(4610,1814)(4615,1816)(4621,1818)
	(4628,1820)(4636,1823)(4647,1826)
	(4660,1829)(4675,1832)(4692,1833)
	(4709,1832)(4724,1829)(4737,1826)
	(4748,1823)(4756,1820)(4763,1818)
	(4769,1816)(4775,1814)(4779,1812)
	(4785,1809)(4790,1804)(4797,1798)
	(4804,1789)(4812,1777)(4820,1761)
	(4827,1743)(4832,1720)(4833,1696)
	(4832,1673)(4828,1650)(4823,1628)
	(4817,1607)(4810,1587)(4803,1568)
	(4796,1551)(4782,1518)
\blacken\path(4801.249,1640.186)(4782.000,1518.000)(4856.483,1616.753)(4801.249,1640.186)
\put(1857,1968){\makebox(0,0)[b]{\smash{{\SetFigFont{8}{9.6}{\familydefault}{\mddefault}{\updefault}$c,\Gamma\setminus d_2$}}}}
\put(2622,2508){\makebox(0,0)[b]{\smash{{\SetFigFont{8}{9.6}{\familydefault}{\mddefault}{\updefault}$a,c$}}}}
\put(1182,1518){\makebox(0,0)[b]{\smash{{\SetFigFont{8}{9.6}{\familydefault}{\mddefault}{\updefault}$a,b$}}}}
\put(1182,1023){\makebox(0,0)[b]{\smash{{\SetFigFont{8}{9.6}{\familydefault}{\mddefault}{\updefault}$b$}}}}
\put(2532,1428){\makebox(0,0)[b]{\smash{{\SetFigFont{8}{9.6}{\familydefault}{\mddefault}{\updefault}$a$}}}}
\put(3927,1428){\makebox(0,0)[b]{\smash{{\SetFigFont{8}{9.6}{\familydefault}{\mddefault}{\updefault}$a$}}}}
\put(3207,888){\makebox(0,0)[b]{\smash{{\SetFigFont{8}{9.6}{\familydefault}{\mddefault}{\updefault}$d_3$}}}}
\put(1317,663){\makebox(0,0)[b]{\smash{{\SetFigFont{8}{9.6}{\familydefault}{\mddefault}{\updefault}$d_1$}}}}
\put(3837,663){\makebox(0,0)[b]{\smash{{\SetFigFont{8}{9.6}{\familydefault}{\mddefault}{\updefault}$d_4$}}}}
\put(3297,1968){\makebox(0,0)[b]{\smash{{\SetFigFont{8}{9.6}{\familydefault}{\mddefault}{\updefault}$b,c,\Gamma \setminus d_3$}}}}
\put(507,1968){\makebox(0,0)[b]{\smash{{\SetFigFont{8}{9.6}{\familydefault}{\mddefault}{\updefault}$c,\Gamma\setminus d_1$}}}}
\put(4737,1968){\makebox(0,0)[b]{\smash{{\SetFigFont{8}{9.6}{\familydefault}{\mddefault}{\updefault}$b,\Gamma \setminus d_4$}}}}
\put(2307,888){\makebox(0,0)[b]{\smash{{\SetFigFont{8}{9.6}{\familydefault}{\mddefault}{\updefault}$d_2$}}}}
\put(507,1338){\ellipse{382}{382}}
\end{picture}
}
\end{center}
\caption{Quotient DFA  $\cA_6$ of prefix-free regular language with 1,296 transformations.}
\label{fig:PrFree}
\end{figure}

Any transformation $t \in T_L$ has the form 
$$t=\left( \begin{array}{cccccc}
1 & 2 & \cdots & n-2 & n-1 & n \\
i_1 & i_2 & \cdots & i_{n-2} & n & n
\end{array} \right ),
$$
where $i_k\in\{1,2,\ldots,n\}$ for $1\le k\le n-2$.
There are three cases: 
\be
\item If $i_k\le n-2$ for all $k$, $1\le k\le n-2$, then by Theorem~\ref{thm:salomaa}, $\cA_n$ can do $t$.\\
\item If $i_k\le n-1$ for all $k$, $1\le k\le n-2$, and there exists some $h$ such that $i_h= n-1$, then there exists some $j$, $1\le j\le n-2$ such that $i_k\ne j$ for all $k$, $1\le k\le n-2$.
For all $1\le k\le n-2$, define $i'_k$ as follows: $i'_k = j$ if $i_k=n-1$, and $i'_k = i_k$ if $i_k\ne n-1$. 
Let 
$$
s=\left( \begin{array}{cccccc}
1 & 2 & \cdots & n-2 & n-1 & n \\
i'_1 & i'_2 & \cdots & i'_{n-2} & n & n
\end{array} \right ).$$
By Case 1 above, $\cA_n$  can do $s$.
Since $t=sd_j$, $\cA_n$  can do $t$ as well.\\
\item Otherwise, there exists some $h$ such that $i_h= n$. Then there exists some $j$, $1\le j\le n-2$, such that $i_k\ne j$ for all $k$, $1\le k\le n-2$.
For all $1\le k\le n-2$, define  $i'_k$ as follows: $i'_k = n-1$ if $i_k=n$, $i'_k = j$ if $i_k=n-1$, and $i'_k = i_k$ otherwise.
Let $s$ be as above but with new $i'_k$.
By Case 2 above, $\cA_n$  can do $s$.
Since $t=sd_j$, $\cA_n$  can do $t$ as well.
\ee

Therefore, the syntactic complexity of $\cA_n$ meets the desired bound. \qed
\end{proof}

We conjecture that the alphabet sizes cannot be reduced. As shown in Table~\ref{tab:Summary1}, 
on p.~\pageref{table1}, 
we have verified this conjecture for $n \le 5$ by enumerating all prefix-free regular languages with $n\le 5$ using \emph{GAP}~\cite{GAP}.
\medskip

\section{Suffix-Free Regular Languages}\label{sec:sf}

For any regular language $L$, a quotient $L_w$ is \emph{uniquely reachable}~\cite{Brz09} if $L_w=L_x$ implies that $w=x$. 
It is known from~\cite{HS09} that, if $L$ is a suffix-free regular language, then $L=L_\eps$  is uniquely reachable by $\eps$, and $L$ has the empty quotient. 
Without loss of generality,  we assume that $1$ is the initial state, and $n$ is the empty state. 
We will show that the cardinality of $\Bsf(n)$, defined below, is an upper bound ($\mathbf{B}$ for ``bound'') on the syntactic complexity of suffix-free regular languages with quotient complexity $n$. Let
$$\Bsf(n) = \{ t \in \cT_{Q} \mid 1 \not\in \timg(t), \; nt = n, \txt{and for all} j\ge 1,
 \hspace{2.5cm}$$ 
$$1t^j = n \txt{or} 1t^j \neq it^j ~~\forall i, 1 < i < n\}.$$

\begin{proposition}
\label{prop:sf}
If $L$ is a regular language with quotient DFA $\cA_n = (Q, \Sig, \delta, 1, F)$ and syntactic semigroup $T_L$, then the following hold:
\be
\item If $L$ is suffix-free, then $T_L$ is a subset of $\Bsf(n)$.
\item If $L$ has the empty quotient, only one accepting quotient, and $T_L \subseteq \Bsf(n)$, then $L$ is suffix-free.
\ee
\end{proposition}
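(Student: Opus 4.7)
My plan is to translate each of the three defining clauses of $\Bsf(n)$ into a statement about $L$, relying on the known fact (recalled at the start of the section) that in a suffix-free language state $1 = L_\eps$ is uniquely reachable and state $n$ is a sink whose language is empty. For part (1) I would take an arbitrary non-empty $w$ with $t = t_w \in T_L$ and verify each clause in turn. The first clause $1 \notin \timg(t)$ follows from unique reachability of state $1$: any predecessor of state $1$ under $t_w$ would correspond to a quotient $L_u$ with $L_{uw} = L_\eps$, forcing $uw = \eps$ and so $w = \eps$. The second clause $nt = n$ is immediate. The iteration clause is the substantive one: assume for contradiction that for some $j \ge 1$ and some $1 < i < n$ the states $1 t^j$ and $i t^j$ coincide at a common non-empty state $s$; pick any $v \in L_s$ and any word $u$ that reaches state $i$, which must be non-empty since $i \neq 1$. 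Then $w^j v$ and $u w^j v$ both lie in $L$, and $w^j v$ is a proper suffix of $u w^j v$, contradicting suffix-freeness.

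For part (2) I would argue the contrapositive: assume $L$ has an empty quotient, a unique accepting quotient $f$, and $T_L \subseteq \Bsf(n)$, and assume $L$ fails to be suffix-free, so that $x = uy \in L$ with $y$ a proper suffix of $x$ and $u \ne \eps$. If $y = \eps$, then $\eps$ and $x$ both lie in $L$, so $L_x = L_\eps$ by uniqueness of the accepting quotient; this gives $1 t_x = 1$, contradicting the first clause of $\Bsf(n)$ applied to $t_x$. Otherwise $y$ is non-empty, so $t_y \in T_L$; setting $i = 1 t_u$, both $1 t_y$ and $i t_y$ equal $f$, while the $\Bsf(n)$ clauses applied to $t_u$ and $t_y$ force $i \ne 1$ (since $1 \notin \timg(t_u)$) and $i \ne n$ (since otherwise $i t_y = n t_y = n \ne f$), and $1 t_y = f \ne n$. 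Together these contradict the iteration clause of $\Bsf(n)$ at $t = t_y$, $j = 1$.

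The main obstacle is bookkeeping rather than any deep difficulty: I must consistently track which quotients are non-empty, ensure that representatives $u$ chosen for a state $i \neq 1$ are non-empty so that the resulting suffix inclusion becomes proper, and handle the degenerate $y = \eps$ sub-case of part (2) separately because $\eps$ contributes no element of $T_L$. Once those points are nailed down, the iteration clause of $\Bsf(n)$ is already tuned to serve as a direct certificate of a failure of suffix-freeness via the pair $(w^j v,\, u w^j v)$, so the two halves of the argument mirror each other.
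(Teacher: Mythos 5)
Your proof is correct and follows essentially the same route as the paper's: unique reachability of state $1$ gives the first clause of $\Bsf(n)$, the empty state gives the second, a witness pair $(w^jv,\,uw^jv)$ converts a coincidence $1t^j=it^j\neq n$ into a violation of suffix-freeness for the third, and part (2) is run as a contrapositive through the unique accepting state exactly as in the paper. The only difference is that you explicitly handle the $y=\eps$ sub-case of part (2) via the clause $1\notin\timg(t_x)$, whereas the paper silently assumes both words non-empty; this is a minor gain in rigor, not a different approach.
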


\begin{proof}
1. Let $L$ be suffix-free, and let $\cA_n$ be its quotient DFA. 
Consider an arbitrary $t \in T_L$. Since the quotient $L$ is uniquely reachable, $it \neq 1$ for all $i \in Q$. Since the quotient corresponding to state $n$ is empty, $nt = n$. 
Since $L$ is suffix-free, for any two quotients $L_w$ and $L_{uw}$, where $u,v,w \in \Sig^+$, $w = v^j$ for some $j \ge 1$, and $L_w \neq \emptyset$, we must have $L_w \cap L_{uw} = \emptyset$, and so $L_w \neq L_{uw}$. 
This means that, for any $t \in T_L$ and $j \ge 1$, if $1t^j \neq n$, then $1t^j \neq it^j$ for all $i$, $1 < i < n$. So $t \in \Bsf(n)$, and $T_L \subseteq \Bsf(n)$.

2. Assume that $T_L \subseteq \Bsf(n)$, and let $f$ be the only accepting state. If $L$ is not suffix-free, then there exist non-empty words $u$ and $v$ such that $v, uv \in L$. Let $t_u$ and $t_v$ be the transformations by $u$ and $v$, and let $i = 1t_u$; then $i \neq 1$. 
Assume without loss the generality that $n$ is the empty state. Then $f\neq n$, and we have $1t_v = f = 1t_{uv} = 1t_ut_v = it_v$, which contradicts the fact that $t_v \in \Bsf(n)$. Therefore $L$ is suffix-free. \qed
\end{proof}

Let $\bsf(n) = |\Bsf(n)|$. We now prove that $\bsf(n)$ is an upper bound  on the syntactic complexity of suffix-free regular languages. 

With each transformation $t$ of $Q$, we associate a directed graph $G_t$, where $Q$ is the set of nodes, and $(i,j) \in Q \times Q$ is a directed edge from $i$ to $j$ if $it = j$. We call such a graph $G_t$ the {\em transition graph} of $t$. For each node $i$, there is exactly one edge leaving $i$ in $G_t$. Consider the infinite sequence $i,it,it^2,\ldots$ for any $i \in Q$. Since $Q$ is finite, there exists least $j \ge 0$ such that $it^{j+1} = it^{j'}$ for some $j' \le j$. Then the finite sequence $\seq_t(i) = i,it,\ldots,it^j$ contains all the distinct elements of the above infinite sequence, and it induces a directed path $\tpath_t(i)$ from $i$ to $it^j$ in $G_t$. In particular, if $n \in \seq_t(1)$, and $nt = n$, then we call $\seq_t(1)$ the {\em principal sequence} of $t$, and $\tpath_t(1)$, the {\em principal path} of $G_t$. 
\begin{proposition}\label{prop:ppsf} 
There exists a principal sequence for every transformation~$t$ in~$\Bsf(n)$. 
\end{proposition}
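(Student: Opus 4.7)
The plan is to argue by contradiction. Suppose $t \in \Bsf(n)$ but $\seq_t(1)$ is not a principal sequence. Since the definition of $\Bsf(n)$ already guarantees $nt = n$, the only way the principal sequence can fail to exist is if $n \notin \seq_t(1)$, i.e.\ the orbit of $1$ under iteration of $t$ never visits the empty state. I will derive a contradiction with the third defining condition of $\Bsf(n)$.

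First, I would observe that under this assumption the infinite sequence $1,\, 1t,\, 1t^2, \ldots$ is confined to $\{1,2,\ldots,n-1\}$. Moreover, since $1 \notin \timg(t)$, for every $j \geq 1$ the value $1t^j$ lies in $\{2,\ldots,n-1\}$. Because this set is finite, by pigeonhole there exist indices $j > j' \geq 1$ with $1t^j = 1t^{j'}$; the bound $j' \geq 1$ is enforced by the fact that $1t^j = 1$ is impossible (as $1 \notin \timg(t)$).

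Next, I would set $i = 1t^{j-j'}$. Since $j - j' \geq 1$, we have $i \in \timg(t)$, so $i \neq 1$; and since the orbit stays in $\{2,\ldots,n-1\}$, we also have $i \neq n$, giving $1 < i < n$. Applying $t^{j'}$ to $i = 1t^{j-j'}$ yields $it^{j'} = 1t^{j} = 1t^{j'}$, while $1t^{j'} \in \{2,\ldots,n-1\}$ means in particular $1t^{j'} \neq n$. This directly contradicts the condition in the definition of $\Bsf(n)$ that whenever $1t^{j'} \neq n$ one must have $1t^{j'} \neq it^{j'}$ for every $i$ with $1 < i < n$. Hence $n \in \seq_t(1)$, and combined with $nt = n$ this makes $\seq_t(1)$ a principal sequence.

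The argument is almost entirely bookkeeping; the only delicate point is ensuring $j' \geq 1$ in the pigeonhole step so that the element $i = 1t^{j-j'}$ is genuinely in the image of $t$ and can serve as the ``witness'' $i$ demanded by the third clause of the definition. I do not anticipate any substantive obstacle beyond this careful choice of indices.
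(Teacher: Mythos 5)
Your argument is correct and is essentially the paper's own proof: both derive a contradiction with the third defining clause of $\Bsf(n)$ by locating a repetition $1t^j=1t^{j'}$ with $j>j'\ge 1$ in the orbit of $1$, setting $i=1t^{j-j'}$, and checking $1<i<n$ via $1\notin\timg(t)$ and $n\notin\seq_t(1)$. The only difference is cosmetic (the paper takes $j$ to be the specific index where $\seq_t(1)$ first repeats, while you invoke the pigeonhole principle directly), so no further comment is needed.
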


\begin{proof} 
Suppose $t \in \Bsf(n)$ and $\seq_t(1) = 1,1t,\ldots,1t^j$. 
If $t$ does not have a principal sequence, 
then $n \not\in \seq_t(1)$, and $1t^{j+1} = 1t^{j'} \neq n$ for some $j' \le j$.
Let $i=1t^{j+1-j'}$; then $i \neq 1$ and $1t^{j'} =it^{j'}$, violating the last property of $\Bsf(n)$.
Therefore there is a principal sequence for every $t \in \Bsf(n)$. \qed
\end{proof}

Fix a transformation $t \in \Bsf(n)$. Let $i \in Q$ be such that $i \not\in \seq_t(1)$. If the sequence $\seq_t(i)$ does not contain any element of the principal sequence $\seq_t(1)$ other than $n$, then we say that $\seq_t(i)$ has {\em no principal connection}. Otherwise, there exists least $j \ge 1$ such that $1t^j \neq n$ and $1t^j = it^{j'} \in \seq_t(i)$ for some $j' \ge 1$, and we say that $\seq_t(i)$ has a {\em principal connection} at $1t^j$. If $j' < j$, the  principal connection is {\em short}; otherwise, it is {\em long}. 

\begin{lemma}\label{lem:shortpath} 
For all $t \in \Bsf(n)$ and $i \not\in \seq_t(1)$, the sequence $\seq_t(i)$ has no long principal connection.
\end{lemma}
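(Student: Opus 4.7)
The plan is to argue by contradiction: suppose some $t \in \Bsf(n)$ and some $i \not\in \seq_t(1)$ give a long principal connection at $1t^j$, meaning $j' \ge j$, $1t^j \ne n$, and $1t^j = it^{j'}$. From this data I will manufacture an element $k$ with $1 < k < n$ satisfying $kt^j = 1t^j$ and $1t^j \ne n$, which directly violates the defining clause of $\Bsf(n)$.

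The natural choice is $k = it^{j'-j}$, since then
\[
kt^j \;=\; it^{(j'-j)+j} \;=\; it^{j'} \;=\; 1t^j.
\]
So the real content is verifying $1 < k < n$. For $k \ne 1$, I would split on whether $j' - j = 0$ or $j' - j \ge 1$: in the first case $k = i$, and $i \ne 1$ because $1 \in \seq_t(1)$ while $i \not\in \seq_t(1)$; in the second case $k \in \timg(t)$, which excludes $k = 1$ by the property $1 \notin \timg(t)$ built into $\Bsf(n)$. For $k \ne n$, I would use the fact that $nt = n$, so if $k = n$ we would get $1t^j = kt^j = n$, contradicting the principal connection hypothesis $1t^j \ne n$. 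Also $i \ne n$ is needed at one point and follows because $n \in \seq_t(1)$ (the principal sequence exists by Proposition~\ref{prop:ppsf} and terminates at $n$) whereas $i \not\in \seq_t(1)$.

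Once $1 < k < n$ is established, the equality $1t^j = kt^j$ with $1t^j \ne n$ contradicts the last condition defining $\Bsf(n)$, completing the proof. The only subtlety — and it is a small one — is the case distinction $j' = j$ versus $j' > j$ to rule out $k = 1$; everything else is immediate from the definitions. No harder obstacle is anticipated.
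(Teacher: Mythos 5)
Your proof is correct and follows essentially the same route as the paper: both take $k = it^{j'-j}$, observe $kt^j = 1t^j \ne n$, and derive a contradiction with the last defining condition of $\Bsf(n)$. Your version is in fact slightly more careful than the paper's (which tacitly assumes $j' > j$ and asserts $it^{j'-j}\neq n$ without comment), since you explicitly handle the $j'=j$ case and justify $1<k<n$.
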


\begin{proof} 
Let $t$ be any transformation in $\Bsf(n)$. Suppose for some $i \not\in \seq_t(1)$, the sequence $\seq_t(i)$ has a long principal connection at $1t^j = it^{j'} \neq n$, where $j < j'$. Hence $it^{j'-j} \neq n$, and $1t^j = (it^{j'-j})t^j$, which is a contradiction. Therefore, for all $i \not\in \seq_t(1)$, $\seq_t(i)$ has no long principal connection. \qed
\end{proof}

To calculate the cardinality of $\Bsf(n)$, we need the following observation. 
\begin{lemma}\label{lem:ptree} 

For all $t \in \Bsf(n)$ and $i \not\in \seq_t(1)$, if $\seq_t(i)$ has a principal connection, then there is no cycle incident to the path $\tpath_t(i)$ in the transition graph $G_t$. 

\end{lemma}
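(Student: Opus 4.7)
The plan is to argue by contradiction. Suppose there exists a cycle $C$ in $G_t$, other than the terminal self-loop $\{n\}$, that shares a vertex with $\tpath_t(i)$. I will derive a contradiction by showing that any such $C$ is forced to contain $n$, and hence must equal $\{n\}$.

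First I would establish the global shape of $\tpath_t(i)$. By hypothesis $\seq_t(i)$ has a principal connection, so $it^{j'} = 1t^j$ for some $j,j' \ge 1$ with $1t^j \neq n$. Since $t \in \Bsf(n)$, Proposition~\ref{prop:ppsf} gives a principal sequence, which together with $nt = n$ means the principal path $\tpath_t(1)$ terminates at $n$. Because each vertex has a unique outgoing edge in $G_t$, once $\seq_t(i)$ meets the principal path at $1t^j$ it must follow the principal path all the way to $n$. Consequently $\tpath_t(i)$ is a simple directed path from $i$ to $n$, and every vertex $v$ lying on it satisfies $vt^k = n$ for some $k \ge 0$.

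Now pick any $v \in C \cap \tpath_t(i)$. Since $v \in C$, there is $p \ge 1$ with $vt^p = v$, so the orbit $v, vt, vt^2,\ldots$ is periodic and coincides with $C$. But $v \in \tpath_t(i)$ forces $vt^k = n$ for some $k$, so $n$ lies in this periodic orbit and thus $n \in C$. Since $nt = n$, the only cycle through $n$ in $G_t$ is $\{n\}$, which contradicts $C \neq \{n\}$. This completes the proof.

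The only nontrivial step is the first one: checking that $\seq_t(i)$ cannot veer off the principal path after the principal connection. That is handled by uniqueness of $t$-successors together with the fact that the principal path really does end at $n$, which is where Proposition~\ref{prop:ppsf} is needed. Once that is in place, the functional-graph calculation in the second paragraph is essentially mechanical. I do not anticipate a real obstacle; Lemma~\ref{lem:shortpath} on short connections is consistent with this picture but is not strictly required for the argument above.
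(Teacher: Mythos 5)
Your proof is correct and follows essentially the same route as the paper's: both arguments rest on the facts that every node of $G_t$ has a unique outgoing edge and that, because of the principal connection, the forward orbit of each vertex of $\tpath_t(i)$ is absorbed at $n$, so no nontrivial cycle can close up through the path. If anything, your version is slightly more careful, since it explicitly sets aside the length-one cycle $\{n\}$ at the end of the path, which the paper's statement and proof leave implicit.
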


\begin{proof} 
This observation can be derived from Theorem 1.2.9 of~\cite{GaMa09}. However, our proof is shorter. 
Pick any $i \not\in \seq_t(1)$ such that $\seq_t(i)$ has a principal connection at $1t^j = it^{j'}$ for some $i,j$ and $j'$. Then the sequence $\seq_t(i)$ contains $n$, and the path $\tpath_t(i)$ does not contain any cycle. Suppose $C$ is a cycle which includes node 
$x=it^k \in \tpath_t(i)$. 
Since there is only one outgoing edge for each node in $G_t$, the cycle $C$ must be oriented and must contain a node $x'\not\in \tpath_t(i)$ such that $(x',x)$ is an edge in $C$.
Then the next node in the cycle must be $it^{k+1}$ since there is only one outgoing edge from $x$. But then  $x'$ can never be reached from $\tpath_t(i)$, and so no such cycle can exist. \qed
\end{proof}

By Lemma~\ref{lem:ptree}, for any $1t^j \in \seq_t(1)$, where $j \ge 1$, the union of directed paths from various nodes $i$ to $1t^j$, if $i \not\in \seq_t(1)$ and $\seq_t(i)$ has a principal connection at $1t^j$, forms a labeled tree $T_t(j)$ rooted at $it^j$. Suppose there are $r_j+1$ nodes in $T_t(j)$ for each $j$, and suppose there are $r$ elements of $Q$ that are not in the principal sequence $\seq_t(1)$ nor in any tree $T_t(j)$, for some $r_j,r \ge 0$. Note that, $it^j$ is the only node in $T_t(j)$ that is also in the principal sequence $\seq_t(1)$. Each tree $T_t(j)$ has height at most $j-1$; otherwise, some $i \in T_t(j)$ has a long principal connection. In particular, tree $T_t(1)$ has height 1; so it is trivial with only one node $1t$. Then $r_1 = 0$, and we need only consider trees $T_t(j)$ for $j \ge 2$. Let $S_m(h)$ be the number of labeled rooted trees with $m$ nodes and height at most $h$. This number can be found in the paper of Riordan~\cite{Rio60}; the calculation is somewhat complex, and we refer the reader to~\cite{Rio60} for details. For convenience, we include the values of $S_m(h)$ for small values of $m$ and $h$ in Table~\ref{tab:Smh}, where the row number is $h$ and the column number is $m$.

\begin{table}[ht]
\caption{The number $S_m(h)$ of labeled rooted trees with $m$ nodes and height at most~$h$.}
\label{tab:Smh}
\begin{center}
$
\begin{array}{|c||c|c|c|c|c|c|c|}    
\hline
 h/m & 1 & 2 & 3 & 4 & 5 & 6 & 7 \\
\hline \hline

 0   & 1 & 0 & 0 & 0 & 0 & 0 & 0 \\
\hline 

 1   & 1 & 2 & 3 & 4 & 5 & 6 & 7 \\
\hline 

 2   & 1 & 2 & 9 & 40 & 205 & 1176 & 7399 \\
\hline 

 3   & 1 & 2 & 9 & 64 & 505 & 4536 & 46249\\
\hline 

 4   & 1 & 2 & 9 & 64 & 625 & 7056 & 89929 \\
\hline 

 5   & 1 & 2 & 9 & 64 & 625 & 7776 & 112609 \\
\hline 

 6   & 1 & 2 & 9 & 64 & 625 & 7776 & 117649 \\

\hline
\end{array}
$
\end{center}
\label{table0}
\end{table}

Since each of the $m$ nodes can be the root, there are $S'_m(h) = \frac{S_m(h)}{m}$ labeled trees rooted at a fixed node and having $m$ nodes and height at most $h$. The following is an example of trees $T_t(j)$ in transformations $t \in \Bsf(n)$. 

\begin{example}\label{ex:ptree} 
Let $n = 15$. Consider any transformation $t \in \Bsf(15)$ with principal sequence $\seq_t(1) = 1,2,3,4,5,15$. There are $9$ elements of $Q$ that are not in $\seq_t(1)$, and some of them are in the trees $T_t(j)$ for $2 \le j \le 4$. Consider the cases where $r_2 = 2$, $r_3 = 3$, $r_4 = 1$, and $r = 3$. Fig.~\ref{fig:ptree} shows one such transformation $t$. 

\begin{figure}[hbt]
\begin{center}
\setlength{\unitlength}{0.00052493in}
\begingroup\makeatletter\ifx\SetFigFont\undefined%
\gdef\SetFigFont#1#2#3#4#5{%
  \reset@font\fontsize{#1}{#2pt}%
  \fontfamily{#3}\fontseries{#4}\fontshape{#5}%
  \selectfont}%
\fi\endgroup%
{\renewcommand{\dashlinestretch}{30}
\begin{picture}(6231,2507)(0,-10)
\put(462,2033){\ellipse{450}{450}}
\put(5856.375,1583.000){\arc{731.250}{5.2449}{7.3215}}
\blacken\path(6114.282,1368.376)(6042.000,1268.000)(6153.014,1322.552)(6114.282,1368.376)
\put(4962.000,2352.500){\arc{261.000}{2.3318}{7.0930}}
\blacken\path(5061.910,2381.296)(5052.000,2258.000)(5118.765,2362.127)(5061.910,2381.296)
\put(2262,2033){\ellipse{450}{450}}
\put(1362,2033){\ellipse{450}{450}}
\put(4062,2033){\ellipse{450}{450}}
\put(3162,2033){\ellipse{450}{450}}
\put(2262,1133){\ellipse{450}{450}}
\put(2712,233){\ellipse{450}{450}}
\put(3612,233){\ellipse{450}{450}}
\put(4062,1133){\ellipse{450}{450}}
\put(4962,2033){\ellipse{450}{450}}
\put(5862,233){\ellipse{450}{450}}
\put(1362,1133){\ellipse{450}{450}}
\put(3162,1133){\ellipse{450}{450}}
\put(5862,2033){\ellipse{450}{450}}
\put(5862,1133){\ellipse{450}{450}}
\path(687,2033)(1137,2033)
\blacken\path(1017.000,2003.000)(1137.000,2033.000)(1017.000,2063.000)(1017.000,2003.000)
\path(1587,2033)(2037,2033)
\blacken\path(1917.000,2003.000)(2037.000,2033.000)(1917.000,2063.000)(1917.000,2003.000)
\path(12,2033)(237,2033)
\blacken\path(117.000,2003.000)(237.000,2033.000)(117.000,2063.000)(117.000,2003.000)
\path(2487,2033)(2937,2033)
\blacken\path(2817.000,2003.000)(2937.000,2033.000)(2817.000,2063.000)(2817.000,2003.000)
\path(3387,2033)(3837,2033)
\blacken\path(3717.000,2003.000)(3837.000,2033.000)(3717.000,2063.000)(3717.000,2003.000)
\path(2262,1358)(2262,1808)
\blacken\path(2292.000,1688.000)(2262.000,1808.000)(2232.000,1688.000)(2292.000,1688.000)
\path(1497,1313)(2082,1853)
\blacken\path(2014.172,1749.562)(2082.000,1853.000)(1973.475,1793.650)(2014.172,1749.562)
\path(3162,1358)(3162,1808)
\blacken\path(3192.000,1688.000)(3162.000,1808.000)(3132.000,1688.000)(3192.000,1688.000)
\path(2847,413)(3117,908)
\blacken\path(3085.875,788.287)(3117.000,908.000)(3033.201,817.018)(3085.875,788.287)
\path(3477,413)(3207,908)
\blacken\path(3290.799,817.018)(3207.000,908.000)(3238.125,788.287)(3290.799,817.018)
\path(4062,1358)(4062,1808)
\blacken\path(4092.000,1688.000)(4062.000,1808.000)(4032.000,1688.000)(4092.000,1688.000)
\path(5862,458)(5862,908)
\blacken\path(5892.000,788.000)(5862.000,908.000)(5832.000,788.000)(5892.000,788.000)
\path(5862,1358)(5862,1808)
\blacken\path(5892.000,1688.000)(5862.000,1808.000)(5832.000,1688.000)(5892.000,1688.000)
\path(4287,2033)(4737,2033)
\blacken\path(4617.000,2003.000)(4737.000,2033.000)(4617.000,2063.000)(4617.000,2003.000)
\put(462,1975){\makebox(0,0)[b]{\smash{{\SetFigFont{7}{8.4}{\familydefault}{\mddefault}{\updefault}$1$}}}}
\put(1362,1975){\makebox(0,0)[b]{\smash{{\SetFigFont{7}{8.4}{\familydefault}{\mddefault}{\updefault}$2$}}}}
\put(2262,1975){\makebox(0,0)[b]{\smash{{\SetFigFont{7}{8.4}{\familydefault}{\mddefault}{\updefault}$3$}}}}
\put(3162,1975){\makebox(0,0)[b]{\smash{{\SetFigFont{7}{8.4}{\familydefault}{\mddefault}{\updefault}$4$}}}}
\put(4062,1975){\makebox(0,0)[b]{\smash{{\SetFigFont{7}{8.4}{\familydefault}{\mddefault}{\updefault}$5$}}}}
\put(1362,1075){\makebox(0,0)[b]{\smash{{\SetFigFont{7}{8.4}{\familydefault}{\mddefault}{\updefault}$6$}}}}
\put(2262,1075){\makebox(0,0)[b]{\smash{{\SetFigFont{7}{8.4}{\familydefault}{\mddefault}{\updefault}$7$}}}}
\put(3162,1075){\makebox(0,0)[b]{\smash{{\SetFigFont{7}{8.4}{\familydefault}{\mddefault}{\updefault}$8$}}}}
\put(2712,175){\makebox(0,0)[b]{\smash{{\SetFigFont{7}{8.4}{\familydefault}{\mddefault}{\updefault}$9$}}}}
\put(3612,175){\makebox(0,0)[b]{\smash{{\SetFigFont{7}{8.4}{\familydefault}{\mddefault}{\updefault}$10$}}}}
\put(4062,1075){\makebox(0,0)[b]{\smash{{\SetFigFont{7}{8.4}{\familydefault}{\mddefault}{\updefault}$11$}}}}
\put(5862,175){\makebox(0,0)[b]{\smash{{\SetFigFont{7}{8.4}{\familydefault}{\mddefault}{\updefault}$12$}}}}
\put(5862,1075){\makebox(0,0)[b]{\smash{{\SetFigFont{7}{8.4}{\familydefault}{\mddefault}{\updefault}$13$}}}}
\put(5862,1975){\makebox(0,0)[b]{\smash{{\SetFigFont{7}{8.4}{\familydefault}{\mddefault}{\updefault}$14$}}}}
\put(4962,1975){\makebox(0,0)[b]{\smash{{\SetFigFont{7}{8.4}{\familydefault}{\mddefault}{\updefault}$15$}}}}
\end{picture}
}
\end{center}
\caption{Transition graph of some $t \in \Bsf(15)$ with principal sequence $1,2,3,4,5,15$.}
\label{fig:ptree}
\end{figure}

For $j = 2$, the tree $T_t(2)$ has height at most $1$, and there are $S'_{r_2+1}(1) = \frac{S_{r_2+1}(1)}{r_2+1} = \frac{3}{3} = 1$ possible $T_t(2)$. For $j=3$, there are $S'_{r_3+1}(2) = \frac{S_{r_3+1}(2)}{r_3+1} = 10$ possible $T_t(3)$, which are of one of the three types shown in Fig.~\ref{fig:Tj}. Among the 10 possible $T_t(3)$, one is of type (a), three are of type (b), and six are of type~(c). For $j = 4$, there are $S'_{r_4+1}(3) = \frac{S_{r_4+1}(3)}{r_4+1} = 1$ possible $T_t(4)$. 

\begin{figure}[hbt]
\begin{center}
\setlength{\unitlength}{0.00052493in}
\begingroup\makeatletter\ifx\SetFigFont\undefined%
\gdef\SetFigFont#1#2#3#4#5{%
  \reset@font\fontsize{#1}{#2pt}%
  \fontfamily{#3}\fontseries{#4}\fontshape{#5}%
  \selectfont}%
\fi\endgroup%
{\renewcommand{\dashlinestretch}{30}
\begin{picture}(5641,2697)(0,-10)
\put(2933,2449){\ellipse{450}{450}}
\put(908,2449){\ellipse{450}{450}}
\put(4958,2449){\ellipse{450}{450}}
\put(908,1549){\ellipse{450}{450}}
\put(1583,1549){\ellipse{450}{450}}
\put(2933,1549){\ellipse{450}{450}}
\put(2483,649){\ellipse{450}{450}}
\put(4508,1549){\ellipse{450}{450}}
\put(5408,1549){\ellipse{450}{450}}
\put(5408,649){\ellipse{450}{450}}
\put(233,1549){\ellipse{450}{450}}
\put(3383,649){\ellipse{450}{450}}
\path(2933,1774)(2933,2224)
\blacken\path(2963.000,2104.000)(2933.000,2224.000)(2903.000,2104.000)(2963.000,2104.000)
\path(2573,829)(2888,1324)
\blacken\path(2848.885,1206.654)(2888.000,1324.000)(2798.265,1238.867)(2848.885,1206.654)
\path(3293,829)(2978,1324)
\blacken\path(3067.735,1238.867)(2978.000,1324.000)(3017.115,1206.654)(3067.735,1238.867)
\path(908,1774)(908,2224)
\blacken\path(938.000,2104.000)(908.000,2224.000)(878.000,2104.000)(938.000,2104.000)
\path(368,1729)(773,2224)
\blacken\path(720.230,2112.128)(773.000,2224.000)(673.793,2150.122)(720.230,2112.128)
\path(1448,1729)(1043,2224)
\blacken\path(1142.207,2150.122)(1043.000,2224.000)(1095.770,2112.128)(1142.207,2150.122)
\path(4598,1774)(4868,2224)
\blacken\path(4831.985,2105.666)(4868.000,2224.000)(4780.536,2136.536)(4831.985,2105.666)
\path(5318,1774)(5048,2224)
\blacken\path(5135.464,2136.536)(5048.000,2224.000)(5084.015,2105.666)(5135.464,2136.536)
\path(5408,874)(5408,1324)
\blacken\path(5438.000,1204.000)(5408.000,1324.000)(5378.000,1204.000)(5438.000,1204.000)
\put(2933,2391){\makebox(0,0)[b]{\smash{{\SetFigFont{7}{8.4}{\familydefault}{\mddefault}{\updefault}$4$}}}}
\put(908,2391){\makebox(0,0)[b]{\smash{{\SetFigFont{7}{8.4}{\familydefault}{\mddefault}{\updefault}$4$}}}}
\put(4958,2391){\makebox(0,0)[b]{\smash{{\SetFigFont{7}{8.4}{\familydefault}{\mddefault}{\updefault}$4$}}}}
\put(908,64){\makebox(0,0)[b]{\smash{{\SetFigFont{7}{8.4}{\familydefault}{\mddefault}{\updefault}(a)}}}}
\put(2933,64){\makebox(0,0)[b]{\smash{{\SetFigFont{7}{8.4}{\familydefault}{\mddefault}{\updefault}(b)}}}}
\put(4958,64){\makebox(0,0)[b]{\smash{{\SetFigFont{7}{8.4}{\familydefault}{\mddefault}{\updefault}(c)}}}}
\put(233,1491){\makebox(0,0)[b]{\smash{{\SetFigFont{7}{8.4}{\familydefault}{\mddefault}{\updefault}$i_1$}}}}
\put(908,1491){\makebox(0,0)[b]{\smash{{\SetFigFont{7}{8.4}{\familydefault}{\mddefault}{\updefault}$i_2$}}}}
\put(1583,1491){\makebox(0,0)[b]{\smash{{\SetFigFont{7}{8.4}{\familydefault}{\mddefault}{\updefault}$i_3$}}}}
\put(2933,1491){\makebox(0,0)[b]{\smash{{\SetFigFont{7}{8.4}{\familydefault}{\mddefault}{\updefault}$i_1$}}}}
\put(2483,591){\makebox(0,0)[b]{\smash{{\SetFigFont{7}{8.4}{\familydefault}{\mddefault}{\updefault}$i_2$}}}}
\put(3383,591){\makebox(0,0)[b]{\smash{{\SetFigFont{7}{8.4}{\familydefault}{\mddefault}{\updefault}$i_3$}}}}
\put(4508,1491){\makebox(0,0)[b]{\smash{{\SetFigFont{7}{8.4}{\familydefault}{\mddefault}{\updefault}$i_1$}}}}
\put(5408,1491){\makebox(0,0)[b]{\smash{{\SetFigFont{7}{8.4}{\familydefault}{\mddefault}{\updefault}$i_2$}}}}
\put(5408,591){\makebox(0,0)[b]{\smash{{\SetFigFont{7}{8.4}{\familydefault}{\mddefault}{\updefault}$i_3$}}}}
\end{picture}
}
\end{center}
\caption{Three types of trees of the form $T_t(3)$, where $\{i_1,i_2,i_3\} = \{8,9,10\}$.}
\label{fig:Tj}
\end{figure}
\end{example}

Let $C^n_k$ be the binomial coefficient, and let $C^n_{k_1,\ldots,k_m}$ be the multinomial coefficient. Then we have
\begin{lemma}\label{lem:Gn} 
For $n \ge 3$, we have 
\begin{equation}\label{eq:g} 
\bsf(n) = 
\sum_{k=0}^{n-2} C^{n-2}_k k! 
   \sum_{\begin{subarray}{c}
       r_2+\cdots+r_k+r\\ 
       = n-k-2
     \end{subarray}}
   C^{n-k-2}_{r_2,\ldots,r_k,r} (r+1)^r \prod_{j=2}^k S'_{r_j+1}(j-1) .
\end{equation}
\end{lemma}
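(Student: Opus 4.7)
The plan is to show that every transformation $t \in \Bsf(n)$ is uniquely determined by three pieces of data: (i) its principal sequence $1, 1t, \ldots, 1t^{k+1} = n$ (whose existence is guaranteed by Proposition~\ref{prop:ppsf}); (ii) for each $j = 2, \ldots, k$, a labeled rooted in-tree $T_t(j)$ of height at most $j - 1$ attached at the vertex $1t^j$ (well-defined and of bounded height by Lemmas~\ref{lem:shortpath} and~\ref{lem:ptree}); and (iii) an assignment of outgoing edges for the remaining vertices (those with no principal connection). Conversely, any such triple of data can be checked to define a valid element of $\Bsf(n)$. The formula for $\bsf(n) = |\Bsf(n)|$ then follows by multiplying the counts of each piece and summing over the possible shapes.

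For the principal sequence, the $k$ intermediate entries $1t, \ldots, 1t^k$ form an ordered tuple of distinct elements drawn from $\{2, \ldots, n-1\}$, contributing $C^{n-2}_k \cdot k!$ with $k$ ranging over $0 \le k \le n-2$. Of the remaining $n-k-2$ vertices, $r_j$ are assigned to the tree $T_t(j)$ as non-root vertices (for $j = 2, \ldots, k$; the tree $T_t(1)$ is trivially just its root), and $r$ vertices form the no-principal-connection class, contributing the multinomial $C^{n-k-2}_{r_2, \ldots, r_k, r}$ for each composition $r_2 + \cdots + r_k + r = n-k-2$. Each tree $T_t(j)$ is a labeled rooted in-tree with $r_j + 1$ vertices and fixed root $1t^j$, contributing $S'_{r_j+1}(j-1) = S_{r_j+1}(j-1)/(r_j+1)$ (the Riordan count divided by the number of vertices, because the root is fixed). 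Finally, each of the $r$ remaining vertices has its outgoing edge forbidden from landing on $1$ (since $1 \notin \timg(t)$), on any intermediate $1t^{j'}$ of the principal sequence (by definition of ``remaining''), and on any tree vertex (else the vertex would itself belong to that tree); the only permissible targets are the $r$ remaining vertices (including itself) and $n$, giving $(r+1)^r$ options. Multiplying these factors and summing yields the stated expression.

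The main technical point is verifying that every outgoing-edge assignment on the remaining vertices actually produces a transformation in $\Bsf(n)$ and, in particular, that cycles entirely among remaining vertices do not inadvertently violate the condition ``$1t^j = n$ or $1t^j \ne it^j$ for all $1 < i < n$''. This is settled using the fact that the orbit of any remaining vertex avoids the principal sequence except possibly at $n$: for $1 \le j \le k$, $1t^j$ is an intermediate element of $\seq_t(1)$ whereas $it^j$ lies among the remaining vertices or equals $n$, so the two differ; for $j \ge k+1$, $1t^j = n$ and the disjunction is satisfied vacuously. Likewise, one must check that the decomposition in (i)--(iii) is a genuine bijection, which follows because the principal sequence, the trees, and the remaining vertices are each intrinsically determined by $t$ via $\seq_t$ and its principal connections.
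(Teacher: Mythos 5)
Your proposal is correct and follows essentially the same decomposition as the paper's proof: count the principal sequence ($C^{n-2}_k k!$ choices), distribute the remaining $n-k-2$ elements among the trees $T_t(j)$ and the no-connection set via the multinomial coefficient, count each fixed-root tree by $S'_{r_j+1}(j-1)$, and count the $(r+1)^r$ self-maps of $E\cup\{n\}$ restricted to $E$. Your explicit verification of the converse direction (that every such choice of data yields an element of $\Bsf(n)$, so the count is an equality rather than an upper bound) is a point the paper's proof leaves implicit, but it does not change the approach.
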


\begin{proof} 
Let $t$ be any transformation in $\Bsf(n)$. Suppose $\seq_t(1) = 1,1t,\ldots,1t^k,n$ for some $k$, $0 \le k \le n-2$. There are $C^{n-2}_k k!$ different principal sequences $\seq_t(1)$. Now, fix $\seq_t(1)$. Suppose $n-k-2 = r_2 + \cdots + r_k + r$, where, for $2 \le j \le k$, tree $T_t(j)$ contains $r_j+1$ nodes, for some $r_j \ge 0$. There are $C^{n-k-2}_{r_2,\ldots,r_k,r}$ different tuples $(r_2,\ldots,r_k,r)$. Each tree $T_t(j)$ has height at most $j-1$, and it is rooted at $1t^j$. There are $S'_{r_j+1}(j-1) = \frac{S_{r_j+1}(j-1)}{r_j + 1}$ different trees $T_t(j)$. Let $E$ be the set of the remaining $r$ elements $x$ of $Q$ that are not in any tree $T_t(j)$ nor in the principal sequence $\seq_t(1)$. The image $xt$ can only be chosen from $E \cup \{n\}$. There are $(r+1)^r$ different mappings of $E$. Altogether we have the desired formula. \qed
\end{proof}

From Proposition~\ref{prop:sf} and Lemma~\ref{lem:Gn} we have 

\begin{proposition}\label{prop:Gncard} 
For $n \ge 3$, if $L$ is a suffix-free regular language with quotient complexity $n$, then its syntactic complexity $\sigma(L)$ satisfies that $\sigma(L) \le \bsf(n)$, where $\bsf(n)$ is the cardinality of $\Bsf(n)$, and it is given by Equation~(\ref{eq:g}). 
\end{proposition}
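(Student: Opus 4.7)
The proof is essentially an immediate corollary of the two preceding results, so the plan is to assemble them cleanly rather than do anything new. The plan is as follows.

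First I would invoke the standard identification of the syntactic semigroup $T_L$ with the transition semigroup $T_\cA$ of the quotient DFA $\cA = (Q,\Sig,\delta,1,F)$ of $L$, as recalled at the end of Section~\ref{sec:complexity}. Since $L$ is suffix-free, the results of~\cite{HS09} cited before Proposition~\ref{prop:sf} guarantee that $L$ has an empty quotient and that the initial quotient $L = L_\eps$ is uniquely reachable; without loss of generality we take the initial state to be $1$ and the empty state to be $n$, matching the set-up of $\Bsf(n)$.

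Next I would apply Proposition~\ref{prop:sf}(1), which gives the containment $T_L \subseteq \Bsf(n)$. This containment is exactly the content needed to pass from the abstract semigroup to the combinatorial object whose size we have already counted. Taking cardinalities yields $\sigma(L) = |T_L| \le |\Bsf(n)| = \bsf(n)$.

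Finally I would close by quoting Lemma~\ref{lem:Gn} for the explicit evaluation of $\bsf(n)$ as the right-hand side of Equation~(\ref{eq:g}), which establishes the stated bound. There is no genuine obstacle, since the real combinatorial work — showing that every $t \in \Bsf(n)$ has a principal sequence (Proposition~\ref{prop:ppsf}), that off-principal branches have no long principal connection and no incident cycles (Lemmas~\ref{lem:shortpath} and~\ref{lem:ptree}), and the resulting decomposition into a principal path, rooted labeled trees of bounded height hanging off it, and a residual set $E$ mapping into $E \cup \{n\}$ — has already been completed in the preparation of Lemma~\ref{lem:Gn}. The only thing to verify carefully in the write-up is that the boundary cases $k = 0$ and $k = 1$ (where the principal sequence is just $1,n$ or $1,1t,n$, so no non-trivial trees $T_t(j)$ appear) are correctly accounted for by the empty product convention in Equation~(\ref{eq:g}); this is a bookkeeping check rather than a mathematical difficulty.
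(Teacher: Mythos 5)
Your proposal is correct and matches the paper exactly: the paper derives this proposition as an immediate consequence of Proposition~\ref{prop:sf}(1) (giving $T_L \subseteq \Bsf(n)$) together with Lemma~\ref{lem:Gn} (giving the count of $\Bsf(n)$), which is precisely your assembly. The extra remark about the boundary cases $k=0,1$ is a sensible bookkeeping check but adds nothing that the paper's argument is missing.
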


Note that $\Bsf(n)$ is not a semigroup for $n \ge 4$ because $s_1 = [2,3,n,\ldots,n,n]$, $s_2 = [n,3,3,\ldots,3,n] \in \Bsf(n)$, but $s_1s_2 = [3,3,n,\ldots,n,n] \not\in \Bsf(n)$. Hence, although $\bsf(n)$ is an upper bound on the syntactic complexity of suffix-free regular languages, that bound is not tight. Our objective is to find the largest subset of $\Bsf(n)$ that is a semigroup. Let
\begin{eqnarray*}
\Vsf(n) = \{t \in \Bsf(n) &\mid& \txt{for all} i, j \in Q \txt{where} i \neq j, \\ 
&& \txt{we have} it = jt = n \txt{or} it \neq jt\},
\end{eqnarray*}
where $\mathbf{W}$ stands for ``witness''.

\begin{proposition}\label{prop:Pncard}
For $n \ge 3$, $\Vsf(n)$ is a semigroup contained in $\Bsf(n)$, and its cardinality is
\begin{equation*}
\vsf(n) = |\Vsf(n)| = \sum_{k = 1}^{n-1} {C^{n-1}_k}(n-1-k)!{C^{n-2}_{n-1-k}}.
\end{equation*}
\end{proposition}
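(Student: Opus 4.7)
The inclusion $\Vsf(n) \subseteq \Bsf(n)$ is part of the definition, so the real work is closure under composition and the cardinality count. Given $s, t \in \Vsf(n)$, I plan to check $st \in \Vsf(n)$ condition by condition: $n(st) = (ns)t = n$ and $1 \notin \timg(st)$ (else $1 \in \timg(t)$) are immediate. For almost-injectivity of $st$, assume for contradiction $i \neq j$ with $i(st) = j(st) \neq n$. If $is \neq js$ then almost-injectivity of $t$ would force $(is)t = (js)t = n$ or $(is)t \neq (js)t$, both of which contradict $(is)t = (js)t = i(st) \neq n$; hence $is = js$. This common value cannot be $n$ (else $i(st) = nt = n$), so almost-injectivity of $s$ with $i \neq j$ then forces $is = js = n$, a contradiction. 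The remaining ``$1(st)^j = n$ or $1(st)^j \neq i(st)^j$ for $1 < i < n$'' clause of $\Bsf(n)$ follows from the just-proved almost-injectivity by a short induction on $j$: as long as $1(st)^j \neq n$, peel off one $st$-factor at a time and use almost-injectivity together with $n(st) = n$.

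For the cardinality I plan to partition $\Vsf(n)$ by $k$, the number of elements of $Q \setminus \{n\}$ sent to $n$ by $t$. The remaining $n-1-k$ elements of $Q \setminus \{n\}$ must map injectively into $Q \setminus \{1,n\}$ (a set of size $n-2$), which forces $k \ge 1$; trivially $k \le n-1$. For each such $k$ the transformation is determined by three independent choices: the $k$-subset of $Q \setminus \{n\}$ mapped to $n$ contributes $C^{n-1}_k$; the $(n-1-k)$-subset of $Q \setminus \{1,n\}$ serving as image of the remaining elements contributes $C^{n-2}_{n-1-k}$; and the bijection between the $n-1-k$ remaining domain elements and that image contributes $(n-1-k)!$. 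Summing the product over $k = 1, \ldots, n-1$ gives the stated formula.

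The main obstacle is the almost-injectivity step for $st$: the ``collapse to $n$'' alternative inside the definition of $\Vsf$ must be eliminated at both stages, and the argument only closes when almost-injectivity is applied first to $t$ (exploiting the non-$n$-ness of $i(st)$ to rule out both clauses simultaneously) and then to $s$ (exploiting that the common value $is = js$ itself is not $n$). Once this pivotal step is in place, the derivation of the sequence clause of $\Bsf(n)$ and the enumeration are routine.
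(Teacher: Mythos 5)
Your proposal is correct and follows essentially the same route as the paper: closure is verified by the same two-stage use of almost-injectivity (first on $t$ at the image level, then on $s$ after ruling out the common value $n$), and the cardinality is obtained by the identical three-factor count over the size $k$ of the preimage of $n$ in $Q\setminus\{n\}$. The only difference is cosmetic: you spell out the peeling induction showing that the sequence clause of $\Bsf(n)$ follows from almost-injectivity, which the paper compresses into its opening ``we know that $t\in\Vsf(n)$ if and only if\dots'' characterization.
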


\begin{proof}

We know that any $t$ is in $\Vsf(n)$ if and only if the following hold: 
\be
\item $it \ne 1$ for all $i \in Q$, and $nt = n$; 
\item for all  $i,j \in Q$, such that $i\neq j$,  either $it = jt = n$ or $it \ne jt$. 
\ee

Clearly $\Vsf(n) \subseteq \Bsf(n)$. For any transformations $t_1,t_2 \in \Vsf(n)$, consider the composition $t_1t_2$. Since $1 \not\in \timg(t_2)$, we have $1 \not\in \timg(t_1t_2)$. 
We also have $nt_1t_2 = nt_2 = n$. Pick any $i,j \in Q$ such that $i\neq j$.  
Suppose $it_1t_2 \neq n$ or $jt_1t_2 \neq n$. 
If $it_1t_2 = jt_1t_2$, then $it_1 = jt_1$ and thus $i = j$, a contradiction. Hence $t_1t_2 \in \Vsf(n)$, and $\Vsf(n)$ is a semigroup contained in $\Bsf(n)$. 

Let $t \in \Vsf(n)$ be any transformation. 
Note that $nt = n$ is fixed. 
Let $Q' = Q \setminus \{n\}$, and $Q'' = Q \setminus \{1,n\}$. Suppose $k$ elements in $Q'$ are mapped to $n$ by $t$, where $0 \le k \le n-1$; then there are ${C^{n-1}_k}$ choices of these elements. For the set $D$ of the remaining $n-1-k$ elements, which must be mapped by $t$ to pairwise distinct elements of $Q''$, there are ${C^{n-2}_{n-1-k}}(n-1-k)!$ choices for the mapping $t|_D$. When $k = 0$, there is no such $t$ since $|Dt| = n-1 > n-2 = |Q''|$. 
Altogether, the cardinality of $\Vsf(n)$ is 
  $|\Vsf(n)| = \sum_{k = 1}^{n-1} {C^{n-1}_k}(n-1-k)!{C^{n-2}_{n-1-k}}. $ \qed
\end{proof}

We now construct a generating set $\Gsf(n)$ ($\mathbf{G}$ for ``generators'') of size $n$ for $\Vsf(n)$, which will show that there exist DFA's accepting suffix-free regular languages with quotient complexity $n$ and syntactic complexity $\vsf(n)$.

\begin{proposition}\label{prop:Pgen}
When $n \ge 3$, the semigroup $\Vsf(n)$ is generated by the following set $\Gsf(n)$ of transformations of $Q$:
$\Gsf(3)=\{a,b\}$, where $a=[3,2,3]$ and $b=[2,3,3]$; $\Gsf(4)=\{a,b,c\}$, where $a= [4, 3, 2, 4]$, $b = [2, 4, 3, 4]$, $c = [2, 3, 4, 4]$; and for $n\ge 5$, $\Gsf(n)=\{a_0,\ldots,a_{n-1}\}$, where

\bi
\item $a_0={1 \choose n}(2,3)$,
\item $a_1={1 \choose n}(2,3,\ldots,n-1)$,
\item For $2 \le i \le n - 1$, $ja_i = j+1$ for $j = 1,\ldots,i-1$, $ia_i = n$, and $ja_i = j$ for $j = i+1,\ldots,n$.
\ei

\end{proposition}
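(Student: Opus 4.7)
The plan is to induct on the \emph{drop count} $k = |\{i \in Q' \mid it = n\}|$ of a transformation $t \in \Vsf(n)$, where $Q' = Q \setminus \{n\}$ and $Q'' = Q \setminus \{1,n\}$ as in the proof of Proposition~\ref{prop:Pncard}; the drop count ranges from $1$ to $n-1$. The two small cases $n = 3$ and $n = 4$ are checked by directly enumerating the semigroups generated by the listed transformations and comparing the result against $\vsf(n)$. For $n \ge 5$ the argument proceeds in three stages, one for each type of generator.

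First, since $a_0|_{Q''}$ is the transposition $(2,3)$ and $a_1|_{Q''}$ is the cycle $(2,3,\ldots,n-1)$, Theorem~\ref{thm:piccard} shows that products of $a_0$ and $a_1$ realize every permutation of $Q''$. Every such product also sends $1$ to $n$ and fixes $n$, so $\langle a_0, a_1 \rangle$ contains the subgroup $\Pi$ of the $(n-2)!$ transformations $\pi$ with $1\pi = n$, $n\pi = n$, and $\pi|_{Q''} \in \mathfrak{S}_{Q''}$. These are exactly the single-drop ($k=1$) transformations whose drop is at $1$. For a single-drop $t$ with $it = n$ for some $i \ge 2$, let $\phi\colon Q' \setminus \{i\} \to Q''$ be the bijection $j\phi = jt$ and define $\pi \in \Pi$ by $k\pi = (k-1)\phi$ for $2 \le k \le i$ and $k\pi = k\phi$ for $i+1 \le k \le n-1$. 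A direct check confirms that $\pi|_{Q''}$ is a permutation (its range is $\{j\phi \mid j \ne i\} = Q''$) and that $a_i \pi = t$. Hence every single-drop transformation lies in $\langle \Gsf(n) \rangle$.

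For the inductive step with $k \ge 2$, let $t$ have drop set $S$ of size $k$ and injection $\phi\colon Q' \setminus S \to Q''$. Pick any $p \in S$ and any bijection $\psi\colon Q' \setminus \{p\} \to Q''$, and let $s$ be the single-drop transformation with $ps = n$ and $js = j\psi$ for $j \in Q' \setminus \{p\}$. Define $t'$ by $nt' = n$, $qt' = (q\psi^{-1})t$ for $q \in Q''$, and let $1t'$ be any element of $Q'' \setminus (Q' \setminus S)\phi$; this set has size $(n-2) - (n-1-k) = k-1$, hence is nonempty. A routine verification then shows that $st' = t$ and that $t' \in \Vsf(n)$ with drop set $(S \setminus \{p\})\psi$ of size $k-1$, so by the induction hypothesis $t' \in \langle \Gsf(n) \rangle$, and $s$ is there by the second stage, yielding $t = st' \in \langle \Gsf(n) \rangle$. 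The main obstacle is precisely this last definition: the value $1t'$ must avoid the image $(Q' \setminus S)\phi$ of $t'|_{Q''}$ while remaining in $Q''$, and the needed slack exists precisely because $k \ge 2$, which is exactly why the induction must be based at the single-drop layer.
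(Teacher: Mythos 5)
Your proof is correct and follows essentially the same strategy as the paper's: both induct on the number of elements mapped to $n$, using Theorem~\ref{thm:piccard} to realize $\mathfrak{S}_{Q''}$ from $a_0,a_1$ and the letters $a_i$ to obtain the single-drop transformations as the base case. The only difference is cosmetic: in the inductive step you factor $t=st'$ with the single-drop factor $s$ applied first, whereas the paper writes $t=sa_i\pi$ with the $(k-1)$-drop factor applied first; both hinge on the same counting fact that a transformation sending $k\ge 2$ elements of $Q\setminus\{n\}$ to $n$ leaves at least $k-1$ elements of $Q''$ outside its range.
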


\begin{proof}
First note that $\Gsf(n)$ is a subset of $\Vsf(n)$, and so $\langle \Gsf(n) \rangle$, the semigroup generated by $\Gsf(n)$, is a subset of $\Vsf(n)$. We now show that $\Vsf(n) \subseteq \langle \Gsf(n) \rangle$. 

Pick any $t$ in $\Vsf(n)$. Note that $nt = n$ is fixed. 
Let $Q' = Q \setminus \{n\}$, $E_t = \{j \in Q' \mid jt = n\}$,  $D_t = Q' \setminus E_t$, and $Q'' = Q \setminus \{1,n\}$. Then $D_t t \subseteq Q''$, and $|E_t| \ge 1$, since $|Q''| < |Q'|$.  We prove by induction on $|E_t|$ that $t \in \langle \Gsf(n) \rangle$. 

First, note that $\langle a_0,a_1 \rangle$, the semigroup generated by $\{a_0,a_1\}$, is isomorphic to the symmetric group $\mathfrak{S}_{Q''}$ by Theorem~\ref{thm:piccard}. Consider $E_t = \{i\}$ for some $i \in Q'$. Then $ia_i = it = n$. Moreover, since $D_t a_i, D_t t \subseteq Q''$, there exists $\pi \in \langle a_0, a_1 \rangle$ such that $(ja_i)\pi = jt$ for all $j \in D_t$. Then $t = a_i\pi \in \langle \Gsf(n) \rangle$.

Assume that any transformation $t \in \Vsf(n)$ with $|E_t| < k$ can be generated by $\Gsf(n)$, where $1 < k < n-1$. 
Consider $t \in \Vsf(n)$ with $|E_t| = k$. 
Suppose $E_t = \{e_1,\ldots,e_{k-1},e_k\}$. 
Let $s\in \Vsf(n)$ be such that $E_s = \{e_1,\ldots,e_{k-1}\}$. By assumption, $s$ can be generated by $\Gsf(n)$. 
Let $i = e_ks$; 
then $i \in Q''$, and $e_j(s a_i) = n$ for all $1 \le j \le k$. 
Moreover, we have $D_t (s a_i) \subseteq Q''$.
Thus, there exists $\pi \in \langle a_0, a_1 \rangle$ such that, 
for all $d \in D_t$, $d(s a_i \pi) = d t$. 
Altogether, for all $e_j \in E_t$, we have $e_j (s a_i \pi) = e_j t = n$, for all $d \in D_t$, $d (s a_i \pi) = d t$, and $n (s a_i \pi) = n t = n$. Thus $t = s a_i \pi$, and $t \in \langle \Gsf(n) \rangle$.

Therefore $\Vsf(n) = \langle \Gsf(n) \rangle$. \qed
\end{proof}

\begin{theorem}
\label{thm:DFAsf}
For $n \ge 5$, let $\cA_n = (Q,\Sig,\delta,1,F)$ be the DFA with alphabet $\Sig = \{a_0,a_1,\ldots,a_{n-1}\}$, where each $a_i$ defines a transformation as in Proposition~\ref{prop:Pgen}, and $F = \{2\}$. 
Then $L = L(\cA_n)$ has quotient complexity $\kappa(L) = n$, and syntactic complexity $\sigma(L) = \vsf(n)$. Moreover, $L$ is suffix-free.
\end{theorem}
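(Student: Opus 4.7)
The plan is to verify the three claims separately: minimality of $\cA_n$ (which gives $\kappa(L)=n$), identification of the transition semigroup with $\Vsf(n)$ (which gives $\sigma(L)=\vsf(n)$), and suffix-freeness via Proposition~\ref{prop:sf}(2).

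First I would check reachability of all $n$ states from the initial state $1$. Observe that $1a_i=2$ for every $i\ge 2$, so state $2$ is reached via $a_2$. Applying successive copies of $a_1$, which cycles $(2,3,\ldots,n-1)$, from state $2$ reaches every state in $Q''=\{2,\ldots,n-1\}$. Finally, state $n$ is reached by $a_1$ from $1$ (since $1a_1=n$). Next I would check distinguishability. State $n$ is a sink, because inspection of each generator shows $na_i=n$ for all $i$; therefore no word from $n$ can reach the unique accepting state $2$, which distinguishes $n$ from every other state. For two distinct states $i,j\in Q''$, some power $a_1^k$ maps $i$ to $2$ while mapping $j$ to a state different from $2$, since $a_1$ acts as a permutation on $Q''$. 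For state $1$ versus a state $i\in Q''$, the empty word already distinguishes $1$ from $2$, and for $i\in Q''\setminus\{2\}$ the input $a_2$ sends $1\mapsto 2$ (accept) but fixes $i$ (reject). Hence all states are pairwise distinguishable and $\cA_n$ is the quotient DFA of $L$, so $\kappa(L)=n$.

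For the syntactic complexity, I would invoke Proposition~\ref{prop:Pgen}, which says that $\Vsf(n)=\langle\Gsf(n)\rangle$. Since $\Gsf(n)=\{a_0,a_1,\ldots,a_{n-1}\}$ is precisely the alphabet of $\cA_n$, the transition semigroup $T_{\cA_n}$ coincides with $\Vsf(n)$. Because $\cA_n$ was shown to be the quotient DFA of $L$, the transition semigroup is isomorphic to the syntactic semigroup $T_L$ (see the remark after the definition of transition semigroup). Therefore $\sigma(L)=|\Vsf(n)|=\vsf(n)$ by Proposition~\ref{prop:Pncard}.

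Finally, for suffix-freeness I would apply Proposition~\ref{prop:sf}(2). The three hypotheses hold: state $n$ is empty (a sink non-accepting state), state $2$ is the unique accepting state, and $T_L=\Vsf(n)\subseteq\Bsf(n)$ by Proposition~\ref{prop:Pncard}. The proposition then yields that $L$ is suffix-free. I expect the main obstacle to be the distinguishability argument, since one must verify that from every non-sink state the accepting state $2$ is reachable and that the reaching word can be chosen so as not to accidentally accept the other state being compared; this is why the permutation action of $a_1$ on $Q''$ is essential.
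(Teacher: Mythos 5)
Your proposal is correct and follows essentially the same route as the paper: reachability via the generators, distinguishability using $a_2$ and powers of the cyclic permutation $a_1$ on $\{2,\ldots,n-1\}$, then Proposition~\ref{prop:Pgen} to identify the transition semigroup with $\Vsf(n)$ and Proposition~\ref{prop:sf}(2) for suffix-freeness. The only cosmetic difference is that the paper reaches state $i$ by $a_i^{i-1}$ rather than by $a_2$ followed by powers of $a_1$, and uses $a_2$ alone to separate state $1$ from all others; both variants are valid.
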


\begin{proof}
First we show that all the states of $\cA_n$ are reachable: $1$ is the initial state,  state $n$ is reached by $a_1$, and for $2 \leq i \leq n-1$, state $i$ is reached by $a_i^{i-1}$. 
Also, the initial state $1$ accepts $a_2$ while state $i$ rejects $a_2$ for all $i \neq 1$. 
For $2 \leq i \leq n-1$,  state $i$ accepts $ a_1^{n-i}$, while state $j$ rejects it, for all $j \neq i$. Also $n$ is the empty state. Thus all the states of $\cA_n$ are distinct, and $\kappa(L) = n$.

By Proposition~\ref{prop:Pgen}, the syntactic semigroup of $L$ is $\Vsf(n)$. The syntactic complexity of $L$ is $\sigma(L) = |\Vsf(n)| = \vsf(n)$. Also, by Proposition~\ref{prop:sf}, $L$ is suffix-free. \qed
\end{proof}

As shown in Table~\ref{tab:Summary1} on p.~\pageref{table1}, the size of $\Sig$ cannot be decreased for $n\le 5$.

\begin{theorem}\label{thm:sfsmall} 
For $2 \le n \le 5$, if a suffix-free regular language $L$ has quotient complexity $\kappa(L) = n$, then its syntactic complexity satisfies that $\sigma(L) \le \vsf(n)$, and this is a tight upper bound. 
\end{theorem}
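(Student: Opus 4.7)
The statement has two halves: the tightness of $\vsf(n)$, and the matching upper bound $\sigma(L)\le\vsf(n)$. My plan is to handle each small value of $n$ essentially separately, combining the explicit constructions already in this section with a small computational enumeration.

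For tightness, I would invoke the witness languages constructed earlier in this section. Theorem~\ref{thm:DFAsf} directly handles $n=5$. For $n=3$ and $n=4$, Proposition~\ref{prop:Pgen} gives the generating sets $\Gsf(3)$ and $\Gsf(4)$ of $\Vsf(3)$ and $\Vsf(4)$; the DFAs with these generators as letter transformations, initial state $1$ and accepting state $2$, have all $n$ states reachable and pairwise distinguishable by inspection, and since their transition semigroups lie in $\Bsf(n)$, Proposition~\ref{prop:sf}.2 implies the accepted language is suffix-free. For $n=2$, a short argument shows that any suffix-free language with $\kappa(L)=2$ must be $L=\{\eps\}$: the quotient $L$ is uniquely reachable and the empty quotient exists, so the two quotients must be $L$ and $\emp$; since $L\neq\emp$ and $L$ contains any accepted word as a suffix of itself, the only nonempty member of $L$ can be $\eps$. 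This language has syntactic semigroup of size $1=\vsf(2)$.

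For the upper bound, I would first dispose of the small cases by the combinatorial bound already established. A direct evaluation of the formula in Lemma~\ref{lem:Gn} gives $\bsf(2)=1=\vsf(2)$ and $\bsf(3)=3=\vsf(3)$, so Proposition~\ref{prop:Gncard} immediately yields $\sigma(L)\le\vsf(n)$ for $n\le 3$. For $n=4$ and $n=5$ the strict inequality $\vsf(n)<\bsf(n)$ means the combinatorial bound is no longer enough. The natural approach is exhaustive enumeration in GAP, as was already done to produce Table~\ref{tab:Summary1}: for each of the (finitely many) candidate transition semigroups $T\subseteq\cT_Q$ on $Q=\{1,\ldots,n\}$, check that $T$ is closed under composition, that $T\subseteq\Bsf(n)$, that there is a choice of accepting state giving a minimal DFA with $n$ reachable, pairwise distinguishable states, and that the resulting language is suffix-free; then verify $|T|\le\vsf(n)$ in each case.

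The main obstacle is the $n=4,5$ enumeration. One cannot hope to close the $\bsf(n)$--$\vsf(n)$ gap by a single uniform combinatorial argument, because the obstruction is exactly that $\Bsf(n)$ is not a semigroup: the witnesses $s_1=[2,3,n,\ldots,n,n]$ and $s_2=[n,3,3,\ldots,3,n]$ cited before Proposition~\ref{prop:Pncard} already show that composition can escape $\Bsf(n)$, and describing which subsets of $\Bsf(n)$ are both semigroups and realizable as transition semigroups of suffix-free DFAs seems to require a case analysis that grows rapidly with $n$. For $n\le 5$ this search space is small enough to be handled by a computer search, which is the route I would take; tightening the bound to $\vsf(n)$ for general $n$ is precisely the conjecture that the paper leaves open.
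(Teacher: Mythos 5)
Your treatment of tightness and of the cases $n\le 3$ is essentially the paper's: for $n\in\{2,3\}$ one observes $\vsf(n)=\bsf(n)$ and quotes Proposition~\ref{prop:Gncard}, and the witnesses are Theorem~\ref{thm:DFAsf} for $n=5$ and small explicit languages otherwise (the paper uses $L=\eps$, $L=ab^*$, and $L=(b \cup c)((a \cup c)b^*a)^*$ for $n=2,3,4$; your route through $\Gsf(3)$, $\Gsf(4)$ and Proposition~\ref{prop:sf}.2 is fine). Your $n=2$ classification argument is also correct.

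The gap is in the upper bound for $n=4$ and $n=5$. As described, your search --- ``for each of the (finitely many) candidate transition semigroups $T\subseteq\cT_Q$, check that $T$ is closed under composition, that $T\subseteq\Bsf(n)$, \dots'' --- is not executable: for $n=5$ there are already $2^{115}$ subsets of $\Bsf(5)$, and enumerating all subsemigroups, or all subsemigroups generated by subsets, is equally out of reach. Moreover, you explicitly assert that no uniform combinatorial argument can close the gap between $\bsf(n)$ and $\vsf(n)$, but the paper closes it with exactly such an argument, a \emph{pairing} (or conflict) argument, and this is the idea your proposal is missing. Concretely: for each of the $|\Bsf(n)|-|\Vsf(n)|$ transformations $s_i\in\Bsf(n)\setminus\Vsf(n)$ (two of them when $n=4$, forty-two when $n=5$) one exhibits a partner $t_i\in\Vsf(n)$, with distinct $s_i$ paired to distinct $t_i$, such that $\langle s_i,t_i\rangle\not\subseteq\Bsf(n)$; for example, for $n=4$, $s_1=[4,2,2,4]$ conflicts with $t_1=[3,2,4,4]$ because $t_1s_1=[2,2,4,4]\notin\Bsf(4)$. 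Since the syntactic semigroup $T_L$ of a suffix-free $L$ is a semigroup contained in $\Bsf(n)$ by Proposition~\ref{prop:sf}, it contains at most one element of each disjoint pair $\{s_i,t_i\}$, whence $|T_L|\le|\Bsf(n)|-\bigl(|\Bsf(n)|-|\Vsf(n)|\bigr)=\vsf(n)$. This reduces the whole computation to checking a handful of pairwise products --- doable by hand for $n=4$ and by a short enumeration over $\Vsf(5)$ in \emph{GAP} for $n=5$ --- and produces a finite, verifiable certificate, which your proposed brute-force search does not.
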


\begin{proof} 
By Proposition~\ref{prop:sf}, the syntactic semigroup of a suffix-free regular language $L$ is contained in $\Bsf(n)$. 
For $n\in\{2,3\}$, $\vsf(n)=\bsf(n)$. 
So $\vsf(n)$ is an upper bound, and it is met by the language $L = \eps$ for $n = 2$ and by $L = ab^*$ for $n = 3$. 
For $n=4$, we have $|\Bsf(4)| = 15$ and $|\Vsf(4)| = 13$. 
Two transformations, $s_1 = [4, 2, 2, 4]$ and $s_2 = [4, 3, 3, 4]$, in $\Bsf(4)$ are such that $s_1$ conflicts with $t_1 = [3, 2, 4, 4] \in \Vsf(4)$ ($t_1s_1 = [2,2,4,4] \not\in \Bsf(4)$), and $s_2$ conflicts with $t_2 = [2,3,4,4]$ ($t_2s_2 = [3,3,4,4] \not\in \Bsf(4)$). 
Thus $\sigma(L) \le 13$. 
Let $L = (b \cup c)((a \cup c)b^*a)^*$; then $\kappa(L) = 4$ and $\sigma(L) = 13$. So the bound is tight.

For $n=5$, we have $|\Bsf(5)| = 115$ and $|\Vsf(5)| = 73$. Let $\Bsf(5) \setminus \Vsf(5) = \{s_1,\ldots,s_{42}\}$. For each $s_i$, we enumerated transformations in $\Vsf(5)$ using \emph{GAP} and found a unique $t_i \in \Vsf(5)$ such that the semigroup $\langle t_i, s_i \rangle$ is not contained in $\Bsf(5)$. Thus at most one transformation in each pair $\{t_i,s_i\}$ can appear in the syntactic semigroup of $L$. So we reduce the upper bound to~$73$. By Theorem~\ref{thm:DFAsf}, this bound is tight. 
\end{proof}
\smallskip

For $n \ge 6$, the semigroup $\Vsf(n)$ is no longer the largest semigroup contained in $\Bsf(n)$. In the following, we define and study another semigroup $\Wsf(n)$, which is a larger semigroup contained in $\Bsf(n)$. Let 
$$\Wsf(n) = \{ t \in \Bsf(n) \mid 1t = n \txt{or} it = n~~\forall~i, 2 \le i \le n-1 \}.$$ 
Note that, we are interested only in situations where $n \ge 6$, although some statements also hold for smaller $n$.

\begin{proposition}\label{prop:Wsf} 
For $n \ge 6$, the set $\Wsf(n)$ is a semigroup contained in $\Bsf(n)$, and its cardinality is 
$$\wsf(n) = |\Wsf(n)| = (n-1)^{n-2} + (n-2).$$
\end{proposition}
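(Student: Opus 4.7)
The inclusion $\Wsf(n) \subseteq \Bsf(n)$ holds by definition, so the plan reduces to (i) verifying closure of $\Wsf(n)$ under composition and (ii) counting its elements. I expect the subtle part to be (i): as observed just before $\Vsf(n)$ was introduced, $\Bsf(n)$ itself is \emph{not} a semigroup, so closure of $\Wsf(n)$ cannot be inherited and must be checked directly, including the nontrivial sequence condition of $\Bsf(n)$.

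For closure I would fix $s,t \in \Wsf(n)$ and split on the defining alternative for $s$: either (I)~$1s = n$, or (II)~$is = n$ for every $2 \le i \le n-1$. In case (I), $1(st) = (1s)t = nt = n$, so $st$ satisfies alternative (I). In case (II), for every $2 \le i \le n-1$ we have $i(st) = (is)t = nt = n$, so $st$ satisfies alternative (II). It then remains to certify $st \in \Bsf(n)$: the condition $1 \notin \timg(st)$ follows from $1 \notin \timg(t)$, and $n(st) = n$ from $ns = nt = n$. The sequence condition is the only delicate part: in case (I) it is trivial because $1(st)^j = n$ for all $j \ge 1$, while in case (II) the facts $i(st) = n$ for $2 \le i \le n-1$ and $n(st) = n$ propagate to $i(st)^j = n$ for every such $i$ and all $j \ge 1$, so whenever $1(st)^j \ne n$ it automatically differs from each $i(st)^j$.

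For the cardinality I would write $\Wsf(n) = A \cup B$ where $A = \{t \in \Wsf(n) : 1t = n\}$ and $B = \{t \in \Wsf(n) : it = n \text{ for all } 2 \le i \le n-1\}$, and apply inclusion--exclusion. For $A$ the sequence condition is automatic, $nt = n$ and $1t = n$ are fixed, and each of the $n-2$ remaining images $it$ with $2 \le i \le n-1$ may be chosen freely in $\{2, \ldots, n\}$, giving $|A| = (n-1)^{n-2}$. For $B$ the transformation is determined except for $1t \in \{2, \ldots, n\}$, and each of the $n-1$ choices satisfies the sequence condition (either $1t = n$, or $1t \in \{2,\ldots,n-1\}$ with $1t^2 = n$ while $1t \ne n = it$ for $1 < i < n$), giving $|B| = n-1$. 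The intersection $A \cap B$ consists only of the constant map to $n$, so inclusion--exclusion yields $|\Wsf(n)| = (n-1)^{n-2} + (n-1) - 1 = (n-1)^{n-2} + (n-2)$, as claimed.
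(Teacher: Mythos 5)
Your proposal is correct and follows essentially the same route as the paper: the same case split on whether $1s=n$ or all of $2,\ldots,n-1$ map to $n$ for closure, and the same count (the paper partitions into the disjoint cases $1t=n$ and $1t\neq n$, which is your inclusion--exclusion in disguise). Your explicit verification that the product actually lands back in $\Bsf(n)$ (the sequence condition included) is a detail the paper's proof leaves implicit, and it is worth having since $\Bsf(n)$ itself is not closed under composition.
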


\begin{proof} 
Pick any $t_1,t_2$ in $\Wsf(n)$. If $1t_1 = n$, then $1(t_1t_2) = n$ and $t_1t_2 \in \Wsf(n)$. If $1t_1 \neq n$, then, for all $i \in \{2,\ldots,n-1\}$, $it_1 = n$ and $i(t_1t_2) = n$; so $t_1t_2 \in \Wsf(n)$ as well. Hence $\Wsf(n)$ is a semigroup contained in $\Bsf(n)$. 

For any $t \in \Wsf(n)$, $nt = n$ is fixed. There are two possible cases: 
\be
\item $1t = n$: For each $i \in \{2,\ldots,n-1\}$, $it$ can be chosen from $\{2,\ldots,n\}$. Then there are $(n-1)^{n-2}$ different $t$'s in this case. 
\item $1t \neq n$: Now $1t$ can be chosen from $\{2,\ldots,n-1\}$. For each $i \in \{2,\ldots,n-1\}$, $it = n$ is fixed. There are $n-2$ different $t$'s in this case. 
\ee
Therefore $\wsf(n) = (n-1)^{n-2} + (n-2)$. \qed
\end{proof}


\begin{proposition}\label{prop:Wsfgen} 
For $n \ge 6$, the semigroup $\Wsf(n)$ is generated by the set $\Hsf(n) = \{a_1,a_2,a_3,b_1,\ldots,b_{n-2},c\}$ of transformations, where 
\be 
\item $a_1 = {1 \choose n}(2,\ldots,n-1)$, $a_2 = {1 \choose n}(2,3)$, $a_3 = {1 \choose n}{n-1 \choose 2}$; 
\item For $1 \le i \le n-2$, $b_i = {1 \choose n}{i+1 \choose n}$; 
\item $c = {Q \setminus \{1\} \choose n}{1 \choose 2} = [2,n,\ldots,n]$. 
\ee 
\end{proposition}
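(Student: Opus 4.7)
The plan is to prove $\langle \Hsf(n) \rangle = \Wsf(n)$ by double inclusion. One direction is routine: I verify that each generator lies in $\Wsf(n)$---the $a_j$'s and $b_j$'s all send $1$ to $n$, while $c$ sends $i$ to $n$ for every $i \in \{2,\ldots,n\}$---and invoke the closure under composition established in Proposition~\ref{prop:Wsf}. For the reverse inclusion, set $Q'' = \{2, \ldots, n-1\}$ and partition $\Wsf(n)$ by the defining disjunction: every $t$ is either of \emph{Type A}, with $1t = n$, $nt = n$, and $t|_{Q''}$ an arbitrary map from $Q''$ into $Q'' \cup \{n\}$; or of \emph{Type B}, with $1t \in Q''$ and $it = n$ for every $i \in \{2, \ldots, n\}$.

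Type B is handled quickly. The restrictions $a_1|_{Q''} = (2, 3, \ldots, n-1)$ and $a_2|_{Q''} = (2, 3)$ form a full cycle and a transposition on $Q''$, so by Theorem~\ref{thm:piccard} the subsemigroup $\langle a_1, a_2 \rangle$ acts on $Q''$ as the full symmetric group $\mathfrak{S}_{Q''}$ while sending $1$ to $n$ and fixing $n$. Given a Type B target with $1t = k \in Q''$, I pick $\pi \in \langle a_1, a_2 \rangle$ with $2\pi = k$; then $c\pi$ sends $1 \mapsto 2 \mapsto k$, each $i \in Q''$ to $n \mapsto n$, and $n \mapsto n$, so $c\pi = t$.

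Type A is handled using $a_3$ and the $b_i$'s. The restriction $a_3|_{Q''} = {n-1 \choose 2}$ is the ``returning'' singular transformation on $Q''$, so by Theorem~\ref{thm:salomaa} the restrictions $a_1|_{Q''}, a_2|_{Q''}, a_3|_{Q''}$ generate all of $\cT_{Q''}$. Hence $\langle a_1, a_2, a_3 \rangle$ coincides with the set of Type A transformations whose image on $Q''$ is contained in $Q''$, i.e.\ those whose ``kill set'' $S \defeq \{i \in Q'' : it = n\}$ is empty. For a general Type A transformation $t$ with $S \neq \emptyset$, I bound $|t(Q'' \setminus S)| \le |Q''| - |S| < |Q''|$ and choose $j^* \in Q'' \setminus t(Q'' \setminus S)$. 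Defining $\tilde t \in \cT_{Q''}$ to agree with $t$ on $Q'' \setminus S$ and to send all of $S$ to $j^*$, and extending by $1, n \mapsto n$ to an element $\tilde t_{\rm ext} \in \langle a_1, a_2, a_3 \rangle$, the composition $\tilde t_{\rm ext} \cdot b_{j^*-1}$ equals $t$: the surviving inputs $Q'' \setminus S$ are mapped into $Q'' \setminus \{j^*\}$, which $b_{j^*-1}$ fixes, while the inputs in $S$ are first routed through $j^*$, which $b_{j^*-1}$ kills.

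The main obstacle is precisely this last selection of $j^*$: since the only available ``killer'' $b_{j^*-1}$ acts indiscriminately on the state $j^*$, the construction must funnel all of $S$ through an auxiliary state that lies outside the image of $Q'' \setminus S$ under $t$. The strict inequality $|t(Q'' \setminus S)| < |Q''|$ guarantees such a $j^*$ exists whenever $S \ne \emptyset$, and the case $S = \emptyset$ is already dispatched by $\langle a_1, a_2, a_3 \rangle$; this dichotomy, together with the Type B construction, completes the reverse inclusion.
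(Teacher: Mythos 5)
Your proof is correct, and at the top level it follows the same route as the paper's: the same two-case split of $\Wsf(n)$ according to whether $1t=n$, the same identification (via Theorem~\ref{thm:salomaa}) of $\langle a_1,a_2,a_3\rangle$ with the transformations that fix $1\mapsto n$, $n\mapsto n$ and map $\{2,\ldots,n-1\}$ into itself, and the same treatment of the second case by composing $c$ with a permutation from $\langle a_1,a_2\rangle$. The one genuine difference is in how you realize the transformations with nonempty kill set. The paper takes $t'\in\langle a_1,a_2,a_3\rangle$ agreeing with $t$ off $E_t=\{i\mid it=n\}$ and post-composes with the product $b_{i_1-1}\cdots b_{i_k-1}$ over all of $E_t$, killing those states in place; as literally written this only reproduces $t$ when $t$ maps the surviving states away from $E_t$, since a surviving state $i$ with $it\in E_t$ would also be sent to $n$ by that product (take, e.g., $n=6$ and $t=[6,3,6,4,5,6]$, where $E_t=\{3\}$ but $2t=3$). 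Your construction sidesteps this: by funnelling all of $S$ through a single state $j^*$ chosen \emph{outside} $t(Q''\setminus S)$ --- which exists exactly because $S\neq\emptyset$ --- and then applying the single generator $b_{j^*-1}$, you guarantee that nothing in the image of the surviving states is killed. So your argument is not merely a restyling: it repairs a small gap in the paper's Case 1, and it uses one $b$-generator per target rather than $|S|$ of them. The remaining pieces (the forward inclusion via Proposition~\ref{prop:Wsf}, the Type~B count, and the use of Theorem~\ref{thm:piccard} for $\langle a_1,a_2\rangle$) match the paper and are all sound.
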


\begin{proof} 
Clearly $\Hsf(n) \subseteq \Wsf(n)$, and $\langle \Hsf(n) \rangle \subseteq \Wsf(n)$. We show in the following that $\Wsf(n) \subseteq \langle \Hsf(n) \rangle$. 

Let $Q' = \{2,\ldots,n-1\}$. By Theorem~\ref{thm:salomaa}, $a_1,a_2$ and $a_3$ together generate the semigroup $$\mathbf{Y} = \{t \in \Wsf(n) \mid \txt{for all} i \in Q', it \in Q'\},$$ which is isomorphic to $\cT_{Q'}$ and is contained in $\Wsf(n)$. Next, consider any $t \in \Wsf(n) \setminus \mathbf{Y}$. We have two cases: 
\be 
\item $1t = n$: Let $E_t = \{ i \in Q' \mid it = n \}$. Since $t \not\in \mathbf{Y}$, $E_t \neq \emptyset$. Suppose $E_t = \{i_1,\ldots,i_k\}$, for some $1 \le k \le n-2$. Then there exists $t' \in \mathbf{Y}$ such that, for all $i \not\in E_t$, $it' = it$. Let $s = b_{i_1-1} \cdots b_{i_k-1}$. Note that $E_ts = \{n\}$, and, for all $i \not\in E_t$, $i(t's) = (it')s = it$. So $t = t's \in \langle \Hsf(n) \rangle$. 
\item $1t \neq n$: If $1t = 2$, then $t = c$. Otherwise, $1t \in \{3,\ldots,n-1\} \subseteq Q'$, and we know from the above case that there exists $t' \in \Hsf(n)$ such that $2t' = 1t$. Then $1(ct') = 1t$, and $i(ct') = (ic)t' = n = it$, for all $i \in Q'$. Hence $t = ct'~\in~\langle \Hsf(n) \rangle$. 
\ee 

Therefore $\langle a_1,a_2,a_3,b_1,\ldots,b_{n-2},c \rangle = \Wsf(n)$. \qed
\end{proof}

\begin{theorem}\label{thm:wsfaut} 
For $n \ge 6$, let $\cA'_n = (Q, \Sig, \delta, 1, F)$ be the DFA with alphabet $\Sig = \{a_1,a_2,a_3,b_1,\ldots,b_{n-2},c\}$ of size $n+2$, where each letter defines a transformation as in Proposition~\ref{prop:Wsfgen}, and $F = \{2\}$. Then $L' = L(\cA'_n)$ has quotient complexity $\kappa(L') = n$ and syntactic complexity $\sigma(L') = \wsf(n)$. 
\end{theorem}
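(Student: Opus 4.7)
The plan is to prove this by establishing first that $\cA'_n$ is minimal (so that $\kappa(L') = n$), and then invoking Proposition~\ref{prop:Wsfgen} to conclude that the transition semigroup of $\cA'_n$ — which, by minimality, coincides with the syntactic semigroup of $L'$ — is exactly $\Wsf(n)$.

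For reachability, I would exhibit explicit words taking state $1$ to each other state. From the definition of $c=[2,n,\ldots,n]$, we get $1c=2$, so state $2$ is reached. Since $a_1$ restricted to $\{2,\ldots,n-1\}$ acts as the cycle $(2,3,\ldots,n-1)$, starting from state $2$ we reach state $k$ by $a_1^{k-2}$ for every $3\le k\le n-1$; combined with $c$, this gives $ca_1^{k-2}$ as a word from state $1$ to state $k$. Finally, $1a_1=n$ reaches state $n$. All $n$ states are therefore reachable.

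For distinguishability, state $2$ is the unique accepting state, so $\eps$ separates it from every other state. State $n$ is absorbing under all letters and nonaccepting, so its quotient is empty and it is distinct from the other $n-1$ states, which each accept some word by the reachability argument above (read backward). State $1$ is separated from every state in $\{3,\ldots,n-1,n\}$ by $c$: we have $1c=2\in F$, while $kc=n\notin F$ for all $k\ge 2$. Finally, for any two distinct states $k,k'\in\{3,\ldots,n-1\}$, I would use the word $a_1^{(2-k)\bmod(n-2)}$, which sends state $k$ to $2$ (accepting) but sends state $k'$ somewhere else in $\{3,\ldots,n-1\}$ (nonaccepting) since the cycle $(2,3,\ldots,n-1)$ has length $n-2$ and the residues $(2-k)\bmod(n-2)$ for distinct $k$ are distinct. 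Hence all $n$ states have pairwise distinct quotients, proving $\kappa(L')=n$ and that $\cA'_n$ is the quotient DFA of $L'$.

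Once minimality is established, the syntactic semigroup $T_{L'}$ is isomorphic to the transition semigroup $T_{\cA'_n}$. By construction the letters of $\Sig$ perform exactly the transformations in $\Hsf(n)$, and by Proposition~\ref{prop:Wsfgen} we have $\langle \Hsf(n)\rangle = \Wsf(n)$. Therefore $T_{L'}=\Wsf(n)$ and $\sigma(L')=|\Wsf(n)|=\wsf(n)=(n-1)^{n-2}+(n-2)$ by Proposition~\ref{prop:Wsf}. The only place where any real care is required is the distinguishability argument for the cyclic block $\{2,\ldots,n-1\}$; the modular arithmetic is routine, but one must verify that the chosen power of $a_1$ genuinely separates the two states and is not accidentally accepted from some third state.
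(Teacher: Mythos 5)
Your proof is correct and takes essentially the same route as the paper's: reachability of all states via $c$ and powers of $a_1$, distinguishability via $c$ and powers of $a_1$ acting on the cycle $(2,\ldots,n-1)$, and then Proposition~\ref{prop:Wsfgen} to identify the transition semigroup of the now-minimal DFA with $\Wsf(n)$. Your separating words are in fact stated slightly more carefully than the paper's (which has a harmless off-by-one, using $a_1^{i-1}$ where $a_1^{i-2}$ reaches state $i$ from state $2$).
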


\begin{proof}
First we show that $\kappa(L') = n$. From the initial state, we can reach state $2$ by $c$ and state $n$ by $a_1$. From state $2$ we can reach state $i$, $3 \le i \le n-1$, by $a_1^{i-1}$. So all the states in $Q$ are reachable. Now, the initial state accepts $c$, but all other states reject it. For $2 \le i \le n-2$, state $i$ accepts $a_1^{n-i}$, while all other states reject it. State $n$ is the empty state, which rejects all words. Thus all the states in $Q$ are distinct. 

By Proposition~\ref{prop:Wsfgen}, the syntactic semigroup of $L'$ is $\Wsf(n)$, and $\sigma(L') = \wsf(n)$. Also $L'$ is suffix-free by Proposition~\ref{prop:sf}. \qed
\end{proof}

We know that the upper bound on the  syntactic complexity of suffix-free regular languages is achieved by the largest semigroup contained in $\Bsf(n)$. We conjecture that $\Wsf(n)$ is such a semigroup. 
\medskip

\begin{conjecture}[Suffix-Free Regular Languages]
\label{con:sf}
If $L$ is a suffix-free regular language with $\kappa(L) = n \ge 6$, then $\sigma(L) \le \wsf(n)$ and this is a tight bound. 
\end{conjecture}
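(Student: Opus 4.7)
The tightness part of the conjecture is already delivered by Theorem~\ref{thm:wsfaut}, so what remains is the upper bound $\sigma(L) \le \wsf(n)$. By Proposition~\ref{prop:sf} the syntactic semigroup $T_L$ is a subsemigroup of $\Bsf(n)$, and the problem reduces to the purely combinatorial statement that every subsemigroup $S \subseteq \Bsf(n)$ satisfies $|S| \le |\Wsf(n)| = \wsf(n)$.

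The plan is a matching argument that generalizes the case analysis used for $n = 5$ in Theorem~\ref{thm:sfsmall}. Concretely, I would construct an injection $\phi \colon \Bsf(n) \setminus \Wsf(n) \to \Wsf(n)$ with the property that, for every $t \in \Bsf(n) \setminus \Wsf(n)$, the product $t \cdot \phi(t)$ falls outside $\Bsf(n)$. Given such a $\phi$, the pairs $\{t, \phi(t)\}$ are pairwise disjoint (by injectivity and because $\Wsf(n)$ and $\Bsf(n) \setminus \Wsf(n)$ are disjoint), and each pair contains at most one element of $S$: if both lay in $S$, then $t \cdot \phi(t)$ would be in the subsemigroup $S \subseteq \Bsf(n)$, a contradiction. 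Summing contributions over pairs gives
\begin{equation*}
|S| \;\le\; |\Bsf(n)| \;-\; |\Bsf(n) \setminus \Wsf(n)| \;=\; |\Wsf(n)| \;=\; \wsf(n).
\end{equation*}

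For the construction of $\phi$, I would extract from each $t \in \Bsf(n) \setminus \Wsf(n)$ the triple $(a, i^*, b)$, where $a = 1t \in \{2,\ldots,n-1\}$ and $i^* \in \{2,\ldots,n-1\}$ is the smallest index with $b = i^* t \ne n$; both exist precisely because $t \notin \Wsf(n)$. For the witness I would take $\phi(t) = u \in \Wsf(n)$ with $1u = n$ and $au = bu \ne n$, so that $1(tu) = au = bu = i^*(tu) \ne n$, which instantly violates the defining condition of $\Bsf(n)$. The remaining coordinates $ju$ for $j \in \{2,\ldots,n-1\} \setminus \{a, b\}$ are unconstrained, providing $(n-1)^{n-3}$ degrees of freedom which I would use to encode enough of the image profile of $t$ to force injectivity. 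Explicit counts using Lemma~\ref{lem:Gn} for small $n$ suggest that the pool of available witnesses is just large enough to accommodate all of $\Bsf(n) \setminus \Wsf(n)$.

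The principal obstacle is establishing injectivity of $\phi$ uniformly in $n$. A witness depending only on $(a, b)$ is heavily non-injective, since many transformations $t$ share the same first two coordinates but differ in their principal sequences and in the labeled subtrees $T_t(j)$ of the discussion preceding Lemma~\ref{lem:Gn}. Faithfully embedding all of that extra structural data into the free coordinates of a single $u \in \Wsf(n)$ requires a delicate case split on the length of the principal sequence of $t$ and on how the non-principal states are distributed among the trees $T_t(j)$, and some transformations may instead have to be paired with witnesses that produce a violation at level $j = 2$ of the $\Bsf(n)$ condition rather than at $j = 1$. The incompatibility check $tu \notin \Bsf(n)$ is a one-liner once $u$ has been chosen; it is the bookkeeping underlying the definition of $\phi$ that has so far resisted a clean uniform description, and is why the statement is posed as a conjecture for $n \ge 6$.
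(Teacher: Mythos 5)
There is a genuine gap: your map $\phi$ is never actually constructed. You correctly reduce the problem (via Proposition~\ref{prop:sf} and Theorem~\ref{thm:wsfaut}) to showing that every subsemigroup of $\Bsf(n)$ has at most $\wsf(n)$ elements, and your incompatibility mechanism is sound --- for $t \notin \Wsf(n)$ with $a = 1t \ne n$ and $b = i^{*}t \ne n$ (necessarily $a \ne b$, else $t \notin \Bsf(n)$ already), any $u$ with $1u = n$ and $au = bu \ne n$ lies in $\Wsf(n)$ and gives $1(tu) = i^{*}(tu) \ne n$, so $tu \notin \Bsf(n)$. But the entire burden of the argument is the injectivity of $\phi$, and on that point you only gesture at ``encoding enough of the image profile of $t$'' into the free coordinates of $u$, conceding yourself that the bookkeeping ``has so far resisted a clean uniform description.'' Until that injection is exhibited, nothing beyond Proposition~\ref{prop:Gncard} has been established. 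Note also that even the existence of \emph{some} injection requires $\bsf(n) - \wsf(n) \le \wsf(n)$, which you verify only implicitly for small $n$; the asymptotics of Lemma~\ref{lem:Gn} (whose $k=1$ term alone contributes $(n-2)^{n-2}$) make this plausible but it is not proved.

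For comparison: the paper does not prove this statement in general either --- it is stated as a conjecture precisely because the uniform pairing is missing. What the paper does prove is the case $n = 6$, and it does so by exactly your strategy, except that the $540$ pairs $\{s_i, t_i\}$ with $\langle t_i, s_i\rangle \not\subseteq \Bsf(6)$ are found by exhaustive enumeration in \emph{GAP} rather than by a closed-form rule (the same device as in Theorem~\ref{thm:sfsmall} for $n \le 5$). So your proposal is the natural generalization of the paper's method, but it stops at the same obstacle the authors did; as written it cannot be accepted as a proof of the conjecture, only as a proof outline whose key lemma (the injectivity of $\phi$) is open.
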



We prove the conjecture for $n = 6$:

\begin{proof} 
For $n = 6$,  $|\Bsf(6)| = 1169$ and $|\Wsf(6)| = 629$. Let $\{s_1,\ldots,s_{540}\} = \Bsf(6) \setminus \Wsf(6)$. For each $i$, we enumerated transformations in $\Wsf(6)$ using \emph{GAP} and found a unique $t_i \in \Wsf(6)$ such that $\langle t_i, s_i \rangle$ is not contained in $\Bsf(6)$. As in the proof of Theorem~\ref{thm:sfsmall}, for each $i$, at most one transformation in $\{t_i,s_i\}$ can appear in the syntactic semigroup of $L$. Then we can reduce the upper bound to $629$. This bound is met by the language $L'$ in Theorem~\ref{thm:wsfaut}; so it is tight. \qed
\end{proof}

\section{Bifix-Free Regular Languages}\label{sec:bf}

Let $L$ be a regular bifix-free language with $\kappa(L) = n$. From Sections~\ref{sec:pf} and~\ref{sec:sf} we have: 
\be
\item $L$ has $\eps$ as a quotient, and this is the only accepting quotient; 
\item $L$ has $\emptyset$ as a quotient; 
\item $L$ as a quotient is uniquely reachable.
\ee

Let $\cA$ be the quotient DFA of $L$, with $Q$ as the set of states. We assume that $1$ is the initial state, $n-1$ corresponds to the quotient $\eps$, and $n$ is the empty state. Consider the set
$$\Bbf(n) = \{ t \in \Bsf(n) \mid (n-1) t = n \}.$$
The following is an observation similar to Proposition~\ref{prop:sf}.

\begin{proposition}\label{prop:bf}
If $L$ is a regular language with quotient complexity $n$ and syntactic semigroup $T_L$, then the following hold: 
\be
\item If $L$ is bifix-free, then $T_L$ is a subset of $\Bbf(n)$. 
\item If $\eps$ is the only accepting quotient of $L$, and $T_L \subseteq \Bbf(n)$, then $L$ is bifix-free.
\ee
\end{proposition}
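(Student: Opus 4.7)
My plan is to derive both directions by layering a single observation about the accepting state onto the suffix-free results already proved in Proposition~\ref{prop:sf}. Throughout, I would work under the convention fixed in this section: $1$ is the initial state, $n-1$ corresponds to the quotient $\eps$ (and so is the unique accepting state), and $n$ is the empty state.

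For part~1, my first move is to note that a bifix-free language is a fortiori suffix-free, so Proposition~\ref{prop:sf}(1) already gives $T_L \subseteq \Bsf(n)$. It then suffices to check the extra condition $(n-1)t = n$ for every $t \in T_L$. Since $L$ is prefix-free, its only accepting quotient is $\eps$, and for every nonempty $v \in \Sig^+$ we have $(\eps)_v = \emptyset$. Translated into automaton language, this says that reading $v$ from state $n-1$ lands in the empty state $n$, i.e., $(n-1)t_v = n$. Hence $T_L \subseteq \Bbf(n)$.

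For part~2, suppose $\eps$ is the only accepting quotient of $L$ and $T_L \subseteq \Bbf(n) \subseteq \Bsf(n)$. The defining conditions of $\Bbf(n)$, namely $nt = n$ and $(n-1)t = n$, force $n$ to be a non-accepting absorbing state that is reachable from the accepting state; hence $n$ corresponds to the empty quotient. All the hypotheses of Proposition~\ref{prop:sf}(2) are therefore met, and $L$ is suffix-free. For prefix-freeness, I would argue directly: if $w \in L$, then reading $w$ leaves the automaton in state $n-1$, and any nonempty extension $x$ applies a transformation $t_x \in T_L$ that sends $n-1$ to $n$, so $wx$ ends in the empty state and is rejected. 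Thus no word of $L$ admits a nonempty extension in $L$, which is the prefix-free property. Combining both halves, $L$ is bifix-free.

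There is no real obstacle here; the argument is essentially a dictionary translation between the semigroup condition $(n-1)t = n$ and the prefix-free property ``nothing of positive length can follow an accepted word.'' The only point that deserves a moment of care is verifying, in part~2, that the empty-quotient hypothesis of Proposition~\ref{prop:sf}(2) is really present, which is immediate from the definitions of $\Bsf(n)$ and $\Bbf(n)$ once one notes that the alphabet (and hence $T_L$) is nonempty.
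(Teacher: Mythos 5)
Your proposal is correct and follows essentially the same route as the paper: both directions reduce to Proposition~\ref{prop:sf} for the suffix-free part, with the extra condition $(n-1)t=n$ handled via the fact that the quotient $\eps$ (state $n-1$) maps to the empty quotient under every nonempty word. The only cosmetic difference is that in part~2 you verify prefix-freeness directly from the transformations, whereas the paper invokes the characterization (recalled in Section~\ref{sec:pf}) that a language is prefix-free iff its unique accepting quotient is $\eps$; both are fine.
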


\begin{proof}\mbox{}
1. Since $L$ is suffix-free,  $T_L \subseteq \Bsf(n)$. Since $L$ is also prefix-free,  it has $\eps$ and $\emp$ as quotients. By assumption, $n-1 \in Q$ corresponds to the quotient $\eps$. Thus for any $t \in T_L$, $(n-1) t = n$, and so $T_L \subseteq \Bbf(n)$. 

2. Since $\eps$ is the only accepting quotient of $L$, $L$ is prefix-free, and $L$ has the empty quotient. Since $T_L \subseteq \Bbf(n) \subseteq \Bsf(n)$,  $L$ is suffix-free by Proposition~\ref{prop:sf}. Therefore $L$ is bifix-free. \qed
\end{proof}

\begin{lemma}\label{lem:Hn} 
For $n \ge 3$, we have $|\Bbf(n)| = M_n + N_n$, where 
\begin{eqnarray} 
\label{eq:H1} 
  M_n &=& \sum_{k=1}^{n-2} C^{n-3}_{k-1} (k-1)! \sum_{\begin{subarray}{c} 
      r_2 + \cdots + r_k + r \\ 
          = n-k-2
    \end{subarray}}
    C^{n-k-2}_{r_2,\ldots,r_k,r} (r+1)^r \prod_{j=2}^k S'_{r_j+1}(j-1), \\
\label{eq:H2} 
  N_n &=& \sum_{k=0}^{n-3} C^{n-3}_k k! \sum_{\begin{subarray}{c} 
      r_2 + \cdots + r_k + r \\ 
          = n-k-3
    \end{subarray}}
    C^{n-k-3}_{r_2,\ldots,r_k,r} (r+2)^r \prod_{j=2}^k S'_{r_j+1}(j-1) .
\end{eqnarray}
\end{lemma}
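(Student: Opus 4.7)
The plan is to adapt the enumeration of Lemma~\ref{lem:Gn} to the additional constraint $(n-1)t=n$ built into $\Bbf(n)$. I partition $\Bbf(n)=M\sqcup N$, where $M$ consists of those $t$ with $n-1\in\seq_t(1)$ and $N$ of those with $n-1\notin\seq_t(1)$, and show $|M|=M_n$ and $|N|=N_n$. Because the principal sequence always terminates at $n$, whenever $n-1$ lies in $\seq_t(1)$ the identity $(n-1)t=n$ forces $n-1$ to be the immediate predecessor of $n$; and whenever $n-1$ lies outside $\seq_t(1)$ the identity forces $n-1$ to be a ``detached'' vertex whose only image is $n$.

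For $|M|$, write the principal sequence as $1,1t,\ldots,1t^{k-1},1t^k=n-1,n$ for $1\le k\le n-2$. The interior $k-1$ entries are an ordered tuple of distinct elements from $Q\setminus\{1,n-1,n\}$, a set of size $n-3$, contributing $C^{n-3}_{k-1}(k-1)!$. The remaining $n-k-2$ vertices then decompose into labeled rooted trees $T_t(j)$ (for $j=2,\ldots,k$, including $T_t(k)$ rooted at $n-1$, which is admissible since the non-root nodes of $T_t(k)$ map toward the root rather than directly to $n$) together with a free set $E$ of size $r$, whose elements map to $E\cup\{n\}$ exactly as in Lemma~\ref{lem:Gn}. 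Summing the contributions produces $M_n$.

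For $|N|$, the principal sequence is $1,1t,\ldots,1t^k,n$ with its $k$ interior entries drawn from the $(n-3)$-element set $Q\setminus\{1,n-1,n\}$, giving $C^{n-3}_k k!$ choices for $0\le k\le n-3$. The element $n-1$ is then separated off: it cannot lie in any tree $T_t(j)$ because a non-root tree node must map to its tree parent, whereas $(n-1)t=n$ and $n$ is never a tree node. The remaining $n-k-3$ vertices partition into $r_j$ new nodes of each $T_t(j)$ plus a free set $E$ of size $r$, with the tree factor $S'_{r_j+1}(j-1)$ unchanged. The only new feature is that a free element $x\in E$ may now send $xt$ to any element of $E\cup\{n-1,n\}$, a set of size $r+2$: if $xt=n-1$ then $xt^i\in\{n-1,n\}$ for all $i\ge 1$, and since $\{n-1\}\cup E$ is disjoint from the principal sequence, the condition $1t^j\ne it^j$ whenever $1t^j\ne n$ is automatic. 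This yields $(r+2)^r$ free mappings and produces $N_n$.

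The main obstacle is the twin structural claim underpinning the $N$ count: that $n-1$ cannot sit inside any principal tree, and conversely that free elements may safely be routed through $n-1$ without violating the defining inequality of $\Bsf(n)$. Both facts rest on the uniqueness of the out-edge at each vertex of the transition graph together with the rigidity that the only successor of $n-1$ is $n$. Once these observations are in place, the rest is a direct application of the labeled rooted tree count $S_m(h)$ already used in Lemma~\ref{lem:Gn}.
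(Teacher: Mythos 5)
Your proof is correct and follows essentially the same route as the paper's: the same split into the cases $n-1\in\seq_t(1)$ and $n-1\notin\seq_t(1)$, the same counts of principal sequences, trees $T_t(j)$, and free mappings, with the $(r+2)^r$ factor arising from the same bookkeeping (you exclude $n-1$ from $E$ and enlarge the target set, the paper keeps $n-1$ in $E$; the count is identical). The extra justifications you supply --- that $n-1$ must be the last non-$n$ entry of the principal sequence in the first case, that it cannot sit inside any tree in the second, and that routing free elements through $n-1$ cannot violate the defining condition of $\Bsf(n)$ --- are all sound and merely make explicit what the paper leaves implicit.
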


\begin{proof} 
Let $t$ be any transformation in $\Bbf(n)$. Suppose $\seq_t(1) = 1,1t,\ldots,1t^k,n$, where $0 \le k \le n-2$. For $2 \le j \le k$, suppose tree $T_t(j)$ contains $r_j+1$ nodes, for some $r_j \ge 0$; then there are $S'_{r_j+1}(j-1)$ different trees $T_t(j)$. Let $E$ be the set of elements of $Q$ that are not in any tree $T_t(j)$ nor in the principal sequence~$\seq_t(1)$. Then there are two cases: 

\be 

\item $n-1 \in \seq_t(1)$: Since $(n-1)t = n$, we must have $1t^k = n-1$, and $k \ge 1$. So there are $C^{n-3}_{k-1} (k-1)!$ different $\seq_t(1)$. Let $r = |E| = (n-k-2) - (r_2 + \cdots + r_k)$. Then there are $C^{n-k-2}_{r_2,\ldots,r_k,r}$ tuples $(r_2,\ldots,r_k,r)$. For any $x \in E$, its image $xt$ can be chosen from $E \cup \{n\}$. Then the number of transformations $t$ in this case is $M_n$. 

\item $n-1 \not\in \seq_t(1)$: Then $k \le n-3$, and there are $C^{n-3}_k k!$ different $\seq_t(1)$. Note that $n-1 \in E$, and $(n-1)t = n$ is fixed. Let $r = |E \setminus \{n-1\}| = (n-k-3) - (r_2 + \cdots + r_k)$. Then there are $C^{n-k-3}_{r_2,\ldots,r_k,r}$ tuples $(r_2,\ldots,r_k,r)$. For any $x \in E \setminus \{n-1\}$, $xt$ can be chosen from $E \cup \{n\}$. Thus the number of transformations $t$ in this case is $N_n$. 

\ee

Altogether we have the desired formula. \qed
\end{proof}

Let $\bbf(n) = |\Bbf(n)|$. From Proposition~\ref{prop:bf} and Lemma~\ref{lem:Hn} we have 
\begin{proposition}\label{prop:Hncard} 
For $n \ge 3$, if $L$ is a bifix-free regular language with quotient complexity $n$, then its syntactic complexity $\sigma(L)$ satisfies that $\sigma(L) \le \bbf(n)$, where $\bbf(n)$ is the cardinality of $\Bbf(n)$ as in Lemma~\ref{lem:Hn}.
\end{proposition}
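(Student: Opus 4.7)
The plan is short: the statement is a direct corollary of the two preceding results in Section~\ref{sec:bf}. By Proposition~\ref{prop:bf}(1), every transformation in the syntactic semigroup $T_L$ of a bifix-free regular language $L$ with $\kappa(L)=n$, with its quotient DFA labeled so that $1$ is the initial state, $n-1$ is the unique accepting state (corresponding to the quotient $\eps$), and $n$ is the empty state, belongs to $\Bbf(n)$. Since $T_L$ is isomorphic to the transition semigroup $T_\cA$, taking cardinalities gives $\sigma(L)=|T_L|\le|\Bbf(n)|=\bbf(n)$.

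The explicit value of $\bbf(n)$ is then supplied by Lemma~\ref{lem:Hn}, which partitions $\Bbf(n)$ according to whether the accepting state $n-1$ lies on the principal sequence $\seq_t(1)$ or is instead an element of the residual set $E$, contributing the two summands $M_n$ and $N_n$ respectively. Within each class the count proceeds exactly as in the proof of Lemma~\ref{lem:Gn}: condition on the length $k$ of the principal sequence, on the size distribution $(r_2,\ldots,r_k,r)$ of the trees $T_t(j)$ and the residual set $E$, multiply by the number $S'_{r_j+1}(j-1)$ of labeled rooted trees at each principal node, and in the end raise $r+1$ (respectively $r+2$) to the power $r$ to account for the mappings of $E$ into $E\cup\{n\}$; the extra freedom in the second case comes precisely from the fact that $(n-1)t=n$ is forced and so $n-1$ does not have to be counted among the $r$ free images.

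Since the hard combinatorial work has already been carried out, there is no real obstacle. The only subtlety I would be careful about is that the labeling convention under which $\Bbf(n)$ is defined (initial state $1$, accepting state $n-1$, empty state $n$) matches the one imposed on the quotient DFA of a bifix-free $L$ at the start of Section~\ref{sec:bf}; this is automatic by construction, and it is what legitimizes the inclusion $T_L\subseteq\Bbf(n)$ used in the first step.
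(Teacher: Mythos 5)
Your proposal is correct and follows exactly the paper's route: the proposition is stated there as an immediate consequence of Proposition~\ref{prop:bf} (giving $T_L\subseteq\Bbf(n)$) and Lemma~\ref{lem:Hn} (giving $|\Bbf(n)|=M_n+N_n$), with no further argument needed. Your recap of the counting in Lemma~\ref{lem:Hn}, including the origin of the exponents $(r+1)^r$ versus $(r+2)^r$, matches the paper's case split on whether $n-1$ lies on the principal sequence.
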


For $2 \le n \le 4$, the set $\Bbf(n)$ is a semigroup. But for $n \ge 5$, it is not a semigroup because $s_1 = [2,3,n,\ldots,n,n]$, $s_2 = [n,3,3,n,\ldots,n,n] \in \Bbf(n)$ while $s_1s_2 = [3,3,n,\ldots,n,n] \not\in \Bbf(n)$. Hence $\bbf(n)$ is not a tight upper bound on the syntactic complexity of bifix-free regular languages in general. We look for a large semigroup contained in $\Bbf(n)$ that can be the syntactic semigroup of a bifix-free regular language. Let 
\begin{eqnarray*}
\Vbf(n) = \{ t \in \Bbf(n) &\mid& \txt{for all} i, j \in Q \txt{where} i \neq j, \\
&& \txt{we have} it = jt = n \txt{or} it \neq jt \}.
\end{eqnarray*}
(The reason for using the superscript $\le 5$ will be made clear in Theorem~\ref{thm:bfsmall}.)

\begin{proposition}\label{prop:Vncard}
For $n \ge 3$, $\Vbf(n)$ is a semigroup contained in $\Bbf(n)$ with cardinality
\begin{equation*}
\vbf(n) = |\Vbf(n)| = \sum_{k=0}^{n-2} \big( C^{n-2}_k \big)^2 (n-2-k)!
\end{equation*}
\end{proposition}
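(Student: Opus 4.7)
The plan is to verify three things: $\Vbf(n) \subseteq \Bbf(n)$ (immediate from the definition), closure of $\Vbf(n)$ under composition, and the cardinality formula by a direct count.

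For closure, let $t_1, t_2 \in \Vbf(n)$. I first check that $t_1 t_2$ satisfies the structural conditions of $\Bbf(n)$: $1 \notin \timg(t_1 t_2)$ holds because $1 \notin \timg(t_2)$; $n(t_1 t_2) = (nt_1)t_2 = nt_2 = n$; and $(n-1)(t_1 t_2) = ((n-1)t_1)t_2 = nt_2 = n$. Then I verify the injectivity property defining $\Vbf(n)$: given distinct $i,j \in Q$, either $it_1 = jt_1 = n$ (so both compose to $n$), or $it_1 \ne jt_1$, in which case applying the $\Vbf(n)$ property of $t_2$ to these two distinct inputs yields that $(it_1)t_2$ and $(jt_1)t_2$ are either both $n$ or pairwise distinct. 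The principal-sequence condition inherited from $\Bsf(n)$ then comes along automatically: any non-$n$ value $1(t_1t_2)^j$ lies in $\{2,\ldots,n-1\}$, and by the injectivity property it cannot coincide with any other non-$n$ value $i(t_1t_2)^j$ for $i \ne 1$.

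For the cardinality, fix $t \in \Vbf(n)$. The values $nt = n$ and $(n-1)t = n$ are forced, so $t$ is determined by its restriction to $\{1,\ldots,n-2\}$, which must take values in $Q \setminus \{1\} = \{2,\ldots,n\}$ subject to the rule that distinct images are distinct unless they both equal $n$. I parametrize by $k = |\{i \in \{1,\ldots,n-2\} : it = n\}|$, so $0 \le k \le n-2$. There are $C^{n-2}_k$ choices of which $k$ elements collapse to $n$, and the remaining $n-2-k$ elements map injectively into the set $\{2,\ldots,n-1\}$ of size $n-2$, contributing a further $C^{n-2}_{n-2-k}(n-2-k)! = C^{n-2}_k (n-2-k)!$ choices. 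Summing over $k$ yields $\vbf(n) = \sum_{k=0}^{n-2} \big(C^{n-2}_k\big)^2 (n-2-k)!$. No step poses a serious obstacle; the main care is in the composition argument, where one must split according to whether $t_1$ already collapses $\{i,j\}$ to $n$ before invoking the injectivity property of $t_2$, and then notice that the $\Bsf(n)$ principal-sequence clause need not be checked separately because it is implied by the pairwise distinctness of non-$n$ images.
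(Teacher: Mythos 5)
Your proposal is correct and follows essentially the same route as the paper: the cardinality count (parametrize by the number $k$ of elements of $\{1,\ldots,n-2\}$ sent to $n$, then count injections of the rest into $\{2,\ldots,n-1\}$) is identical, and your observation that the principal-sequence clause of $\Bsf(n)$ is automatic given the pairwise-distinctness-mod-$n$ property is exactly what the paper relies on implicitly. The only difference is that the paper obtains closure by writing $\Vbf(n) = \Vsf(n) \cap \Bbf(n)$ and citing the already-proved fact that $\Vsf(n)$ is a semigroup, whereas you re-verify the same composition computation directly; both are fine.
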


\begin{proof}
First, note that $\Vbf(n) = \Vsf(n) \cap \Bbf(n)$, and that $\Vsf(n)$ is a semigroup contained in $\Bsf(n)$ by Proposition~\ref{prop:Pncard}. For any $t_1,t_2 \in \Vbf(n)$, we have $t_1t_2 \in \Vsf(n)$, and $(n-1)t_1t_2 = nt_2 = n$; so $t_1t_2 \in \Bbf(n)$. Then $t_1t_2 \in \Vbf(n)$, and $\Vbf(n)$ is a semigroup contained in~$\Bbf(n)$. 

Pick any $t \in \Vbf(n)$. Note that $(n-1)t = n$ and $nt = n$ are fixed, and $1 \not\in \timg(t)$. Let $Q' = Q \setminus \{n-1,n\}$, $E = \{i \in Q' \mid it = n\}$, and $D = Q' \setminus E$. Suppose $|E| = k$, where $0 \leq k \leq n-2$; then there are ${C^{n-2}_k}$ choices of $E$. Elements of $D$ are mapped to pairwise different elements of $Q \setminus \{1,n\}$; then there are ${C^{n-2}_{n-2-k}}(n-2-k)!$ different mappings $t|_D$. Altogether, we have
$  |\Vbf(n)| = \sum_{k=0}^{n-2} \big( C^{n-2}_k \big)^2 (n-2-k)! $ \qed
\end{proof}

\begin{proposition}\label{prop:bfgen}
For $n \ge 3$, let $Q' = Q \setminus \{n-1, n\}$ and $Q'' = Q \setminus \{1,n\}$. Then the semigroup $\Vbf(n)$ is 
generated by
$$\Gbf(n) = \{t \in \Vbf(n) \mid Q't = Q'' \txt{and} it \neq jt \txt{for all} i,j \in Q'\}.$$
\end{proposition}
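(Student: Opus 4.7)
My plan is to prove the nontrivial inclusion $\Vbf(n) \subseteq \langle \Gbf(n) \rangle$ by induction on the cardinality of $E_t = \{i \in Q' \mid it = n\}$; the reverse inclusion is immediate since $\Gbf(n) \subseteq \Vbf(n)$ and $\Vbf(n)$ is a semigroup by Proposition~\ref{prop:Vncard}. For the base case $|E_t| = 0$, the restriction $t|_{Q'}$ is an injection from $Q'$ into $Q''$, and since $|Q'| = |Q''| = n-2$ it must be a bijection onto $Q''$, so $t \in \Gbf(n)$.

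For the inductive step, fix $t \in \Vbf(n)$ with $|E_t| = k \ge 1$ and assume every $s \in \Vbf(n)$ with $|E_s| < k$ lies in $\langle \Gbf(n) \rangle$. Pick any $e \in E_t$ and choose $g \in \Gbf(n)$ with $eg = n-1$; such a $g$ exists because we may take any bijection $Q' \to Q''$ sending $e$ to $n-1$ and the remaining elements of $Q'$ bijectively onto $Q' \cap Q'' = \{2,\ldots,n-2\}$. Then define a transformation $u$ of $Q$ by setting $(n-1)u = nu = n$, by $xu = (xg^{-1})t$ for each $x \in \{2,\ldots,n-2\}$, and by choosing $1u$ from the set $Q'' \setminus D_t t$, where $D_t = Q' \setminus E_t$. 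A direct calculation shows $g \cdot u = t$: indeed $e(gu) = (eg)u = (n-1)u = n = et$, and for $i \in Q' \setminus \{e\}$ the value $ig$ lies in $\{2,\ldots,n-2\}$, so $i(gu) = (ig)u = ((ig)g^{-1})t = it$.

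The main obstacle lies in choosing $1u$ so that $u$ ends up in $\Vbf(n)$ with $|E_u| = k-1$. The key arithmetic point is that since $t|_{D_t}$ is injective, $D_t t \subseteq Q''$ has cardinality $n-2-k$, so $Q'' \setminus D_t t$ has cardinality $k \ge 1$ and is nonempty; selecting $1u$ from this set guarantees $1u \ne 1$, $1u \ne n$, and that $1u$ does not collide with any of the non-$n$ values of $u$ on $\{2,\ldots,n-2\}$ (all of which lie in $D_t t$). A routine check then confirms that $u \in \Vbf(n)$ and that $E_u = \{x \in \{2,\ldots,n-2\} \mid xg^{-1} \in E_t \setminus \{e\}\}$ has cardinality $k-1$. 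By induction $u \in \langle \Gbf(n) \rangle$, and hence $t = g \cdot u \in \langle \Gbf(n) \rangle$, completing the proof.
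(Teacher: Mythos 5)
Your proof is correct and follows essentially the same route as the paper's: induction on $|E_t|$, writing $t$ as a product of one generator (which sends a chosen element of $E_t$ to $n-1$) with an element of $\Vbf(n)$ whose $|E|$ is reduced by one. Your use of $g^{-1}$ and the explicit choice of $1u$ from $Q''\setminus D_t t$ make the bookkeeping somewhat cleaner than the paper's indexed construction of $t'$ and $s$, but the underlying idea is identical.
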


\begin{proof}\label{proof:bfgen}
We want to show that $\Vbf(n) = \langle \Gbf(n) \rangle$. Since $\Gbf(n) \subseteq \Vbf(n)$, we have $\langle \Gbf(n) \rangle \subseteq \Vbf(n)$. 
Let $t \in \Vbf(n)$. By definition, $(n-1)t = nt = n$. Let $E_{t} = \{i \in Q' \mid it = n\}$. If $E_{t} = \emptyset$, then $t \in \Gbf(n)$; 
otherwise,  there exists $x \in Q''$ such that $x \not\in \timg(t)$. We prove by induction on $|E_{t}|$ that $t \in \langle \Gbf(n) \rangle$.

First note that, for all $t \in \Gbf(n)$, $t|_{Q'}$ is an injective mapping from $Q'$ to $Q''$. Consider  $E_{t} = \{i\}$ for some $i \in Q'$. 
Since $|E_{t}| = 1$, $\timg(t) \cup \{x\} = Q''$. Let $t_1,t_2 \in \Gbf(n)$ be defined by
\be
\item $jt_1 = j+1$ for $j = 1,\ldots,i-1$, $it_1 = n-1$, $jt_1 = j$ for $j = i+1,\ldots,n-2$, 
\item $1t_2 = x$, $jt_2 = (j-1)t$ for $j = 2,\ldots,i$, $jt_2 = jt$ for $j = i+1,\ldots,n-2$. 
\ee
Then $t_1t_2 = t$, and $t \in \langle \Gbf(n) \rangle$.
\goodbreak

Assume that any transformation $t \in \Vbf(n)$ with $|E_{t}| < k$ can be generated by $\Gbf(n)$, where $1 < k < n-2$. 
Consider $t \in \Vbf(n)$ with $|E_{t}| = k$. 
Suppose $E_{t} = \{e_1,\ldots,e_{k-1},e_k\}$, and let $D_{t} = Q' \setminus E_{t} = \{d_1,\ldots,d_l\}$, where $l = n - 2 - k$. 
By assumption, all $s \in \Vbf(n)$ with $|E_{s}| = k-1$ can be generated by $\Gbf(n)$. 
Let $s$ be such that $E_{s} = \{1,\ldots,k-1\}$; then $1s = \cdots = (k-1)s = n$. In addition, let $ks = x$, and let $(k+j)s = d_jt$ for $j = 1,\ldots,l$. Let $t' \in \Gbf(n)$ be such that $e_jt' = j$ for $j = 1,\ldots,k-1$, $kt' = n-1$, and $d_jt' = k+j$ for $j = 1,\ldots,l$. Then $t's = t$, and $t \in \langle \Gbf(n) \rangle$.
Therefore, $\Vbf(n) = \langle \Gbf(n) \rangle$. \qed
\end{proof}

\begin{theorem}\label{thm:bfaut}
For $n \ge 3$, let $\cA_n = (Q,\Sig,\delta,1,F)$ be the DFA 
with alphabet $\Sig$ of size $(n-2)!$, where each $a \in \Sig$ defines a distinct transformation $t_a \in \Gbf(n)$, and $F = \{n-1\}$. 
Then $L = L(\cA_n)$ has quotient complexity $\kappa(L) = n$, and syntactic complexity $\sigma(L) = \vbf(n)$. 
Moreover, $L$ is bifix-free.
\end{theorem}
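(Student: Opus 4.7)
The plan is to verify minimality of $\cA_n$ (giving $\kappa(L)=n$) and then invoke the earlier propositions to deduce $\sigma(L) = \vbf(n)$ and bifix-freeness, so most of the work lies in showing that every state is both reachable and pairwise distinguishable. The key structural fact I would exploit throughout is that $\Gbf(n)$ consists of exactly the $(n-2)!$ bijections from $Q' = \{1,\ldots,n-2\}$ onto $Q'' = \{2,\ldots,n-1\}$, each extended by $(n-1)t = nt = n$, so the alphabet offers complete flexibility for rerouting states in $Q'$ to targets in $Q''$.

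For reachability, state $1$ is the initial state. For each $j \in Q'' = \{2,\ldots,n-1\}$, I would pick any $t_a \in \Gbf(n)$ with $1 t_a = j$; since we have every bijection from $Q'$ to $Q''$ available, such an $a$ exists, giving $\delta(1,a) = j$. State $n$ is then reached from $n-1$ by any letter of $\Sig$, since $(n-1)t = n$ for all $t \in \Gbf(n)$. For distinguishability, state $n-1$ is the only accepting state, and state $n$ is empty (every generator fixes $n$), so these two are distinguished from all others. To separate any two distinct states $i,j \in Q'$, I would use injectivity of each generator on $Q'$: there is a unique letter $a$ with $i t_a = n-1$, and this same $a$ cannot send $j$ to $n-1$ since $t_a$ is a bijection on $Q'$. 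Hence $a$ is accepted from $i$ but rejected from $j$, and symmetrically for $j$. Thus all $n$ states are pairwise inequivalent, giving $\kappa(L) = n$.

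For the syntactic complexity, the transition semigroup of $\cA_n$ is generated by $\{t_a : a \in \Sig\} = \Gbf(n)$, so by Proposition~\ref{prop:bfgen} it equals $\Vbf(n)$, yielding $\sigma(L) = \vbf(n)$ via the isomorphism between the transition semigroup of the quotient DFA and the syntactic semigroup of $L$. Finally, since $\Vbf(n) \subseteq \Bbf(n)$ and $F = \{n-1\}$ is the unique accepting state, Proposition~\ref{prop:bf}(2) gives that $L$ is bifix-free. The proof is then essentially mechanical once the reachability/distinguishability arguments are in place; I do not anticipate a genuine obstacle, though the distinguishability of states in $Q'$ is the step most worth stating explicitly, since it hinges crucially on the bijectivity built into the definition of $\Gbf(n)$.
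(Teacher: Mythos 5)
Your proof is correct and follows essentially the same route as the paper's: establish reachability and distinguishability of all $n$ states to get $\kappa(L)=n$ (the paper does this with powers of the single generator $[2,3,\ldots,n-1,n,n]$ rather than exploiting the full alphabet, but this is immaterial), then invoke Proposition~\ref{prop:bfgen} to identify the transition semigroup with $\Vbf(n)$ and Proposition~\ref{prop:bf} for bifix-freeness. One small slip: the letter $a$ with $it_a=n-1$ is not unique for $n\ge 5$ (there are $(n-3)!$ such bijections), but your argument only needs existence together with injectivity on $Q'$, so it stands as written.
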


\begin{proof}\label{proof:bfaut}
We first show that all the states of $\cA_n$ are reachable. 
Note that there exists $a \in \Sig$ such that $t_a = [2,3,\ldots,n-1,n,n] \in \Gbf(n)$. 
State $1 \in Q$ is the initial state, and
 $a^{i-1}$ reaches  state $i \in Q$ for $i = 2,\ldots,n$. 
Furthermore, for $1 \leq i \leq n-1$, state $i$ accepts  $ a^{n-1-j}$, while for $j \neq i$,  state $j$ rejects it. Also, $n$ is the empty state. Thus all the states of $\cA_n$ are distinct, and $\kappa(L) = n$. 

By Proposition~\ref{prop:bfgen}, the syntactic semigroup of $L$ is $\Vbf(n)$. Hence the syntactic complexity of $L$ is $\sigma(L) = \vbf(n)$. By Proposition~\ref{prop:bf}, $L$ is bifix-free. \qed
\end{proof}

\begin{theorem}\label{thm:bfsmall} 
For $2 \le n \le 5$, if a bifix-free regular language $L$ has quotient complexity $\kappa(L) = n$, then $\sigma(L) \le \vbf(n)$, and this bound is tight. 
\end{theorem}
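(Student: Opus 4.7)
The plan is to parallel the proof of Theorem~\ref{thm:sfsmall}: first use Proposition~\ref{prop:bf} to get the crude upper bound $\bbf(n)$, then tighten it to $\vbf(n)$ (by direct comparison of cardinalities for $n\le 4$, and by a conflict-pair enumeration for $n=5$), and finally invoke Theorem~\ref{thm:bfaut} (together with a trivial example for $n=2$) to establish tightness.

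For the upper bound when $n\in\{2,3,4\}$, the remark just after Proposition~\ref{prop:Hncard} tells us that $\Bbf(n)$ is itself a semigroup, so by Proposition~\ref{prop:bf} we immediately obtain $\sigma(L)\le \bbf(n)$. The plan is to compute $\bbf(n)$ from Lemma~\ref{lem:Hn} and $\vbf(n)$ from Proposition~\ref{prop:Vncard} and check that they coincide: explicitly, $\bbf(2)=\vbf(2)=1$, $\bbf(3)=\vbf(3)=2$, and $\bbf(4)=\vbf(4)=7$. This immediately yields $\sigma(L)\le \vbf(n)$ for these three values of $n$.

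The only real work is the case $n=5$, where one computes $\bbf(5)=41$ but $\vbf(5)=34$, and $\Bbf(5)$ is no longer a semigroup. I would proceed exactly as in the suffix-free case: enumerate the $7$ transformations of $\Bbf(5)\setminus \Vbf(5)$ using \emph{GAP}, and for each such $s$ exhibit a witness $t\in \Vbf(5)$ such that the semigroup $\langle s,t\rangle$ contains an element outside $\Bbf(5)$. Since $T_L\subseteq \Bbf(5)$ is closed under composition, at most one member of each pair $\{s,t\}$ can lie in $T_L$, which forces $\sigma(L)\le \vbf(5)$. The main obstacle is producing these witnesses; it is a finite check, and the uniqueness of the witness observed for the suffix-free case suggests the same behaviour will arise here, but there is no a priori combinatorial shortcut, so \emph{GAP} is essential.

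For tightness, the case $n=2$ is realized by $L=\{\eps\}$, which has $\sigma(L)=1=\vbf(2)$. For $3\le n\le 5$, Theorem~\ref{thm:bfaut} directly produces a bifix-free regular language with $\kappa(L)=n$ and $\sigma(L)=\vbf(n)$, completing the proof.
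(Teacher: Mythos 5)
Your proposal is correct and follows essentially the same route as the paper: reduce to $\bbf(n)$ via Proposition~\ref{prop:bf}, observe $\bbf(n)=\vbf(n)$ for $n\le 4$, handle $n=5$ by pairing each of the $7$ elements of $\Bbf(5)\setminus\Vbf(5)$ with a conflicting witness in $\Vbf(5)$ (the paper lists these pairs explicitly), and cite Theorem~\ref{thm:bfaut} for tightness. Your explicit treatment of $n=2$ via $L=\{\eps\}$ is a small but welcome addition, since Theorem~\ref{thm:bfaut} is only stated for $n\ge 3$.
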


\begin{proof}
We know by Proposition~\ref{prop:bf} that the upper bound on the syntactic complexity of bifix-free regular languages is reached by the largest semigroup contained in $\Bbf(n)$. Since $\vbf(n) = \bbf(n)$ for $n = 2$, $3$, and $4$, $\vbf(n)$ is an upper bound, and it is tight by Theorem~\ref{thm:bfaut}. 

For $n = 5$, we have $\bbf(5) = |\Bbf(5)| = 41$, and $\vbf(5) = |\Vbf(5)| = 34$. Let $\Bbf(5) \setminus \Vbf(5) = \{\tau_1,\ldots,\tau_7\}$. We found for each $\tau_i$ a unique $t_i \in \Vbf(5)$ such that the semigroup $\langle \tau_i, t_i \rangle$ is not a subset of $\Bbf(5)$:
$$\begin{array}{ll}
 \tau_1 = [ 2, 4, 4, 5, 5 ], \quad & \quad t_1 = [ 3, 4, 2, 5, 5 ]; \\
 \tau_2 = [ 3, 4, 4, 5, 5 ], \quad & \quad t_2 = [ 3, 5, 2, 5, 5 ]; \\
 \tau_3 = [ 4, 2, 2, 5, 5 ], \quad & \quad t_3 = [ 2, 4, 3, 5, 5 ]; \\
 \tau_4 = [ 4, 3, 3, 5, 5 ], \quad & \quad t_4 = [ 2, 5, 3, 5, 5 ]; \\
 \tau_5 = [ 5, 2, 2, 5, 5 ], \quad & \quad t_5 = [ 3, 2, 4, 5, 5 ]; \\
 \tau_6 = [ 5, 3, 3, 5, 5 ], \quad & \quad t_6 = [ 2, 3, 4, 5, 5 ]; \\
 \tau_7 = [ 5, 4, 4, 5, 5 ], \quad & \quad t_7 = [ 3, 2, 5, 5, 5 ].
\end{array}$$

Since $\langle \tau_i, t_i \rangle \subseteq T_L$, if both $\tau_i$ and $t_i$ are in $T_L$,
then $T_L \not\subseteq \Bbf(5)$, and  $L$ is not bifix-free by Proposition~\ref{prop:bf}. Thus, for $1 \le i \le 7$, at most one of $\tau_i$ and $t_i$ can appear in $T_L$, and $|T_L| \le 34$. Since $|\Vbf(5)| = 34$ and $\Vbf(5)$ is a semigroup, we have $\sigma(L) \le 34 = \vbf(5)$ as the upper bound for $n = 5$. This bound is reached by the DFA $\cA_5$ in Theorem~\ref{thm:bfaut}. \qed 
\end{proof}

For $n \ge 6$, the semigroup $\Vbf(n)$ is no longer the largest semigroup contained in $\Bbf(n)$. We find another large semigroup $\Wbf(n)$ suitable for bifix-free regular languages. Let 
\begin{eqnarray*}
  \Uf^1_n &=& \{t \in \Bbf(n) \mid 1t = n \}, \\ 
  \Uf^2_n &=& \{t \in \Bbf(n) \mid 1t = n-1\}, \\
  \Uf^3_n &=& \{t \in \Bbf(n) \mid 1t \not\in \{n, n-1\}, \txt{and} it \in \{n-1, n\} \txt{for all} i \neq 1\}, 
\end{eqnarray*}
and let $\Wbf(n) = \Uf^1_n \cup \Uf^2_n \cup \Uf^3_n$. 
When $2 \le n \le 4$, we have $\Wbf(n)~=~\Vbf(n)$, and these cases were already discussed. So we are only interested in larger values~of~$n$. 

\begin{proposition}\label{prop:Rcard} 
For $n \ge 5$, $\Wbf(n)$ is a semigroup contained in $\Bbf(n)$ with cardinality 
$$ \wbf(n) = |\Wbf(n)| = (n-1)^{n-3} + (n-2)^{n-3} + (n-3)2^{n-3}. $$
\end{proposition}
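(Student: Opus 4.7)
The proof has three parts: showing $\Wbf(n) \subseteq \Bbf(n)$, verifying closure under composition, and counting $|\Wbf(n)|$. Containment is by definition, since each $\Uf^i_n$ is a subset of $\Bbf(n)$, and we will see below that the three sets $\Uf^1_n$, $\Uf^2_n$, $\Uf^3_n$ are pairwise disjoint (they are distinguished by the value of $1t$), so the cardinality is additive.

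For closure, my plan is to fix $t_1,t_2 \in \Wbf(n)$ and split on which $\Uf^i_n$ contains $t_1$, exploiting the fact that $(n-1)t_2 = nt_2 = n$ holds for any $t_2 \in \Bbf(n)$. If $t_1 \in \Uf^1_n$, then $1(t_1t_2) = nt_2 = n$; if $t_1 \in \Uf^2_n$, then $1(t_1t_2) = (n-1)t_2 = n$; so in both cases $t_1t_2$ lands in $\Uf^1_n$. If $t_1 \in \Uf^3_n$, let $k = 1t_1 \in \{2,\ldots,n-2\}$. For every $i \neq 1$ we have $it_1 \in \{n-1,n\}$, so $i(t_1t_2) = n$. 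Then $1(t_1t_2) = kt_2$, and the three subcases $kt_2 = n$, $kt_2 = n-1$, $kt_2 \in \{2,\ldots,n-2\}$ place $t_1t_2$ in $\Uf^1_n$, $\Uf^2_n$, $\Uf^3_n$ respectively. It remains to check $t_1t_2 \in \Bbf(n)$: the properties $1 \notin \timg(t_1t_2)$, $n(t_1t_2) = n$, and $(n-1)(t_1t_2) = n$ are immediate from the analogous properties of $t_2$ together with $(n-1)t_1 = n$. The suffix-free condition from $\Bsf(n)$ is trivial when $1(t_1t_2) = n$, and in the remaining cases follows because, as computed above, $i(t_1t_2) = n$ for all $1 < i < n$, which makes $1(t_1t_2)$ distinct from $i(t_1t_2)$ at the first step and forces $1(t_1t_2)^j = n$ for all larger $j$.

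For the cardinality I count the three classes separately. For $\Uf^1_n$, the values $1t = n$, $(n-1)t = n$, $nt = n$ are fixed, leaving $n-3$ positions in $\{2,\ldots,n-2\}$ each free to take any value in $\{2,\ldots,n\}$, yielding $(n-1)^{n-3}$. For $\Uf^2_n$, we have $1t = n-1$ and the suffix-free condition $1t \neq it$ for $1 < i < n$ forces $it \neq n-1$ on these positions; since they also cannot map to $1$, each of the $n-3$ positions has $n-2$ choices, giving $(n-2)^{n-3}$. For $\Uf^3_n$, the value $1t$ has $n-3$ choices in $\{2,\ldots,n-2\}$, and each of the $n-3$ positions in $\{2,\ldots,n-2\}$ must land in $\{n-1,n\}$, giving $(n-3)2^{n-3}$. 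Summing produces the claimed formula.

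The main obstacle I anticipate is verifying the suffix-free condition for the composition $t_1 t_2$, since $\Bsf(n)$ is not itself a semigroup; the decisive observation is that in every case where $1(t_1t_2) \neq n$, every other state is sent to $n$ by $t_1 t_2$, so the suffix-free condition is satisfied at step $j=1$ and collapses to $1(t_1 t_2)^j = n$ for $j \geq 2$. Everything else is routine checking against the three defining constraints of $\Bbf(n)$.
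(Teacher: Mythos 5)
Your proof is correct and follows essentially the same route as the paper: a case analysis on which of the disjoint classes $\Uf^1_n$, $\Uf^2_n$, $\Uf^3_n$ contains the factors to establish closure, followed by separate counts of the three classes. You are in fact somewhat more explicit than the paper in verifying that each product satisfies the suffix-free condition defining $\Bsf(n)$ (the paper leaves this implicit), which is a worthwhile addition given that $\Bsf(n)$ itself is not a semigroup.
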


\begin{proof} 

First we show that $\Uf^1_n$ is a semigroup. For any $t_1,t_1' \in \Uf^1_n$, since $1(t_1t_1') = (1t_1)t_1' = nt_1' = n$, we have $t_1t_1' \in \Uf^1_n$. Next, let $t_2 \in \Uf^2_n$ and $t \in \Uf^1_n \cup \Uf^2_n$. If $t \in \Uf^1_n$, then $1(t_2t) = (n-1)t = n$ and $1(tt_2) = nt_2 = n$; so $t_2t, tt_2 \in \Uf^1_n$. If $t \in \Uf^2_n$, then $1(t_2t) = (n-1)t = n$ and $1(tt_2) = (n-1)t_2 = n$; so $t_2t, tt_2 \in \Uf^1_n$ as well. Thus $\Uf^1_n \cup \Uf^2_n$ is also a semigroup. For any $t_3 \in \Uf^3_n$ and $t' \in \Wbf(n)$, since $it_3 \in \{n-1,n\}$ for all $i \neq 1$, and $(n-1)t' = nt' = n$, we have $i(t_3t') = n$, and $t_3t' \in \Wbf(n)$. Also $1(t't_3) = (1t')t_3 \in \{n-1,n\}$, so $t't_3 \in \Uf^1_n \cup \Uf^2_n$. Hence $\Wbf(n)$ is a semigroup contained in $\Bbf(n)$.

Note that $\Uf^1_n$, $\Uf^2_n$, and $\Uf^3_n$ are pairwise disjoint. For any $t \in \Wbf(n)$, there are three cases: 

\be
\item $t \in \Uf^1_n$: For any $i \not\in \{1,n-1,n\}$, $it$ can be chosen from $Q \setminus \{1\}$. Then $|\Uf^1_n| = (n-1)^{n-3}$; 
\item $t \in \Uf^2_n$: For any $i \not\in \{1,n-1,n\}$, $it$ can be chosen from $Q \setminus \{1,n-1\}$. Then $|\Uf^2_n| = (n-2)^{n-3}$; 
\item $t \in \Uf^3_n$: Now, $1t$ can be chosen from $Q \setminus \{1,n-1,n\}$. For any $i \not\in \{1,n-1,n\}$, $it$ has two choices: $it = n-1$ or $n$. Then $|\Uf^3_n| = (n-3)2^{n-3}$. 
\ee 
Therefore we have $|\Wbf(n)| = (n-1)^{n-3} + (n-2)^{n-3} + (n-3)2^{n-3}$. \qed
\end{proof}

The next proposition describes a generating set of $\Wbf(n)$. 
\begin{proposition}\label{prop:Rgen} 
For $n \ge 5$, the semigroup $\Wbf(n)$ is generated by $\Hbf(n) = \{a_1, a_2, a_3, b_1, \ldots, b_{n-3}, c_1, \ldots, c_m, d_1, \ldots, d_l\}$, where $m = (n-2)^{n-3}-1$ and $l = (n-3)(2^{n-3} - 1)$, and 
\be
\item $a_1 = {1 \choose n}{n-1 \choose n}(2,\ldots,n-2)$, $a_2 = {1 \choose n}{n-1 \choose n}(2,3)$, $a_3 = {1 \choose n}{n-1 \choose n}{n-2 \choose 2}$; 
\item For $1 \le i \le n-3$, $b_i = {1 \choose n}{n-1 \choose n}{i+1 \choose n-1}$; 
\item Each $c_i$ defines a distinct transformation in $\Uf^2_n$ other than $[n-1,n,\ldots,n,n]$; 
\item Each $d_i$ defines a distinct transformation in $\Uf^3_n$ other than $[j,n,\ldots,n,n]$ for all $j \in \{2,\ldots,n-2\}$. 
\ee
\end{proposition}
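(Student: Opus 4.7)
The goal is to establish $\Wbf(n) = \langle \Hbf(n)\rangle$. One inclusion, $\langle \Hbf(n)\rangle \subseteq \Wbf(n)$, is immediate since $\Hbf(n) \subseteq \Wbf(n)$ and $\Wbf(n)$ is a semigroup by Proposition~\ref{prop:Rcard}. For the reverse inclusion, I would treat the three disjoint subsets $\Uf^1_n,\Uf^2_n,\Uf^3_n$ of $\Wbf(n)$ separately.

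The main task is to generate $\Uf^1_n$ from $\{a_1,a_2,a_3,b_1,\ldots,b_{n-3}\}$. Write $Q''=\{2,\ldots,n-2\}$. Theorem~\ref{thm:salomaa} applied to the restrictions of $a_1,a_2,a_3$ to $Q''$ (a full cycle, a transposition, and a singular collapse, respectively) identifies $\langle a_1,a_2,a_3\rangle$ with the copy of $\cT_{Q''}$ inside $\Uf^1_n$ consisting of those $t$ with $Q''t\subseteq Q''$. For general $t\in\Uf^1_n$, partition $Q''$ into $D_t,E_t,F_t$ according to whether $it\in Q''$, $it=n-1$, or $it=n$. From the count $|t(D_t)|\le |D_t|=n-3-|E_t|-|F_t|$, I can pick auxiliary landing sites $m_1,m_2\in Q''\setminus t(D_t)$ (distinct when both $E_t$ and $F_t$ are nonempty), choose $s\in\langle a_1,a_2,a_3\rangle$ with $s(F_t)=\{m_1\}$, $s(E_t)=\{m_2\}$, and $s|_{D_t}=t|_{D_t}$, and verify that $t$ equals $s\cdot b_{m_1-1}\cdot b_{m_2-1}$, with the natural degenerations $s$, $s\cdot b_{m_2-1}$, or $s\cdot b_{m_1-1}^2$ when $E_t\cup F_t=\emptyset$, $F_t=\emptyset$, or $E_t=\emptyset$, respectively. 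Since each $b_{m-1}$ fixes $Q''\setminus\{m\}$ and routes $m\mapsto n-1$, $n-1\mapsto n$, the $D_t$-coordinates are untouched, the $F_t$-coordinates travel $m_1\mapsto n-1\mapsto n$ through the two factors, and the $E_t$-coordinates land at $n-1$ via the second factor.

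The two remaining pieces are easier. For $\Uf^2_n$, every element except $\tau=[n-1,n,\ldots,n,n]$ is already a generator $c_i$. To realise $\tau$, I would pick $d_j,d_k$ among the $d$-generators with $1d_j=i^*$ (for some $i^*\in Q''$) and $i^*d_k=n-1$; both exist since the excluded pattern $[j,n,\ldots,n,n]$ only forbids $d$-transformations whose $Q''$-image is $\{n\}$, leaving plenty of freedom to place either $1d_j$ or an $n-1$ image at a prescribed coordinate. A direct computation gives $1(d_jd_k)=i^*d_k=n-1$ and $i(d_jd_k)=n$ for $i\ge 2$ (since $id_j\in\{n-1,n\}$ and $d_k$ collapses both to $n$), so $d_jd_k=\tau$. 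For $\Uf^3_n$, every element except the $n-3$ transformations $\sigma_j=[j,n,\ldots,n,n]$ ($j\in Q''$) is a generator $d_i$. I would realise $\sigma_j$ as $d_k\cdot a$, where $d_k$ is any $d$-generator with $1d_k=j'\in Q''$ and $a\in\Uf^1_n$ (already generated in the first step) restricts on $Q''$ to a transformation sending $j'$ to $j$. The composition automatically sends every $i\ge 2$ to $n$, since $id_k\in\{n-1,n\}$ and $(n-1)a=na=n$, yielding $\sigma_j$.

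The principal obstacle is the $\Uf^1_n$ case: with only $n$ generators one must synthesise all $(n-1)^{n-3}$ transformations in $\Uf^1_n$. The combinatorial crux is the inequality $|Q''\setminus t(D_t)|\ge |E_t|+|F_t|$, which just supplies the (at most two) landing sites needed, so that post-composing with at most two $b$-letters routes the non-$D_t$ coordinates to their targets without disturbing the $D_t$-coordinates.
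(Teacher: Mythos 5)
Your proposal is correct and follows the same overall architecture as the paper's proof: generate the copy $\mathbf{Y}'$ of $\cT_{\{2,\ldots,n-2\}}$ inside $\Uf^1_n$ from $a_1,a_2,a_3$ via Theorem~\ref{thm:salomaa}, use the letters $b_i$ to route states onto $n-1$ and $n$, and obtain the finitely many elements of $\Uf^2_n\cup\Uf^3_n$ that are not themselves generators as short products. Where you genuinely diverge is the $\Uf^1_n$ step, and there your version is tighter than the paper's. The paper sets $E_t=\{i\mid it=n-1\}$, asserts $E_t\neq\emptyset$ for $t\in\Uf^1_n\setminus\mathbf{Y}'$, and claims $E_ts=\{n-1\}$ for $s=b_{i_1-1}\cdots b_{i_k-1}$; the first assertion fails when every displaced state maps to $n$ rather than $n-1$, and the second fails for $k\ge 2$, since the product sends $i_1,\ldots,i_{k-1}$ on to $n$ (each later factor maps $n-1$ to $n$) and only $i_k$ to $n-1$. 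Your three-way partition of $Q''=\{2,\ldots,n-2\}$ into $D_t,E_t,F_t$, the counting inequality $|Q''\setminus t(D_t)|\ge |E_t|+|F_t|$ supplying the one or two landing sites $m_1,m_2$, and the factorization $t=s\,b_{m_1-1}b_{m_2-1}$ (with the stated degenerations when $E_t$ or $F_t$ is empty) treat the two target values $n-1$ and $n$ separately and correctly, so your argument in effect repairs this imprecision. Your alternative constructions for the missing elements of $\Uf^2_n$ and $\Uf^3_n$ (a product of two $d$-generators for $[n-1,n,\ldots,n,n]$, and a $d$-generator followed by an already-generated element of $\Uf^1_n$ for each $[j,n,\ldots,n,n]$) check out and do the same job as the paper's single auxiliary transformation $d={1\choose 2}{n-1\choose n}{Q'\choose n-1}$ composed with ${1\choose n}{n-1\choose n}{2\choose i}$.
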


\begin{proof} 
Since $\Hbf(n) \subseteq \Wbf(n)$, we have $\langle \Hbf(n) \rangle \subseteq \Wbf(n)$. It remains to be shown that $\Wbf(n) \subseteq \langle \Hbf(n) \rangle$. Let $Q' = Q \setminus \{1,n-1,n\}$. 
\be 
\item First consider $\Uf^1_n$. By Theorem~\ref{thm:salomaa}, $a_1,a_2$ and $a_3$ together generate the semigroup $$\mathbf{Y}' = \{t \in \Uf^1_n \mid \txt{for all} i \in Q', it \in Q'\},$$ which is contained in $\Uf^1_n$. For any $t \in \Uf^1_n \setminus \mathbf{Y}'$, let $E_t = \{ i \in Q \mid it = n-1 \}$; then $E_t \neq \emptyset$. Suppose $E_t = \{i_1,\ldots,i_k\}$, where $1 \le k \le n-3$. Then there exists $t' \in \mathbf{Y}'$ such that, for all $i \not\in E_t$, $it' = it$. Let $s = b_{i_1-1} \cdots b_{i_k-1}$. Note that $E_ts = \{n-1\}$, and, for all $i \not\in E_t$, $i(t's) = (it')s = it$. So $t's = t$, and $\langle a_1,a_2,a_3,b_1,\ldots,b_{n-3} \rangle = \Uf^1_n$. 

\item Next, the transformations that are in $\Uf^2_n \cup \Uf^3_n$ but not in $\Hbf(n)$ are $t_i = [i,n,\ldots,n,n]$, where $2 \le i \le n-1$. Note that $d = {1 \choose 2}{n-1 \choose n}{Q' \choose n-1} \in \Hbf(n)$, and, for each $i \in \{2,\ldots,n-1\}$, $s_i = {1 \choose n}{n-1 \choose n}{2 \choose i} \in \Uf^1_n$. Then $t_i = ds_i \in \langle \Hbf(n) \rangle$, and $\Uf^2_n \cup \Uf^3_n \subseteq \langle \Hbf(n) \rangle$. 
\ee 

Therefore $\Wbf(n) = \langle \Hbf(n) \rangle$. \qed
\end{proof}

\begin{theorem}\label{thm:bfaut1} 
For $n \ge 5$, let $\cA'_n = (Q, \Sig, \delta, 1, F)$ be the DFA with alphabet $\Sig$ of size $(n-2)^{n-3} + (n-3)2^{n-3}+2$, where each letter defines a transformation as in Proposition~\ref{prop:Rgen}, and $F = \{n-1\}$. Then $L' = L(\cA'_n)$ has quotient complexity $\kappa(L') = n$, and syntactic complexity $\sigma(L') = \wbf(n)$. Moreover, $L'$ is bifix-free. 
\end{theorem}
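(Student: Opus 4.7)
The plan is to verify three things about $\cA'_n$: (i) its $n$ states are reachable and pairwise distinguishable, so $\kappa(L') = n$; (ii) its transition semigroup is exactly $\Wbf(n)$, giving $\sigma(L') = \wbf(n)$; and (iii) $L'$ is bifix-free. Two of these fall out immediately from results already in the paper; only the minimality argument requires some work.

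For reachability, I would start at state $1$. The letter $a_1 \in \Uf^1_n$ sends $1 \to n$, and any $c_i \in \Uf^2_n$ sends $1 \to n-1$, so states $n$ and $n-1$ are reached. For each $j \in \{2,\ldots,n-2\}$, the excluded transformation $[j,n,\ldots,n,n]$ is the only element of $\Uf^3_n$ with $1t = j$ that is not of $\Hbf(n)$-type, so the alphabet still contains a $d_i$ with $1 d_i = j$ (for instance, take $1t=j$, $2t=n-1$, $it=n$ for $i \geq 3$), reaching state $j$. Thus all of $Q$ is reachable from $1$.

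For distinguishability, $F = \{n-1\}$ separates $n-1$ from every other state via $\eps$. Because $(n-1)t = nt = n$ for every $t \in \Bbf(n)$, no nonempty word can distinguish $n-1$ from $n$, but $\eps$ does. For any remaining pair $(i,j)$ with $i \ne j$ I would exhibit a transformation $t \in \Wbf(n)$ sending exactly one of $i,j$ to $n-1$: if $1 \in \{i,j\}$, the transformation $[n-1,n,n,\ldots,n] \in \Uf^2_n$ works; if $i,j \in \{2,\ldots,n-2,n\}$, a transformation in $\Uf^1_n$ with $it = n-1$ and all other states mapped to $n$ suffices. Each such witness lies in $\Wbf(n)$, so (by the next step) is realized by some word in $\Sig^*$. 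This gives $\kappa(L') = n$.

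Once minimality is established, the rest is immediate. By Proposition~\ref{prop:Rgen}, $\Hbf(n)$ generates $\Wbf(n)$; since the letters of $\Sig$ are exactly the elements of $\Hbf(n)$, the transition semigroup of $\cA'_n$ is $\Wbf(n)$. Because $\cA'_n$ is the quotient DFA of $L'$, this equals the syntactic semigroup of $L'$, so $\sigma(L') = |\Wbf(n)| = \wbf(n)$ by Proposition~\ref{prop:Rcard}. Finally, $T_{L'} = \Wbf(n) \subseteq \Bbf(n)$ and $\{n-1\}$ is the only accepting quotient, so Proposition~\ref{prop:bf}(2) implies $L'$ is bifix-free. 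The only step demanding any care is the distinguishability case analysis, but this reduces to checking the defining conditions of $\Uf^1_n$ and $\Uf^2_n$ for a handful of explicit transformations, so there is no substantive obstacle.
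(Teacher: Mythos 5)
Your proposal is correct and follows essentially the same route as the paper: establish reachability and distinguishability to get $\kappa(L')=n$, then invoke Proposition~\ref{prop:Rgen} for the transition semigroup, Proposition~\ref{prop:Rcard} for its size, and Proposition~\ref{prop:bf} for bifix-freeness. In fact your choice of witnesses is more careful than the paper's own: the paper's proof cites letters $t_a=[i,n,\ldots,n,n]$ and $t_b=[n-1,n,\ldots,n,n]$ as members of $\Hbf(n)$, but these are precisely the transformations excluded from the $c_i$'s and $d_i$'s in Proposition~\ref{prop:Rgen}, whereas your substitutes (e.g.\ $1t=j$, $2t=n-1$, $it=n$ otherwise) genuinely lie in the alphabet.
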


\begin{proof} 
First, for all $i \in Q \setminus \{1\}$, there exists $a \in \Sig$ such that $t_a = [i, n, \ldots, n, n] \in \Hbf(n)$, and state $i$ is reachable by $a$. So all the states in $Q$ are reachable. Next, there exist $b, c \in \Sig$ such that $t_b = [n-1, n, \ldots, n, n] \in \Hbf(n)$ and $t_c = [n, 3, 4, \ldots, n, n] \in \Hbf(n)$. The initial state accepts $b$, while all other states reject it. For $2 \le i \le n-2$, state $i$ accepts $b^{n-i-1}$, while all other states reject it. Also, state $n-1$ is the only accepting state, and state $n$ is the empty state. Then all the states in $Q$ are distinct, and $\kappa(L') = n$. 

By Proposition~\ref{prop:Rgen}, the syntactic semigroup of $L'$ is $\Wbf(n)$; so $\sigma(L')=\wbf(n)$. By Proposition~\ref{prop:bf}, $L'$ is bifix-free. \qed
\end{proof}

\smallskip

\begin{conjecture}[Bifix-Free Regular Languages]
\label{con:bf}
If $L$ is a bifix-free regular language with $\kappa(L) = n \ge 6$, then $\sigma(L) \le \wbf(n)$ and this is a tight bound. 
\end{conjecture}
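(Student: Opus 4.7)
The tightness half of the conjecture is already delivered by Theorem~\ref{thm:bfaut1}, which exhibits a bifix-free language $L'$ with $\kappa(L') = n$ and $\sigma(L') = \wbf(n)$. It therefore remains to establish the upper bound: if $L$ is bifix-free with $\kappa(L) = n \ge 6$, then $\sigma(L) \le \wbf(n)$.

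The plan is to mirror the approach of Theorem~\ref{thm:bfsmall} (bifix-free, $n = 5$) and of the $n = 6$ case of Conjecture~\ref{con:sf}. By Proposition~\ref{prop:bf} the syntactic semigroup $T_L$ is a subset of $\Bbf(n)$, so it suffices to show that no subsemigroup of $\Bbf(n)$ exceeds $\wbf(n)$ in cardinality. Since $\Wbf(n)$ is already a subsemigroup of $\Bbf(n)$ of size $\wbf(n)$ by Proposition~\ref{prop:Rcard}, the key step is to construct an injection $\phi : \Bbf(n) \setminus \Wbf(n) \to \Wbf(n)$ with the property that, for every $s \in \Bbf(n) \setminus \Wbf(n)$, the semigroup $\langle s, \phi(s) \rangle$ is \emph{not} contained in $\Bbf(n)$. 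Once such a pairing is in hand, any semigroup $T \subseteq \Bbf(n)$ must omit at least one element of each pair $\{s, \phi(s)\}$, yielding $|T| \le |\Wbf(n)| = \wbf(n)$.

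To design $\phi$ I will exploit the explicit characterisation: $s \in \Bbf(n) \setminus \Wbf(n)$ iff $s \in \Bbf(n)$ with $1s \in \{2,\dots,n-2\}$ \emph{and} some $i \in \{2,\dots,n-2\}$ satisfying $is \in \{2,\dots,n-2\}$. Writing $j = 1s$ and fixing such an $i$ with $k = is$, I will set $\phi(s)$ to a witness in $\Wbf(n)$ drawn mostly from $\Uf^3_n$, whose action selectively targets the states $j$ and $k$. The composition $s\,\phi(s)$ or $\phi(s)\,s$ will then either place $1$ into the range (violating the unique-reachability condition of $\Bsf(n)$) or create a collision $1(us)^m = i'(us)^m \ne n$ with $i' \ne 1$ (violating the suffix-free defining property of $\Bsf(n) \supseteq \Bbf(n)$). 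Injectivity will be arranged by encoding the ordered pair $(j,k)$ together with a ``direction bit'' determined by $s$ into $\phi(s)$.

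The main obstacle is producing $\phi$ uniformly across all shapes of $s$ while preserving injectivity. The case analysis splits naturally according to (i) whether $j = k$, (ii) whether $k$ lies in the principal sequence $\seq_s(1)$, and (iii) whether the image $Qs$ already intersects $\{n-1,n\}$. Analogous witness pairings were found computationally with \emph{GAP} for $n = 5$ and for the suffix-free $n = 6$ case; I would first rerun that enumeration to confirm the $n = 6$ base case here, and then search for a recursive or structural pattern in those pairings that extends to arbitrary $n \ge 6$. Lifting the computer-assisted search to a closed-form, case-independent construction is where the real difficulty lies.
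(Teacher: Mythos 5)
Your proposal correctly identifies both halves of the task and the strategy the paper itself employs: tightness comes from Theorem~\ref{thm:bfaut1}, and the upper bound is to be obtained, via Proposition~\ref{prop:bf}, by showing that no subsemigroup of $\Bbf(n)$ exceeds $\wbf(n)$ in size, using a conflict pairing between $\Bbf(n)\setminus\Wbf(n)$ and $\Wbf(n)$. For $n=6$ this is exactly what the paper does: a \emph{GAP} enumeration matches each of the $339-213=126$ elements of $\Bbf(6)\setminus\Wbf(6)$ with a distinct $t_i\in\Wbf(6)$ such that $\langle t_i,\tau_i\rangle\not\subseteq\Bbf(6)$, so at most one element of each pair can lie in $T_L$. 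The genuine gap is everything beyond $n=6$. Your injection $\phi$ is never constructed: you name the parameters $(j,k)$ and sketch a case split, but you exhibit no candidate witness in any case, verify for no composition that it actually leaves $\Bbf(n)$, and give no injectivity argument; you then concede that lifting the computer search to a uniform construction ``is where the real difficulty lies.'' That difficulty is the entire content of the statement --- it is precisely why the paper states it as a conjecture and proves only the case $n=6$. As it stands, your text is a research plan that, at best, reproduces the paper's computational proof for $n=6$ and leaves the general case exactly as open as the paper does.

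One concrete prerequisite you should also check before attempting the construction: an injection $\Bbf(n)\setminus\Wbf(n)\to\Wbf(n)$ can exist only if $\bbf(n)\le 2\,\wbf(n)$. This holds for $n=5$ ($41\le 66$) and $n=6$ ($339\le 426$), but comparing Lemma~\ref{lem:Hn} with Proposition~\ref{prop:Rcard} it is not obvious for all $n$, and you offer no argument. If the inequality ever fails, the simple pairing scheme is impossible in principle and one would need conflict classes of size greater than two (each still contributing at most one element to any subsemigroup of $\Bbf(n)$). Establishing either the inequality or the more general conflict structure is a necessary first step that your proposal omits.
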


\smallskip

The conjecture holds for $n=6$ as we now show:
\begin{proof} 
When $n = 6$, $|\Bbf(6)| = 339$ and $|\Vbf(6)| = 213$. There are 126 transformations $\tau_1,\ldots,\tau_{126}$ in $\Bbf(6) \setminus \Vbf(6)$. For each $\tau_i$, we enumerated transformations in $\Wbf(6)$ using \emph{GAP} and found a unique $t_i \in \Vbf(6)$ such that $\langle t_i, \tau_i \rangle \not\subseteq \Bbf(6)$. Thus, for each $i$, at most one of $t_i$ and $\tau_i$ can appear in the syntactic semigroup $T_L$ of $L$. So we further lower the bound to $\sigma(L) \le 213$. This bound is reached by the DFA $\cA'_6$ in Theorem~\ref{thm:bfaut1}; so it is a tight upper bound for~$n~=~6$. \qed
\end{proof}

\section{Factor-Free Regular Languages}\label{sec:ff}

Let $L$ be a factor-free regular language with $\kappa(L) = n$. Since factor-free regular languages are also bifix-free, $L$ as a quotient is uniquely reachable, $\eps$ is the only accepting quotient of $L$, and $L$ also has the empty quotient. As in Section~\ref{sec:bf}, we assume that $Q$ is the set of states of quotient DFA of $L$, in which $1$ is the initial state, and states $n-1$ and $n$ correspond to the quotients $\eps$~and~$\emptyset$, respectively. Let 
$$\Bff(n) = \{t \in \Bbf(n) \mid \txt{for all} j \ge 1, 1t^j = n - 1 \Rightarrow it^j = n ~~\forall~i, 1 < i < n-1 \}.$$
We first have the following observation: 

\begin{proposition}\label{prop:ff}
If $L$ is a regular language with quotient complexity $n$ and syntactic semigroup $T_L$, then the following hold: 
\be 
\item If $L$ is factor-free, then $T_L$ is a subset of $\Bff(n)$. 
\item If $\eps$ is the only accepting quotient of $L$, and $T_L \subseteq \Bff(n)$, then $L$ is factor-free. 
\ee
\end{proposition}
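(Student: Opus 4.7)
The plan is to prove both directions by translating the defining condition of $\Bff(n)$ into a factor-freeness condition, mirroring the proofs of Propositions~\ref{prop:sf} and~\ref{prop:bf}. Since $T_L$ is a semigroup, every power $t^j$ of $t \in T_L$ lies in $T_L$; so the ``$\forall j \ge 1$'' clause is essentially the statement ``$1 s = n-1 \Rightarrow i s = n$ for all $1 < i < n-1$'' applied to every $s = t_w \in T_L$ (with $w$ ranging over non-empty words).

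For part~1, assume $L$ is factor-free, hence bifix-free, so Proposition~\ref{prop:bf} gives $T_L \subseteq \Bbf(n)$. To verify the extra condition, fix $t \in T_L$ and $j \ge 1$, and suppose $1 t^j = n-1$. Choose a non-empty word $w$ with $t_w = t^j$; then $w \in L$ because $1 t_w = n-1$ is the accepting state. Suppose for contradiction that $i t_w \ne n$ for some $i$ with $1 < i < n-1$. All quotients are reachable, so there is a word $x$ with $1 t_x = i$, and $x \ne \eps$ since $i \ne 1$. Because $i t_w \ne n$, the quotient at state $i t_w$ is non-empty, so there exists $y$ with $(i t_w) t_y = n-1$. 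Then $xwy \in L$, and $w$ is a proper factor of $xwy$ (as $x \ne \eps$), contradicting factor-freeness. Hence $it^j = n$, so $t \in \Bff(n)$.

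For part~2, assume $\eps$ is the only accepting quotient and $T_L \subseteq \Bff(n) \subseteq \Bbf(n)$. Proposition~\ref{prop:bf} then gives that $L$ is bifix-free, hence both prefix- and suffix-free. Suppose for contradiction that $L$ is not factor-free: there exist $u, v, w \in \Sig^*$ with $w, uwv \in L$ and $uv \ne \eps$. If $\eps \in L$ then prefix-freeness forces $L = \{\eps\}$, which is vacuously factor-free; so $w \ne \eps$, and bifix-freeness forces both $u$ and $v$ to be non-empty. Let $i = 1 t_u$. Since $T_L \subseteq \Bsf(n)$ we have $1 \notin \timg(t_u)$, so $i \ne 1$; since $u$ would then be a proper prefix of $uwv \in L$, prefix-freeness gives $u \notin L$, so $i \ne n-1$; and since $nt = n$ for all $t \in T_L$, the case $i = n$ would force $1 t_{uwv} = n \ne n-1$, contradicting $uwv \in L$. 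Thus $1 < i < n-1$. Applying the $\Bff(n)$ condition to $t_w$ with $j = 1$, and using $1 t_w = n-1$, we conclude $i t_w = n$. But then $1 t_{uwv} = (1 t_u) t_w t_v = n t_v = n \ne n-1$, a contradiction.

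The only mild obstacle is checking in each direction that the witnessing state $i$ really lies strictly between $1$ and $n-1$; this is where prefix-, suffix-, and factor-freeness (together with the structural constraints already built into $\Bbf(n) \supseteq \Bff(n)$) combine to rule out the three boundary values $i \in \{1, n-1, n\}$. Once those cases are dispatched, both directions follow immediately from the single-step ``$1 s = n-1 \Rightarrow i s = n$'' reformulation of the $\Bff(n)$ condition.
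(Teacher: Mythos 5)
Your proof is correct and follows essentially the same route as the paper's: part~1 reduces to Proposition~\ref{prop:bf} and then exhibits a word containing a member of $L$ as a proper factor (the paper uses $uw^jv$ with $t_w=t$, you use $xwy$ with $t_w=t^j$, which is equivalent since $T_L$ is a semigroup), and part~2 derives the same contradiction from a factor witness $uwv$. The only difference is that you spell out explicitly why the intermediate state $i=1t_u$ avoids the boundary values $1$, $n-1$, $n$, which the paper leaves implicit; this is a welcome clarification but not a different argument.
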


\begin{proof}
1. Assume $L$ is factor-free. Then $L$ is bifix-free, and $T_L \subseteq \Bbf(n)$ by Proposition~\ref{prop:bf}. For any transformation $t_w \in T_L$ performed by some non-empty word $w$, if $1t^j_w = n - 1$ for some $j \ge 1$, then $w^j \in L$. If we also have $it^j_w \neq n$ for some $i \in Q \setminus \{1\}$, then $i \not\in \{n - 1, n\}$ as $(n-1)t = nt = n$ for all $t \in \Bff(n)$. Thus there exist non-empty words $u$ and $v$ such that state $i$ is reachable by $u$, and state $i(t^j_w)$ accepts $v$. So $uw^jv \in L$, which is a contradiction. Hence $T_L \subseteq \Bff(n)$. 

2. Since $\eps$ is the only accepting state and $\Bff(n) \subseteq \Bbf(n)$, $L$ is bifix-free by Proposition~\ref{prop:bf}. If $L$ is not factor-free, then there exist non-empty words $u, v$ and $w$ such that $w, uwv \in L$. Thus $1t_w = n - 1$, and $1t_{uwv} = 1(t_ut_wt_v) = n - 1$. Since $L$ is bifix-free, $1t_u \neq 1$ and $nt_v = n$; thus $(1t_u)t_w \neq n$, which contradicts the assumption that $t_w \in T_L \subseteq \Bff(n)$. Therefore $L$ is bifix-free. \qed
\end{proof}

The properties of suffix- and bifix-free regular languages still apply to factor-free regular languages. Moreover, we have

\begin{lemma}\label{lem:ffseq} 
For all $t \in \Bff(n)$ and $i \not\in \seq_t(1)$, if $n-1 \in \seq_t(1)$, then $n \in \seq_t(i)$. 
\end{lemma}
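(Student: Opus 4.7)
The plan is to observe that the statement follows almost directly from the defining inclusion of $\Bff(n)$, once one identifies where $i$ must lie in $Q$. Pick $t \in \Bff(n)$ and $i \not\in \seq_t(1)$ with $n-1 \in \seq_t(1)$. First I would invoke Proposition~\ref{prop:ppsf}: since $\Bff(n) \subseteq \Bbf(n) \subseteq \Bsf(n)$, a principal sequence exists for $t$, so $1, n \in \seq_t(1)$ and $nt = n$. The hypothesis then gives a specific index $j \ge 1$ with $1t^j = n-1$.

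Next I would pin down $i$. Because $\seq_t(1)$ already contains $1$, $n-1$, and $n$, the assumption $i \notin \seq_t(1)$ forces $i \notin \{1, n-1, n\}$, i.e. $1 < i < n-1$. This is exactly the range of indices mentioned in the definition of $\Bff(n)$, so I can apply that defining property to the same $j$: from $1t^j = n-1$ we conclude $it^j = n$. Since $n$ occurs in the forward trajectory $i, it, it^2, \ldots$ from $i$, it belongs to $\seq_t(i)$ by definition, which is what we wanted.

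There is no real obstacle here: the whole argument is a one-line application of the $\Bff(n)$ membership condition, after noting that $i$ is automatically an interior state of the automaton. The only point worth double-checking in the write-up is that $\seq_t(i)$ genuinely captures $it^j$ (and not just its eventual image), which is immediate from the definition of the sequence as the list of distinct iterates of $t$ starting at $i$, together with the fact that $n$ is reached at step $j$ and $n$ is a fixed point.
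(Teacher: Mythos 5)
Your proof is correct and takes essentially the same route as the paper's: both hinge on applying the defining implication of $\Bff(n)$ at the index $j$ where $1t^j = n-1$. The paper phrases this as a contradiction (assuming $n \not\in \seq_t(i)$ forces $it^j \neq n$ for all $j$), whereas you argue directly and are a bit more explicit about why $1 < i < n-1$ and why $it^j$ lies in $\seq_t(i)$; the mathematical content is identical.
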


\begin{proof} 
Suppose $n-1 = 1t^k \in \seq_t(1)$ for some $k \ge 1$. If $n \not\in \seq_t(i)$, then for all $j \ge 1$, $it^j \neq n$. In particular, $it^k \neq n$, which contradicts the definition of $\Bff(n)$. Therefore $n \in \seq_t(i)$. \qed
\end{proof}

\begin{lemma}\label{lem:Fn} 
For $n \ge 3$, we have $|\Bff(n)| = N_n + O_n$, where 

\begin{eqnarray*} 
  O_n &=& 1 + \sum_{k=2}^{n-2} C^{n-3}_{k-1} (k-1)! \sum_{\begin{subarray}{c} 
      r_2 + \cdots + r_k + r \\ 
          = n-k-2
    \end{subarray}}
    C^{n-k-2}_{r_2,\ldots,r_k,r} S'_{r+1}(k) \prod_{j=2}^k S'_{r_j+1}(j-1), 
\end{eqnarray*}
and $N_n$ as given in Equation~(\ref{eq:H2}). 
\end{lemma}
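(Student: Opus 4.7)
My plan is to reuse as much of the bifix-free counting from Lemma~\ref{lem:Hn} as possible, modifying only what changes due to the factor-free condition. As in that proof, I would split according to whether $n-1 \in \seq_t(1)$. When $n-1 \notin \seq_t(1)$ we have $1t^j \neq n-1$ for all $j \ge 1$, so the hypothesis of the factor-free condition is never triggered and every such $t \in \Bbf(n)$ automatically lies in $\Bff(n)$. The count for this case is therefore exactly $N_n$, as given by Equation~(\ref{eq:H2}), and nothing more needs to be done.

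The remaining case, $n-1 \in \seq_t(1)$, is what produces $O_n$. Here there is a unique $k$ with $1 \le k \le n-2$ such that $1t^k = n-1$; since $(n-1)t=nt=n$, the factor-free condition reduces to the single requirement $it^k = n$ for all $i$ with $1 < i < n-1$. I would then check this condition for each type of element, using the partition of $Q \setminus \{n\}$ into the principal sequence, the trees $T_t(j)$, and the extras set $E$ established in Section~\ref{sec:sf}. If $i = 1t^m \in \seq_t(1)$ with $1 \le m \le k-1$, then $it^k = 1t^{m+k} = n$ automatically. If $i$ sits at depth $d$ in a tree $T_t(j)$ with $2 \le j \le k$, then by Lemma~\ref{lem:shortpath} we have $d \le j-1$, so $it^k = 1t^{j+k-d}$ with $j+k-d \ge k+1$, hence $it^k = n$ again automatically. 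So the new constraint falls entirely on the extras $i \in E$.

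For $i \in E$, the definition of $E$ already forces $it \in E \cup \{n\}$, and now the factor-free condition further demands $it^k = n$. Equivalently, $E \cup \{n\}$ must form a labelled rooted tree with root $n$ and height at most $k$, which by definition is counted by $S'_{|E|+1}(k)$. This replaces the factor $(r+1)^r$ that appeared in the corresponding sub-sum of $M_n$ for the bifix-free case, while all other combinatorial factors (the $C^{n-3}_{k-1}(k-1)!$ choices of principal sequence, the multinomial $C^{n-k-2}_{r_2,\ldots,r_k,r}$ distributing the remaining elements, and the $\prod_{j=2}^k S'_{r_j+1}(j-1)$ trees $T_t(j)$) are unchanged. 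Summing over $k = 1,\ldots,n-2$ then yields $O_n$, with the $k=1$ term degenerating to the isolated $1$ pulled out in front of the sum (in that case the product $\prod_{j=2}^k$ is empty, all extras map directly to $n$, and there is exactly one configuration).

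The main obstacle I anticipate is rigorously justifying that the structure on $E \cup \{n\}$ really is a rooted tree of height at most $k$ rather than a more general functional digraph. Specifically I must rule out two things: that an extra $i$ could map into some $T_t(j)$ or into $\seq_t(1) \setminus \{n\}$ (which would contradict its being in $E$ by extending its path to a principal connection), and that the extras could contain an internal cycle (which is now forbidden because every extra must reach $n$ in at most $k$ steps, as a consequence of the factor-free condition combined with Lemma~\ref{lem:ffseq}). Once these two points are dispatched, the correspondence with $S'_{r+1}(k)$ is immediate and the formula follows by summing the two cases.
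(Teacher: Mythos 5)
Your proof is correct and follows essentially the same route as the paper's: split on whether $n-1 \in \seq_t(1)$, observe that the case $n-1 \notin \seq_t(1)$ contributes exactly $N_n$ as in Lemma~\ref{lem:Hn}, and in the remaining case use Lemma~\ref{lem:ffseq} to force the extras $E \cup \{n\}$ into a labelled tree rooted at $n$ of height at most $k$, counted by $S'_{r+1}(k)$, with the unique $k=1$ transformation giving the isolated summand $1$. Your explicit check that $it^k = n$ holds automatically for elements of the principal sequence and of the trees $T_t(j)$ is a detail the paper leaves implicit, but the argument is the same.
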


\begin{proof} 
Let $t \in \Bff(n)$ be any transformation. Suppose $\seq_t(1) = 1,1t,\ldots,1t^k,n$, where $0 \le k \le n-2$. Then there are two cases: 

\be 
\item $n-1 \in \seq_t(1)$. Since $(n-1)t = n$, we have $n-1 = 1t^k$, and $k \ge 1$. If $k = 1$, then $1t = n-1$, and $it = n$ for all $i \neq 1$; such a $t$ is unique. Consider $k \ge 2$. There are $C^{n-2}_{k-1} (k-1)!$ different $\seq_t(1)$. For $2 \le j \le k$, suppose there are $r_j + 1$ nodes in tree $T_t(j)$; then there are $S'_{r_j+1}(j-1)$ such trees. Let $E$ be the set of elements $x$ that are not in any tree $T_t(j)$ nor in $\seq_t(1)$, and let $r = |E| = (n-k-2) - (r_2 + \cdots + r_k)$. By Lemma~\ref{lem:ffseq}, $n \in \seq_t(x)$ for all $x \in E$. Then the union of paths $\tpath_t(x)$ for all $x \in E$ form a labeled tree rooted at $n$ with height at most $k$, and there are $S'_{r+1}(k)$ such trees. Thus the number of transformations in this case is $O_n$. 

\item $n-1 \not\in \seq_t(1)$. Now, for all $j \ge 1$, $1t^j \neq n-1$. Then $t \in \Bbf(n)$. As in the proof of Lemma~\ref{lem:Hn}, the number of transformations in this case is $N_n$. 

\ee

Altogether we have the desired formula. \qed
\end{proof}

Let $\bff(n) = |\Bff(n)|$. From Proposition~\ref{prop:ff} and Lemma~\ref{lem:Fn} we have 
\begin{proposition}\label{prop:Fncard} 
For $n \ge 3$, if $L$ is a factor-free regular language with quotient complexity $n$, then its syntactic complexity $\sigma(L)$ satisfies that $\sigma(L) \le \bff(n)$, where $\bff(n)$ is the cardinality of $\Bff(n)$ as in Lemma~\ref{lem:Fn}.
\end{proposition}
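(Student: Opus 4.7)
The plan is essentially to observe that this proposition is an immediate corollary of Proposition~\ref{prop:ff} together with Lemma~\ref{lem:Fn}, both already proved just above. First I would note that, given any factor-free regular language $L$ with $\kappa(L)=n$, its quotient DFA $\cA$ has $n$ states, and by the standing convention we may assume that these states have been renamed so that $1$ is the initial state, $n-1$ is the unique accepting state (corresponding to the quotient $\eps$), and $n$ is the empty state; this relabeling is innocuous because the syntactic semigroup is isomorphic to the transition semigroup of any DFA isomorphic to $\cA$, so the cardinality $|T_L|$ is unaffected.

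Next, Proposition~\ref{prop:ff}(1) tells us directly that $T_L \subseteq \Bff(n)$. Since the syntactic complexity is defined as $\sigma(L)=|T_L|$, this immediately yields
\begin{equation*}
\sigma(L) \;=\; |T_L| \;\le\; |\Bff(n)| \;=\; \bff(n),
\end{equation*}
and Lemma~\ref{lem:Fn} supplies the explicit expression $\bff(n) = N_n + O_n$ with $N_n$ from Equation~(\ref{eq:H2}) and $O_n$ as in the statement of the lemma.

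There is no real obstacle here: the heavy lifting has been done in Proposition~\ref{prop:ff}, which characterises the transformations that can appear in $T_L$ when $L$ is factor-free, and in Lemma~\ref{lem:Fn}, which counts those transformations by splitting on whether the accepting state $n-1$ lies in the principal sequence $\seq_t(1)$. The only point worth double-checking while writing the proof is that the sanity condition $n\ge 3$ (rather than $n\ge 2$) is the right hypothesis: for $n=2$ the language $\eps$ is factor-free and $\Bff(2)$ is well defined, but the formula in Lemma~\ref{lem:Fn} is stated for $n\ge 3$, so the proposition inherits the same range. Accordingly, the proposal is a three-line proof: relabel states, invoke Proposition~\ref{prop:ff}(1), and quote Lemma~\ref{lem:Fn}.
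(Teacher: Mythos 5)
Your proposal is correct and matches the paper exactly: the paper derives Proposition~\ref{prop:Fncard} as an immediate consequence of Proposition~\ref{prop:ff}(1) (which gives $T_L \subseteq \Bff(n)$) together with Lemma~\ref{lem:Fn} (which computes $|\Bff(n)|$), which is precisely your three-line argument. Your extra remarks about relabeling states and the $n\ge 3$ hypothesis are harmless elaborations of the paper's standing conventions.
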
 

The tight upper bound on the syntactic complexity of factor-free regular languages is reached by the largest semigroup contained in $\Bff(n)$. When $2 \le n \le 4$, $\Bff(n)$ is a semigroup. The languages $L_2 = \eps$, $L_3 = a$ over alphabet $\{a,b\}$, and $L_4 = ab^*a$ have syntactic complexities $1 = \bff(2)$, $2 = \bff(3)$, and $6 = \bff(4)$, respectively. So $\bff(n)$ is a tight upper bound for $n \in \{2,3,4\}$. However, the set $\Bff(n)$ is not a semigroup for $n \ge 5$, because $s_1 = [2,3,\ldots,n-1,n,n], s_2 = {n-1 \choose n}{2 \choose n-1}{1 \choose n} = [n,n-1,3,\ldots,n-2,n,n] \in \Bff(n)$ but $s_1s_2 = [n-1,3,\ldots,n-2,n,n,n] \not\in \Bff(n)$.

Next, we find a large semigroup that can be the syntactic semigroup of a factor-free regular language. 

Let $t_0 = {Q \setminus \{1\} \choose n}{1 \choose n-1} = [n-1, n, \ldots, n]$, and let $\Wff(n) = \Uf^1_n \cup \{t_0\} \cup \Uf^3_n$. When $2 \le n \le 4$, we have $\Wff(n) = \Bff(n)$. So we are interested in larger values of $n$ in the rest of this section. 

\begin{proposition}\label{prop:Uncard}
For $n \ge 5$, $\Wff(n)$ is a semigroup contained in $\Bff(n)$ with cardinality 
$$\wff(n) = |\Wff(n)| = (n-1)^{n-3} + (n-3)2^{n-3} + 1.$$
\end{proposition}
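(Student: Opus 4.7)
\medskip

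The plan is to establish the three claims in sequence: containment in $\Bff(n)$, closure under composition, and the cardinality count. The computation of $|\Uf^1_n| = (n-1)^{n-3}$ and $|\Uf^3_n| = (n-3)2^{n-3}$ has already been carried out in the proof of Proposition~\ref{prop:Rcard}, so the cardinality formula will follow once I observe that the three pieces of $\Wff(n)$ are pairwise disjoint: $\Uf^1_n$ and $\Uf^3_n$ are disjoint by definition (they require $1t = n$ and $1t \notin \{n-1,n\}$ respectively), and $t_0 = [n-1,n,\ldots,n]$ belongs to neither, since $1t_0 = n-1$.

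For containment $\Wff(n) \subseteq \Bff(n)$, I would verify the defining implication ``$1t^j = n-1 \Rightarrow it^j = n$ for all $1 < i < n-1$'' for each component. For $t \in \Uf^1_n$ we have $1t = n$ and $nt = n$, so $1t^j = n \neq n-1$ for all $j \ge 1$ and the implication is vacuous. For $t_0$, we have $1t_0 = n-1$ with $it_0 = n$ for $i \neq 1$, so the $j=1$ case holds, and for $j \ge 2$ we have $1t_0^j = n$. For $t \in \Uf^3_n$, the only $j$ with $1t^j = n-1$ possible is $j = 2$ (since $1t \in \{2,\ldots,n-2\}$, then $1t^2 \in \{n-1,n\}$, and $1t^j = n$ for $j \ge 3$); and for $i \in \{2,\ldots,n-2\}$ we have $it \in \{n-1,n\}$, so $it^2 = n$ regardless.

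The main work is closure under composition, which I would handle as a $3 \times 3$ case analysis on where each factor lies. The key computational observations are: (a) $(n-1)t = nt = n$ for every $t \in \Bbf(n)$; (b) for $t \in \Uf^3_n$, every $i \neq 1$ satisfies $it \in \{n-1,n\}$, hence $(it)s = n$ for any $s \in \Bbf(n)$. Using (a) and (b), I can show that whenever at least one factor is $t_0$ or lies in $\Uf^3_n$, the product maps every $i \neq 1$ to $n$; in particular the product is then the constant $[n,\ldots,n] \in \Uf^1_n$, except possibly $t \cdot s$ with $t \in \Uf^3_n$ and $s \in \Uf^1_n \cup \Uf^3_n$, where the value of $1(ts) = (1t)s$ can be arbitrary in $\{2,\ldots,n-2,n-1,n\}$, placing $ts$ in $\Uf^3_n$, $\{t_0\}$ or $\Uf^1_n$ according to whether $(1t)s$ is in $\{2,\ldots,n-2\}$, equals $n-1$, or equals $n$. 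The remaining case is $t, s \in \Uf^1_n$, for which closure was already proved in Proposition~\ref{prop:Rcard}.

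The cardinality statement then follows directly: $\wff(n) = |\Uf^1_n| + 1 + |\Uf^3_n| = (n-1)^{n-3} + (n-3)2^{n-3} + 1$. I do not expect any real obstacle here; the slightly delicate point is confirming that in the ``$t \in \Uf^3_n$, $s \in \Uf^3_n$'' case the product cannot land outside $\Uf^1_n \cup \{t_0\}$, which is settled by observing that $(1t)s$ must lie in $\{n-1,n\}$ because $1t \neq 1$ and $s$ sends every non-$1$ element into $\{n-1,n\}$, so the product $ts$ never belongs to $\Uf^3_n$ in this subcase.
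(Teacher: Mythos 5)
Your overall structure---pairwise disjointness for the count, componentwise verification of the $\Bff(n)$ condition, and a case analysis for closure---matches the paper's proof, and your containment and cardinality parts are correct (indeed your verification that $\Uf^3_n \subseteq \Bff(n)$ is more explicit than the paper's). However, one step of your closure argument is false as stated: the claim that whenever at least one factor is $t_0$ or lies in $\Uf^3_n$, the product maps every $i \neq 1$ to $n$. This fails when the first factor lies in $\Uf^1_n$ and the second in $\Uf^3_n$. For example, with $n=5$ take $t = [5,2,3,5,5] \in \Uf^1_5$ and $s = [2,4,4,5,5] \in \Uf^3_5$; then $ts = [5,4,4,5,5]$ sends $2$ to $4 = n-1$, not to $n$. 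Your observation (b) only controls $(it)s$ when the \emph{first} factor collapses $Q \setminus \{1\}$ into $\{n-1,n\}$; when the first factor is in $\Uf^1_n$, the intermediate value $it$ can be any element of $\{2,\ldots,n-2\}$, and $s \in \Uf^3_n$ may send it to $n-1$.

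The case is easily repaired, and the repair is what the paper does: for $t \in \Uf^1_n \cup \{t_0\}$ and $s \in \Uf^3_n$ one has $1(ts) = (1t)s$ with $1t \in \{n-1,n\}$, hence $1(ts) = n$, which already places $ts$ in $\Uf^1_n$---there is no need for the other images to equal $n$. A related point worth a sentence in any version of this proof: assigning a product to $\Uf^1_n$, $\{t_0\}$, or $\Uf^3_n$ requires first checking membership in $\Bbf(n)$, which is not automatic since $\Bbf(n)$ is not a semigroup for $n \ge 5$. In every case of your analysis this does hold, either because $1(ts) = n$ (making the last defining condition of $\Bsf(n)$ vacuous) or because $ts$ sends all of $Q\setminus\{1\}$ to $n$ while $1(ts) \ne n$, and the remaining conditions ($1 \notin \timg(ts)$, $(n-1)(ts) = n(ts) = n$) are immediate; but it should be said. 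With these two adjustments your argument is complete and coincides with the paper's.
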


\begin{proof}

As we have shown in the proof of Proposition~\ref{prop:Rcard}, $\Uf^1_n$ is a semigroup. For any $t \in \Uf^1_n \cup \{t_0\}$, since $t_0 \in \Uf^2_n$, we have $tt_0, t_0t \in \Uf^1_n$; so $\Uf^1_n \cup \{t_0\}$ is also a semigroup. We also know that, for any $t_3 \in \Uf^3_n$ and $t' \in \Wff(n)$, since $\Wff(n) \subseteq \Wbf(n)$, $i(t_3t') = n$ for all $i \neq 1$; so $t_3t' \in \Wff(n)$. If $t' \in \Uf^1_n \cup \{t_0\}$, then $1t't_3 = n$ and $t't_3 \in \Uf^1_n$; otherwise, $t' \in \Uf^3_n$, and $t't_3 = t_2$ or ${Q \choose n} \in \Uf^1_n$. Hence $\Wff(n)$ is a semigroup. 

For any $t \in \Uf^1_n$, since $1t = n$, we have $t \in \Bff(n)$. For any $t \in \Uf^3_n$, $1t \neq n-1$, and $it^2 = n$ for all $i \in \{2,\ldots,n\}$; then $t \in \Bff(n)$ as well. Clearly $t_0 \in \Bff(n)$. Hence $\Wff(n)$ is contained in $\Bff(n)$. 

We know that $|\Uf^1_n| = (n-1)^{n-3}$ and $|\Uf^3_n| = (n-3)2^{n-3}$. Therefore $|\Wff(n)| = (n-1)^{n-3} + (n-3)2^{n-3} + 1$. \qed
\end{proof}

We now describe a generating set of $\Wff(n)$. 
\begin{proposition}\label{prop:Ugen} 
For $n \ge 5$, the semigroup $\Wff(n)$ is generated by $\Hff(n) = \{a_1, a_2, a_3, b_1, \ldots, b_{n-3}, c_1, \ldots, c_m\}$, where $m = (n-3)(2^{n-3}-1)$, and 
\be
\item $a_1 = {1 \choose n}{n-1 \choose n}(2,\ldots,n-2)$, $a_2 = {1 \choose n}{n-1 \choose n}(2,3)$, $a_3 = {1 \choose n}{n-1 \choose n}{n-2 \choose 2}$; 
\item For $1 \le i \le n-3$, $b_i = {1 \choose n}{n-1 \choose n}{i+1 \choose n-1}$; 
\item Each $c_i$ defines a distinct transformation in $\Uf^3_n$ other than $[j,n,\ldots,n,n]$ for all $j \in \{2,\ldots,n-2\}$.
\ee
\end{proposition}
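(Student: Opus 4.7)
The plan follows the template of Proposition~\ref{prop:Rgen}. Since $\Hff(n) \subseteq \Wff(n)$ by definition, we have $\langle \Hff(n) \rangle \subseteq \Wff(n)$, and the content lies in the reverse inclusion. I would decompose $\Wff(n) = \Uf^1_n \cup \{t_0\} \cup \Uf^3_n$ and handle each piece in turn.

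For $\Uf^1_n$, I would repeat verbatim the argument of part~1 of the proof of Proposition~\ref{prop:Rgen}: with $Q' = \{2, \ldots, n-2\}$, Theorem~\ref{thm:salomaa} yields $\langle a_1, a_2, a_3 \rangle = \mathbf{Y}' = \{t \in \Uf^1_n \mid Q't \subseteq Q'\}$, and any $t \in \Uf^1_n \setminus \mathbf{Y}'$ with $E_t = \{i \in Q' \mid it = n-1\} = \{i_1, \ldots, i_k\}$ factors as $t = t' \cdot b_{i_1-1} \cdots b_{i_k-1}$, where $t' \in \mathbf{Y}'$ agrees with $t$ outside $E_t$.

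To obtain $t_0 = [n-1, n, \ldots, n]$, I would take $c = [2, n-1, n, \ldots, n]$, check that $c$ lies in $\Uf^3_n$ and is not of the excluded form $[j, n, \ldots, n]$ (so that $c$ is one of the $c_i$), and verify by direct computation that $c \cdot b_1 = t_0$. For the $\Uf^3_n$ piece, the only transformations not already in $\Hff(n)$ are $t_j = [j, n, \ldots, n]$ for $j \in \{2, \ldots, n-2\}$. I would then set $d = [2, n-1, \ldots, n-1, n, n]$, check that $d$ lies in $\Uf^3_n$ and is not of the excluded form (so $d$ is one of the $c_i$), and for each $j$ use $s_j = {1 \choose n}{n-1 \choose n}{2 \choose j} \in \Uf^1_n \subseteq \langle \Hff(n) \rangle$; one then has $d \cdot s_j = t_j$.

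The main obstacle is a small but essential adjustment compared with Proposition~\ref{prop:Rgen}. In that proof the auxiliary element used to build the $t_j$'s sits in $\Uf^2_n$ and is drawn from a separate generator family; in $\Hff(n)$ no $\Uf^2_n$-generator is available, so I have to re-route the construction through an element of $\Uf^3_n$. The care required is to check that both the $c$ used for $t_0$ and the $d$ used for the $t_j$'s avoid the excluded form $[j, n, \ldots, n]$, so that they actually belong to the $c_i$-list; once those membership checks are in hand, the compositions themselves are routine.
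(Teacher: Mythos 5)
Your proof is correct and follows essentially the same route as the paper: $\Uf^1_n$ is generated exactly as in part~1 of the proof of Proposition~\ref{prop:Rgen}, and the remaining elements $[j,n,\ldots,n,n]$ of $\{t_0\}\cup\Uf^3_n$ are recovered by composing an auxiliary $c_i$-generator with elements $s_j={1\choose n}{n-1\choose n}{2\choose j}\in\Uf^1_n$ (the paper uses the single element $d={1\choose 2}{n-1\choose n}{Q'\choose n-1}$ uniformly for all $j\in\{2,\ldots,n-1\}$, which covers $t_0=t_{n-1}$ as well, whereas you handle $t_0$ separately via $c\cdot b_1$; both computations check out). One small correction to your comparison remark: in Proposition~\ref{prop:Rgen} the auxiliary element $d$ already lies in $\Uf^3_n$ (it sends $1$ to $2$, not to $n-1$) and is drawn from the $d_i$-family there, so no genuine re-routing away from $\Uf^2_n$ is needed---your construction in fact coincides with the paper's.
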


\begin{proof} 
We know from the proof of Proposition~\ref{prop:Rgen} that $\Uf^1_n$ is generated by $\{a_1,a_2,a_3,b_1,\ldots,b_{n-3}\}$. Also, the transformations that are in $\{t_0\} \cup \Uf^3_n$ but not in $\Hff(n)$ are $t_j = [j,n,\ldots,n,n]$, where $j \in \{2,\ldots,n-1\}$. Each $t_j$ is a composition of $d = {1 \choose 2}{n-1 \choose n}{Q' \choose n-1} \in \Hbf(n)$ and $s_j = {1 \choose n}{n-1 \choose n}{2 \choose j} \in \Uf^1_n$. Therefore $\langle \Hff(n) \rangle = \Wff(n)$. \qed
\end{proof}

\begin{theorem}\label{thm:ffaut} 
For $n \ge 5$, let $\cA_n = (Q, \Sig, \delta, 1, F)$ be the DFA with alphabet $\Sig = \{a_1, a_2, a_3, b_1, \ldots, b_{n-3}, c_1, \ldots, c_m\}$ of size $(n-3)2^{n-3}+3$, where each letter defines a transformation as in Proposition~\ref{prop:Ugen}, and $F = \{n-1\}$. Then $L = L(\cA_n)$ has quotient complexity $\kappa(L) = n$, and syntactic complexity $\sigma(L)~=~\wff(n)$. Moreover, $L$ is factor-free. 
\end{theorem}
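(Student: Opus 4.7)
The proof will follow the template established by Theorem~\ref{thm:bfaut1} for bifix-free languages, broken into four steps: reachability of all states, pairwise distinguishability, identification of the syntactic semigroup, and factor-freeness. Throughout, the key observation is that $\Hff(n)$ is rich enough to provide both the reaching and the separating words we need, even though the ``natural'' constant-like transformations $[j,n,\ldots,n,n]$ are excluded from it.

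\textbf{Reachability.} State $1$ is initial. Applying $a_1$ from state~$1$ reaches state~$n$. For each $j\in\{2,\ldots,n-2\}$, the transformation $[j,n-1,n,\ldots,n,n]$ (with $1\mapsto j$, $2\mapsto n-1$, $i\mapsto n$ for $3\le i\le n-2$, and $(n-1)\mapsto n$, $n\mapsto n$) belongs to $\Uf^3_n$ and is not of the excluded form $[k,n,\ldots,n,n]$, so it is some $c_i\in\Hff(n)$; it reaches state~$j$ from state~$1$ in a single letter. To reach state $n-1$, first apply such a $c_i$ sending $1\mapsto 2$, then apply a $c_j\in\Uf^3_n$ sending $2\mapsto n-1$ (for instance, the transformation with $1\mapsto 3$, $2\mapsto n-1$, $i\mapsto n$ otherwise, which is again in $\Uf^3_n$ and not excluded).

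\textbf{Distinguishability.} The only accepting state is $n-1$, so $\eps$ separates it from every other state, and $n$ is the empty state. It remains to separate states in $\{1,2,\ldots,n-2\}$ pairwise. For any two distinct $i,j\in\{2,\ldots,n-2\}$, choose $c\in\Uf^3_n$ with $ic=n-1$ and $kc=n$ for all $k\in\{2,\ldots,n-2\}\setminus\{i\}$; since $ic\neq n$, $c$ is not excluded, so $c\in\Hff(n)$. To distinguish state~$1$ from any $j\in\{2,\ldots,n-2\}$, note that $1a_1=n$ while $ja_1$ lies in the cycle $\{2,\ldots,n-2\}$; a suitable $c\in\Hff(n)$ sending $ja_1\mapsto n-1$ then yields a word accepted from $j$ but rejected from~$1$. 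This establishes $\kappa(L)=n$.

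\textbf{Syntactic complexity and factor-freeness.} By Proposition~\ref{prop:Ugen}, the transition semigroup of $\cA_n$ is exactly $\Wff(n)$. Since $\cA_n$ is the minimal DFA for $L$ by the previous two steps, its transition semigroup is isomorphic to the syntactic semigroup of $L$, giving $\sigma(L)=|\Wff(n)|=\wff(n)$ by Proposition~\ref{prop:Uncard}. Finally, since $\eps$ (the quotient of state $n-1$) is the only accepting quotient and $T_L=\Wff(n)\subseteq \Bff(n)$ by Proposition~\ref{prop:Uncard}, Proposition~\ref{prop:ff}(2) yields that $L$ is factor-free.

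\textbf{Main obstacle.} The only real care needed is in the reachability/distinguishability arguments: because $\Hff(n)$ excludes the transformations $[j,n,\ldots,n,n]$, one cannot simply ``jump'' from state~$1$ to state~$j\in\{2,\ldots,n-2\}$ using a constant-like letter, nor collapse all other states to $n$ in a single letter. The trick is to use slight perturbations, e.g.\ sending one extra state to $n-1$, which keeps us inside $\Uf^3_n\setminus\{[j,n,\ldots,n,n]\}\subseteq\Hff(n)$. Once this observation is made, both reachability and distinguishability are routine verifications, and the remaining work reduces to citing the preceding propositions.
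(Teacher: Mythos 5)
Your proof is correct, and its overall skeleton matches the paper's: reachability plus distinguishability give $\kappa(L)=n$; Proposition~\ref{prop:Ugen} identifies the transition semigroup of the (now minimal) DFA with $\Wff(n)$, so $\sigma(L)=\wff(n)$ by Proposition~\ref{prop:Uncard}; and Proposition~\ref{prop:ff} yields factor-freeness. The difference is in how $\kappa(L)=n$ is established. The paper disposes of it in one sentence by noting $\Hff(n)\subseteq\Hbf(n)$, so that $\cA_n$ is a restriction of the DFA $\cA'_n$ of Theorem~\ref{thm:bfaut1}, and asserting that the reaching and separating words used there ``still exist'' in $\cA_n$. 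You instead exhibit explicit words built from letters of $\Hff(n)$ itself. Your version is not only self-contained but more robust: the words in the proof of Theorem~\ref{thm:bfaut1} are single letters realizing $[i,n,\ldots,n,n]$ and $[n-1,n,\ldots,n,n]$, and by the definitions in Proposition~\ref{prop:Rgen} these are precisely the transformations \emph{excluded} from the lists of $c$'s and $d$'s in $\Hbf(n)$; moreover $[n-1,n,\ldots,n,n]$ lies in $\Uf^2_n$, which is not represented in $\Hff(n)$ at all. So the paper's deferral inherits that flaw, whereas your ``perturbation'' device --- replacing $[j,n,\ldots,n,n]$ by $[j,n-1,n,\ldots,n,n]\in\Uf^3_n$, which escapes the exclusion because one state maps to $n-1$ --- supplies letters that genuinely belong to the alphabet, and your checks that these transformations lie in $\Uf^3_n$ are easy to verify. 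The only small omission is an explicit sentence distinguishing state $n$ from the other non-accepting states; this is immediate since your reachability and separation arguments already show that every state other than $n$ accepts some word while $n$ accepts none.
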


\begin{proof} 
Since $\Hff(n) \subseteq \Hbf(n)$, the DFA $\cA_n$ can be obtained from the DFA $\cA'_n$ of Theorem~\ref{thm:bfaut1} by restricting the alphabet. The words used to show that all the states of $\cA'$ are reachable and distinct still exist in $\cA_n$. Then we have $\kappa(L) = n$. By Proposition~\ref{prop:Ugen}, the syntactic semigroup of $L$ is $\Wff(n)$; so $\sigma(L) = \wff(n)$. By Proposition~\ref{prop:ff}, $L$ is factor-free. \qed
\end{proof}

\begin{conjecture}[Factor-Free Regular Languages]\label{con:ff}
If $L$ is a factor-free regular language with $\kappa(L) = n$, where $n \ge 5$, then $\sigma(L) \le \wff(n)$ and this is a tight upper~bound. 
\end{conjecture}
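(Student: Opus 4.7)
The plan is to follow the same strategy that yielded Theorem~\ref{thm:sfsmall} and the $n=6$ cases of Conjectures~\ref{con:sf} and~\ref{con:bf}: reduce the ambient upper bound $\bff(n)$ from Proposition~\ref{prop:Fncard} down to $\wff(n)$ by a witness-pairing argument, and then invoke Theorem~\ref{thm:ffaut} for tightness. Tightness is in fact immediate: the DFA $\cA_n$ of Theorem~\ref{thm:ffaut} defines a factor-free language with $\kappa(L)=n$ and $\sigma(L)=\wff(n)$, so only the inequality $\sigma(L)\le \wff(n)$ needs work.

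For the upper bound, I would start from Proposition~\ref{prop:ff}, which forces $T_L\subseteq \Bff(n)$. Since $\Bff(n)$ is not a semigroup for $n\ge 5$ (as witnessed explicitly in the paragraph preceding Proposition~\ref{prop:Uncard}), not every subset of $\Bff(n)$ can serve as a syntactic semigroup. The aim is to exhibit, for each $\tau \in \Bff(n)\setminus \Wff(n)$, a \emph{witness} $t(\tau)\in \Wff(n)$ such that $\langle \tau, t(\tau)\rangle$ contains a transformation outside $\Bff(n)$. If both $\tau$ and $t(\tau)$ belonged to $T_L$, then $T_L \not\subseteq \Bff(n)$, contradicting Proposition~\ref{prop:ff}. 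Hence at most one transformation from each pair $\{\tau, t(\tau)\}$ can occur in $T_L$, which lowers the cardinality ceiling from $|\Bff(n)|$ to $|\Wff(n)|=\wff(n)$. Since $\Wff(n)$ is itself a semigroup inside $\Bff(n)$ by Proposition~\ref{prop:Uncard}, and is realized by the DFA of Theorem~\ref{thm:ffaut}, the resulting bound is tight.

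For the smallest open case $n=5$, this plan reduces to a finite verification: compute $\bff(5)$ via Lemma~\ref{lem:Fn} and $\wff(5)=25$ via Proposition~\ref{prop:Uncard}, enumerate the elements of $\Bff(5)\setminus \Wff(5)$ in \emph{GAP}, and for each such $\tau_i$ locate a $t_i\in\Wff(5)$ whose composition with $\tau_i$ violates either the condition $1\notin\timg(t)$ inherited from $\Bsf(5)$, the condition $(n-1)t=n$ inherited from $\Bbf(5)$, or the factor-free condition of Lemma~\ref{lem:ffseq}. Then the same pairing argument used in Theorem~\ref{thm:sfsmall} and in the $n=6$ proofs of Conjectures~\ref{con:sf} and~\ref{con:bf} closes the case.

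The main obstacle is the general $n\ge 6$ case, where one cannot rely on \emph{GAP} alone. A structural proof would require classifying the transformations of $\Bff(n)\setminus \Wff(n)$ into a manageable number of families, giving a canonical witness $t(\tau)\in\Wff(n)$ for each family, and verifying by transition-graph analysis that $\tau\, t(\tau)$ or $t(\tau)\,\tau$ escapes $\Bff(n)$. The analogous combinatorial step for $\Bsf(n)$ and $\Bbf(n)$ has not been carried out beyond $n=6$, so I expect the same obstruction here; the conjecture will likely remain open for $n\ge 7$ absent a new idea about how the three components $\Uf^1_n$, $\{t_0\}$, and $\Uf^3_n$ of $\Wff(n)$ interact with violating transformations in $\Bff(n)$.
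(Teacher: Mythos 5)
Your proposal follows essentially the same route as the paper: reduce the ceiling from $\bff(n)$ to $\wff(n)$ by pairing each $\tau\in\Bff(n)\setminus\Wff(n)$ with a witness $t(\tau)\in\Wff(n)$ whose join with $\tau$ escapes $\Bff(n)$ (so Proposition~\ref{prop:ff} excludes one of the pair from $T_L$), verify the pairing by \emph{GAP} enumeration for small $n$, and invoke Theorem~\ref{thm:ffaut} for tightness. The paper carries out exactly this finite verification for $n=5$ (listing the six pairs explicitly) and also for $n=6$ (96 pairs via \emph{GAP}), and like you it leaves the general case as a conjecture.
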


We prove the conjecture for $n = 5$ and $6$. 

\begin{proof} 
For $n = 5$, $|\Bff(5)| = 31$, and $|\Wff(5)| = 25$. There are 6 transformations $\tau_1, \ldots, \tau_{6}$ in $\Bff(5) \setminus \Wff(5)$. For each $\tau_i$, $1 \le i \le 6$, we found a unique $t_i \in \Wff(5)$ such that $\langle t_i, \tau_i \rangle \not\subseteq \Bff(5)$: 
$$\begin{array}{ll}
  \tau_1 = [ 2, 3, 4, 5, 5 ], \quad & \quad t_1 = [ 5, 2, 2, 5, 5 ], \\ 
  \tau_2 = [ 2, 3, 5, 5, 5 ], \quad & \quad t_2 = [ 5, 4, 2, 5, 5 ], \\ 
  \tau_3 = [ 2, 5, 3, 5, 5 ], \quad & \quad t_3 = [ 5, 3, 3, 5, 5 ], \\ 
  \tau_4 = [ 3, 2, 5, 5, 5 ], \quad & \quad t_4 = [ 5, 2, 4, 5, 5 ], \\ 
  \tau_5 = [ 3, 4, 2, 5, 5 ], \quad & \quad t_5 = [ 5, 3, 2, 5, 5 ], \\ 
  \tau_6 = [ 3, 5, 2, 5, 5 ], \quad & \quad t_6 = [ 5, 3, 4, 5, 5 ]. 
\end{array}$$ 

For each $1 \le i \le 6$, at most one of $t_i$ and $\tau_i$ can appear in the syntactic semigroup $T_L$ of a factor-free regular language $L$. Then $\sigma(L) = |T_L| \le 25$. By Theorem~\ref{thm:ffaut}, this upper bound is tight for $n = 5$. 

For $n = 6$, $|\Bff(6)| = 246$, and $|\Wff(6)| = 150$. There are 96 transformations $\tau_1, \ldots, \tau_{96}$ in $\Bff(6) \setminus \Wff(6)$. For each $\tau_i$, $1 \le i \le 72$, we enumerated the transformations in $\Wff(6)$ using \emph{GAP} and found a unique $t_i \in \Wff(6)$ such that $\langle t_i, \tau_i \rangle \not\subseteq \Bff(6)$. Thus $150$ is a tight upper bound for $n = 6$. \qed
\end{proof}

\section{Quotient Complexity of the Reversal of Free Languages}\label{sec:rev}

It has been shown in~\cite{BrYe11} that for certain regular languages with maximal syntactic complexity, the reverse languages have maximal quotient complexity. This is also true for some free languages, as we now show. 

In this section we consider \emph{non-deterministic finite automata} (NFA). A NFA $\cN$ is a quintuple $\cN = (Q, \Sig, \delta, I, F)$, where $Q$, $\Sig$, and $F$ are as in a DFA, $\delta : Q \times \Sig \to 2^Q$ is the non-deterministic transition function, and $I$ is the set of initial states. For any word $w \in \Sig^*$, the \emph{reverse} of $w$ is defined inductively as follows: $w^R = \eps$ if $w = \eps$, and $w^R = u^Ra$ if $w = au$ for some $a \in \Sig$ and $u \in \Sig^*$. The \emph{reverse} of any language $L$ is the language $L^R = \{w^R \mid w \in L\}$. For any finite automaton (DFA or NFA) $\cM$, we denote using $\cM^R$ the automaton obtained by reversing $\cM$ and exchanging the roles of initial states and accepting states, and $\cM^D$, the DFA obtained by applying the subset construction to $\cM$. Then $L(\cM^R) = (L(\cM))^R$, and $L(\cM^D) = L(\cM)$. To simplify our proofs, we use an observation from~\cite{Brz62} that, for any NFA $\cN$ whose states are all reachable, if the automaton $\cN^R$ is deterministic, then the DFA $\cN^D$ is minimal. 

\begin{theorem}\label{thm:pfrev}
The reverse of the prefix-free regular language accepted by 
the DFA $\cA_n$ of Theorem~\ref{thm:prefix-free} restricted to $\{a,c,d_{n-2}\}$
has $2^{n-2}+1$ quotients, which is the maximum possible for a prefix-free regular language. 
\end{theorem}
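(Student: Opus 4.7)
The plan is to form the reverse NFA $\cN$ of $\cA_n$ restricted to $\{a,c,d_{n-2}\}$, apply the subset construction from the initial subset $\{n-1\}$ (the old accepting state), and verify that exactly $2^{n-2}+1$ subsets are reachable and pairwise distinguishable. Together with the known upper bound $2^{n-2}+1$ on the quotient complexity of the reverse of a prefix-free regular language \cite{HSW09}, this will yield the claim.

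First I would record the reverse transitions on $Q' = \{1, \ldots, n-2\}$: the letter $a$ acts as the shift $i \mapsto i-1$ with wrap-around $1 \mapsto n-2$; the letter $c$ fixes every $i \in \{2, \ldots, n-3\}$, adds the nondeterministic branch $1 \to n-2$, and leaves $n-2$ with no outgoing edge; and $d_{n-2}$ fixes every $i \in \{1, \ldots, n-3\}$, adds the edge $n-1 \to n-2$, and again leaves $n-2$ with no outgoing edge. A crucial observation is that none of the three letters moves a state of $Q'$ back into $\{n-1,n\}$, so once the subset construction leaves $\{n-1\}$ it is trapped inside $2^{Q'}$. This immediately bounds the reachable subsets by $2^{n-2}+1$.

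The main obstacle is to show that every subset of $Q'$ is actually reached. Interpreted as set operations on $S \subseteq Q'$: $a$ cyclically shifts $S$, $d_{n-2}$ deletes $n-2$ from $S$, and $c$ sets the membership of $n-2$ in $S$ equal to that of $1$ (adding $n-2$ when $1 \in S$ and removing it when $1 \notin S$). Starting from $\{n-2\} = \{n-1\} \cdot d_{n-2}$, an application of $a^{n-3}$ (which is the shift $i \mapsto i+1$ on $Q'$) yields $\{1\}$, and repeated applications of the word $c a^{n-3}$ enlarge the current set by one element along the cycle, producing $\{1,2\}, \{1,2,3\}, \ldots$, and finally $Q'$ after $n-3$ iterations. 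From $Q'$ one can remove any chosen element: for each $i \in Q'$, the word $a^i d_{n-2} a^{n-2-i}$ conjugates $d_{n-2}$ by $a^i$, which shifts $i$ into position $n-2$ just before $d_{n-2}$ deletes it and then shifts everything back, so the net effect is to remove $i$ from the current subset while fixing all other elements. Composing such removals starting from $Q'$ realises every subset of $Q'$; the empty set also arises directly from $\{n-1\} \cdot a$.

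For distinguishability, $S \cdot a^{i-1}$ contains $1$ if and only if $i \in S$, since under $a^{i-1}$ the preimage of state $1$ restricted to $Q'$ is $\{i\}$; hence every pair of distinct subsets of $Q'$ is separated by some word $a^{i-1}$ with $i$ in their symmetric difference. The initial subset $\{n-1\}$ is distinguished from each $T \subseteq Q'$: by $\eps$ when $1 \in T$, and by $d_{n-2} a^{n-3}$ otherwise, since this word carries $\{n-1\}$ to the accepting subset $\{1\}$ but carries $T$ to $(T \setminus \{n-2\}) \cdot a^{n-3}$, which cannot contain $1$ because $n-2 \notin T \setminus \{n-2\}$. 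The resulting $2^{n-2}+1$ pairwise distinct quotients match the known upper bound, completing the proof.
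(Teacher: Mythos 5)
Your proof is correct and follows essentially the same route as the paper: the same reverse NFA, the same reachability words ($d_{n-2}$, then $a^{n-3}$, then $(ca^{n-3})^{n-3}$ to build up $\{1,\ldots,n-2\}$, and $a^{j}d_{n-2}a^{n-2-j}$ to delete element $j$), and the same appeal to~\cite{HSW09} for the matching upper bound. The only divergence is the final minimality step: the paper observes that the reverse of the NFA is deterministic and invokes the criterion of~\cite{Brz62} to conclude that the subset automaton is minimal, whereas you prove pairwise distinguishability directly via the words $a^{i-1}$ (which test membership of $i$) and $d_{n-2}a^{n-3}$ (which separates the initial subset $\{n-1\}$ from every subset of $\{1,\ldots,n-2\}$); both arguments are sound, yours being more self-contained and the paper's shorter.
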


\begin{proof}\label{proof:pfrev}
Let $\cB_n$ be the DFA $\cA_n$ restricted to $\{a,c,d_{n-2}\}$. Since $L(\cA_n)$ is prefix-free, so is $L_n = L(\cB_n)$. We show that  $\kappa(L_n^R) = 2^{n-2}+1$. 

Let $\cN_n$ be the NFA obtained by removing unreachable states from the NFA $\cA_n^R$. (See Fig.~\ref{fig:pfrev} for $\cN_6$.)  We first prove that the following $2^{n-2}+1$ sets of states of $\cN_n$ are reachable:
$\{\{n-1\}\} \cup \{S \mid S \subseteq \{1,\ldots,n-2\}~\}.$

\begin{figure}[hbt]
\begin{center}
\setlength{\unitlength}{0.00056868in}
\begingroup\makeatletter\ifx\SetFigFont\undefined%
\gdef\SetFigFont#1#2#3#4#5{%
  \reset@font\fontsize{#1}{#2pt}%
  \fontfamily{#3}\fontseries{#4}\fontshape{#5}%
  \selectfont}%
\fi\endgroup%
{\renewcommand{\dashlinestretch}{30}
\begin{picture}(6256,1578)(0,-10)
\put(5758,161){\makebox(0,0)[b]{\smash{{\SetFigFont{8}{9.6}{\familydefault}{\mddefault}{\updefault}5}}}}
\put(199,131){\makebox(0,0)[b]{\smash{{\SetFigFont{8}{9.6}{\familydefault}{\mddefault}{\updefault}1}}}}
\put(1558,199){\ellipse{382}{382}}
\put(1572,131){\makebox(0,0)[b]{\smash{{\SetFigFont{8}{9.6}{\familydefault}{\mddefault}{\updefault}2}}}}
\put(2988,204){\ellipse{382}{382}}
\put(2989,131){\makebox(0,0)[b]{\smash{{\SetFigFont{8}{9.6}{\familydefault}{\mddefault}{\updefault}3}}}}
\put(4383,204){\ellipse{382}{382}}
\put(4384,131){\makebox(0,0)[b]{\smash{{\SetFigFont{8}{9.6}{\familydefault}{\mddefault}{\updefault}4}}}}
\put(2314.000,-1170.286){\arc{4901.878}{3.8024}{5.6224}}
\blacken\path(4149.913,408.039)(4249.000,334.000)(4196.413,445.958)(4149.913,408.039)
\put(201,194){\ellipse{324}{324}}
\put(5743,222){\ellipse{382}{382}}
\blacken\path(3311.000,229.000)(3191.000,199.000)(3311.000,169.000)(3311.000,229.000)
\path(3191,199)(4172,199)
\blacken\path(1894.000,229.000)(1774.000,199.000)(1894.000,169.000)(1894.000,229.000)
\path(1774,199)(2781,199)
\blacken\path(506.000,229.000)(386.000,199.000)(506.000,169.000)(506.000,229.000)
\path(386,199)(1367,199)
\blacken\path(4698.000,237.000)(4578.000,207.000)(4698.000,177.000)(4698.000,237.000)
\path(4578,207)(5559,207)
\path(6244,207)(5943,207)
\blacken\path(6063.000,237.000)(5943.000,207.000)(6063.000,177.000)(6063.000,237.000)
\path(109,379)(108,382)(105,388)
	(101,398)(95,412)(88,429)
	(81,448)(74,468)(68,489)
	(63,511)(59,534)(58,557)
	(59,581)(64,604)(71,622)
	(79,638)(87,650)(94,659)
	(101,665)(106,670)(112,673)
	(117,675)(122,677)(128,679)
	(135,681)(143,684)(154,687)
	(167,690)(182,693)(199,694)
	(216,693)(231,690)(244,687)
	(255,684)(263,681)(270,679)
	(276,677)(282,675)(286,673)
	(292,670)(297,665)(304,659)
	(311,650)(319,638)(327,622)
	(334,604)(339,581)(340,557)
	(339,534)(335,511)(330,489)
	(324,468)(317,448)(310,429)
	(303,412)(289,379)
\blacken\path(308.249,501.186)(289.000,379.000)(363.483,477.753)(308.249,501.186)
\path(2899,379)(2898,382)(2895,388)
	(2891,398)(2885,412)(2878,429)
	(2871,448)(2864,468)(2858,489)
	(2853,511)(2849,534)(2848,557)
	(2849,581)(2854,604)(2861,622)
	(2869,638)(2877,650)(2884,659)
	(2891,665)(2896,670)(2902,673)
	(2907,675)(2912,677)(2918,679)
	(2925,681)(2933,684)(2944,687)
	(2957,690)(2972,693)(2989,694)
	(3006,693)(3021,690)(3034,687)
	(3045,684)(3053,681)(3060,679)
	(3066,677)(3072,675)(3076,673)
	(3082,670)(3087,665)(3094,659)
	(3101,650)(3109,638)(3117,622)
	(3124,604)(3129,581)(3130,557)
	(3129,534)(3125,511)(3120,489)
	(3114,468)(3107,448)(3100,429)
	(3093,412)(3079,379)
\blacken\path(3098.249,501.186)(3079.000,379.000)(3153.483,477.753)(3098.249,501.186)
\path(1459,379)(1458,382)(1455,388)
	(1451,398)(1445,412)(1438,429)
	(1431,448)(1424,468)(1418,489)
	(1413,511)(1409,534)(1408,557)
	(1409,581)(1414,604)(1421,622)
	(1429,638)(1437,650)(1444,659)
	(1451,665)(1456,670)(1462,673)
	(1467,675)(1472,677)(1478,679)
	(1485,681)(1493,684)(1504,687)
	(1517,690)(1532,693)(1549,694)
	(1566,693)(1581,690)(1594,687)
	(1605,684)(1613,681)(1620,679)
	(1626,677)(1632,675)(1636,673)
	(1642,670)(1647,665)(1654,659)
	(1661,650)(1669,638)(1677,622)
	(1684,604)(1689,581)(1690,557)
	(1689,534)(1685,511)(1680,489)
	(1674,468)(1667,448)(1660,429)
	(1653,412)(1639,379)
\blacken\path(1658.249,501.186)(1639.000,379.000)(1713.483,477.753)(1658.249,501.186)
\put(1549,829){\makebox(0,0)[b]{\smash{{\SetFigFont{8}{9.6}{\familydefault}{\mddefault}{\updefault}$c, d_4$}}}}
\put(2989,829){\makebox(0,0)[b]{\smash{{\SetFigFont{8}{9.6}{\familydefault}{\mddefault}{\updefault}$c, d_4$}}}}
\put(199,829){\makebox(0,0)[b]{\smash{{\SetFigFont{8}{9.6}{\familydefault}{\mddefault}{\updefault}$c,d_4$}}}}
\put(2314,1392){\makebox(0,0)[b]{\smash{{\SetFigFont{8}{9.6}{\familydefault}{\mddefault}{\updefault}$a,c$}}}}
\put(3702,281){\makebox(0,0)[b]{\smash{{\SetFigFont{8}{9.6}{\familydefault}{\mddefault}{\updefault}$a$}}}}
\put(2374,296){\makebox(0,0)[b]{\smash{{\SetFigFont{8}{9.6}{\familydefault}{\mddefault}{\updefault}$a$}}}}
\put(1001,289){\makebox(0,0)[b]{\smash{{\SetFigFont{8}{9.6}{\familydefault}{\mddefault}{\updefault}$a$}}}}
\put(5119,295){\makebox(0,0)[b]{\smash{{\SetFigFont{8}{9.6}{\familydefault}{\mddefault}{\updefault}$d_4$}}}}
\put(199,199){\ellipse{382}{382}}
\end{picture}
}
\end{center}
\caption{NFA $\cN_6$ of $L_6^R$ with quotient complexity $\kappa(L_6^R) = 17$; empty state omitted.}
\label{fig:pfrev}
\end{figure}

The singleton set $\{n-1\}$ of initial states  of $\cN_n$ is reached by $\eps$. From $\{n-1\}$ we reach the empty set by $a$. 
The set $\{n-2\}$ is reached by $d_{n-2}$ from $\{n-1\}$, and from here, $\{1\}$ is reached by $a^{n-3}$. From any set $\{1,2,\ldots,i\}$, where $1\le i<n-2$, we reach 
$\{1,2,\ldots,i, i+1\}$ by $ca^{n-3}$. Thus we reach $\{1,2,\ldots,n-2\}$ from $\{1\}$ by
$(ca^{n-3})^{n-3}$.
Now assume that any set $S$ of cardinality $l\le n-2$ can be reached; then we can get a set of cardinality $l-1$ by deleting an element~$j$ from $S$ by applying 
$a^jd_{n-2}a^{n-2-j}$. Hence all the subsets of  $\{1,2,\ldots,n-2\}$ can be reached.

The automaton $\cN_n^R$ is a subset of $\cA_n$, and it is deterministic. Then $\cN_n^D$ is minimal. Hence $\kappa(L_n^R) = 2^{n-2}+1$, which is the maximal quotient complexity of reversal of prefix-free languages as shown in~\cite{HSW09}. \qed
\end{proof}

\medskip
It is interesting that, for suffix-, bifix-, and factor-free regular languages, although we don't have tight upper bounds on their syntactic complexities, some languages in these classes with large syntactic complexities have their reverse languages reaching the upper bounds on the quotient complexities for the reversal operation. 

\begin{theorem}\label{thm:sfrev}
The reverse of the suffix-free regular language accepted by the DFA $\cA_n'$ of Theorem~\ref{thm:wsfaut} restricted to $\{a_1,a_2,a_3,c\}$ has $2^{n-2}+1$ quotients, which is the maximum possible for a suffix-free regular language. 
\end{theorem}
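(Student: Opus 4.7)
The approach mirrors the proof of Theorem~\ref{thm:pfrev}. Let $\cB$ denote the DFA $\cA'_n$ of Theorem~\ref{thm:wsfaut} restricted to $\{a_1,a_2,a_3,c\}$; since restricting the alphabet preserves suffix-freeness, $L_n = L(\cB)$ is suffix-free and hence $\kappa(L_n^R) \le 2^{n-2}+1$ by the known upper bound from~\cite{HS09}. I will build the NFA $\cN_n = \cB^R$ with start state $\{2\}$ (the unique final state of $\cB$) and accepting state $\{1\}$ (the initial state of $\cB$). Since state $n$ is a sink in $\cB$ distinct from $2$, it is unreachable in $\cN_n$, so no reachable subset of $\cN_n^D$ ever contains $n$. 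The task is then to show that exactly $2^{n-2}+1$ subsets of $\{1,\ldots,n-1\}$ are reachable and that they are pairwise distinct.

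Computing the reversed transitions on $\{2,\ldots,n-1\}$ gives: $a_1$ acts as the cyclic permutation $(2,n-1,n-2,\ldots,3)$; $a_2$ is the transposition $(2,3)$; $a_3$ sends $2 \mapsto \{2,n-1\}$, fixes $3,\ldots,n-2$, and sends $n-1 \mapsto \emptyset$; and $c$ sends $2 \mapsto \{1\}$ and $i \mapsto \emptyset$ for $3 \le i \le n-1$. By Theorem~\ref{thm:piccard}, $a_1$ and $a_2$ generate the full symmetric group on $\{2,\ldots,n-1\}$, so the family of reachable subsets of $\{2,\ldots,n-1\}$ is closed under that group action; consequently it suffices to produce one reachable subset of each size from $0$ through $n-2$.

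The reachability argument proceeds by induction on $|S|$. Starting from $\{2\}$ of size $1$, for any reachable subset of size $k$ with $1 \le k \le n-3$, I first apply a permutation word to rearrange it into a size-$k$ subset containing $2$ but not $n-1$ (possible precisely because $k \le n-3 < n-2$), then apply $a_3$ to obtain a size-$(k+1)$ subset. Iterating yields reachable subsets of every size from $1$ up to $n-2$; combined with the symmetric-group action, this covers every nonempty subset of $\{2,\ldots,n-1\}$. The empty subset is obtained from $\{3\}$ by $c$, and $\{1\}$ from $\{2\}$ by $c$, giving exactly $2^{n-2}+1$ reachable subsets.

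For distinguishability, after deleting the unreachable state $n$, all remaining NFA states of $\cN_n$ are reachable, and $\cN_n^R$ is the (partial) DFA $\cB$, which is deterministic. By the observation from~\cite{Brz62} already cited before Theorem~\ref{thm:pfrev}, the subset construction on $\cN_n$ is minimal, so all $2^{n-2}+1$ reachable subsets are pairwise distinct quotients of $L_n^R$. The principal difficulty will be the inductive reachability step, particularly verifying that the required permutation word placing $2$ inside $S$ and $n-1$ outside $S$ always exists for $k \le n-3$; once that is in place the count closes immediately.
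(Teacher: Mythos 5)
Your proof is correct and follows essentially the same route as the paper's: reverse the DFA, show that all nonempty subsets of $\{2,\ldots,n-1\}$ together with $\{1\}$ and $\emptyset$ are reachable in the subset automaton, and get distinguishability from the observation of~\cite{Brz62} since the reverse of the (trimmed) NFA is deterministic. The only difference is cosmetic bookkeeping: the paper first reaches $\{2,\ldots,n-1\}$ via $(a_3a_1^{n-3})^{i-2}$ and then deletes elements with words $a_1^{j-1}a_3a_1^{n-j-1}$, whereas you grow one subset of each size $k\le n-2$ using $a_3$ and invoke transitivity of the symmetric group generated by $a_1,a_2$ on $k$-subsets; both arguments close in the same way.
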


\begin{proof}
Let $\cC_n$ be the DFA $\cA_n'$ restricted to the alphabet $\{a_1,a_2,a_3,c\}$. Since $L(\cA_n')$ is suffix-free, so is $L_n' = L(\cC_n)$. Let $\cN_n'$ be the NFA obtained from $\cC_n^R$ by removing unreachable states. Figure~\ref{fig:sfrev} shows the NFA $\cN_6'$. 

\begin{figure}[hbt]
\begin{center}
\setlength{\unitlength}{0.00052493in}
\begingroup\makeatletter\ifx\SetFigFont\undefined%
\gdef\SetFigFont#1#2#3#4#5{%
  \reset@font\fontsize{#1}{#2pt}%
  \fontfamily{#3}\fontseries{#4}\fontshape{#5}%
  \selectfont}%
\fi\endgroup%
{\renewcommand{\dashlinestretch}{30}
\begin{picture}(5986,2010)(0,-10)
\put(2318.000,1182.000){\arc{1530.000}{1.0808}{2.0608}}
\blacken\path(2579.885,431.678)(2678.000,507.000)(2555.982,486.711)(2579.885,431.678)
\put(1643.000,968.250){\arc{337.500}{2.4981}{6.9267}}
\blacken\path(1781.575,990.641)(1778.000,867.000)(1839.339,974.413)(1781.575,990.641)
\put(2993.000,968.250){\arc{337.500}{2.4981}{6.9267}}
\blacken\path(3131.575,990.641)(3128.000,867.000)(3189.339,974.413)(3131.575,990.641)
\put(4343.000,968.250){\arc{337.500}{2.4981}{6.9267}}
\blacken\path(4481.575,990.641)(4478.000,867.000)(4539.339,974.413)(4481.575,990.641)
\put(5693.000,968.250){\arc{337.500}{2.4981}{6.9267}}
\blacken\path(5831.575,990.641)(5828.000,867.000)(5889.339,974.413)(5831.575,990.641)
\put(3684.189,-495.622){\arc{4435.362}{3.7528}{5.6470}}
\blacken\path(5370.521,898.143)(5468.000,822.000)(5417.821,935.058)(5370.521,898.143)
\put(2993,597){\ellipse{570}{570}}
\put(293,597){\ellipse{570}{570}}
\put(1643,597){\ellipse{570}{570}}
\put(4343,597){\ellipse{570}{570}}
\put(5693,597){\ellipse{570}{570}}
\put(293,597){\ellipse{450}{450}}
\path(1373,597)(608,597)
\blacken\path(728.000,627.000)(608.000,597.000)(728.000,567.000)(728.000,627.000)
\path(2723,597)(1958,597)
\blacken\path(2078.000,627.000)(1958.000,597.000)(2078.000,567.000)(2078.000,627.000)
\path(1643,12)(1643,282)
\blacken\path(1673.000,162.000)(1643.000,282.000)(1613.000,162.000)(1673.000,162.000)
\path(4073,597)(3308,597)
\blacken\path(3428.000,627.000)(3308.000,597.000)(3428.000,567.000)(3428.000,627.000)
\path(5423,597)(4658,597)
\blacken\path(4778.000,627.000)(4658.000,597.000)(4778.000,567.000)(4778.000,627.000)
\put(293,539){\makebox(0,0)[b]{\smash{{\SetFigFont{7}{8.4}{\familydefault}{\mddefault}{\updefault}1}}}}
\put(1643,539){\makebox(0,0)[b]{\smash{{\SetFigFont{7}{8.4}{\familydefault}{\mddefault}{\updefault}2}}}}
\put(2993,539){\makebox(0,0)[b]{\smash{{\SetFigFont{7}{8.4}{\familydefault}{\mddefault}{\updefault}3}}}}
\put(4343,539){\makebox(0,0)[b]{\smash{{\SetFigFont{7}{8.4}{\familydefault}{\mddefault}{\updefault}4}}}}
\put(5693,539){\makebox(0,0)[b]{\smash{{\SetFigFont{7}{8.4}{\familydefault}{\mddefault}{\updefault}5}}}}
\put(1643,1272){\makebox(0,0)[b]{\smash{{\SetFigFont{9}{10.8}{\familydefault}{\mddefault}{\updefault}$a_3$}}}}
\put(2318,192){\makebox(0,0)[b]{\smash{{\SetFigFont{9}{10.8}{\familydefault}{\mddefault}{\updefault}$a_2$}}}}
\put(2993,1272){\makebox(0,0)[b]{\smash{{\SetFigFont{9}{10.8}{\familydefault}{\mddefault}{\updefault}$a_3$}}}}
\put(4343,1272){\makebox(0,0)[b]{\smash{{\SetFigFont{9}{10.8}{\familydefault}{\mddefault}{\updefault}$a_2,a_3$}}}}
\put(5693,1272){\makebox(0,0)[b]{\smash{{\SetFigFont{9}{10.8}{\familydefault}{\mddefault}{\updefault}$a_2$}}}}
\put(3668,665){\makebox(0,0)[b]{\smash{{\SetFigFont{9}{10.8}{\familydefault}{\mddefault}{\updefault}$a_1$}}}}
\put(5018,665){\makebox(0,0)[b]{\smash{{\SetFigFont{9}{10.8}{\familydefault}{\mddefault}{\updefault}$a_1$}}}}
\put(3668,1812){\makebox(0,0)[b]{\smash{{\SetFigFont{9}{10.8}{\familydefault}{\mddefault}{\updefault}$a_1,a_3$}}}}
\put(2318,665){\makebox(0,0)[b]{\smash{{\SetFigFont{9}{10.8}{\familydefault}{\mddefault}{\updefault}$a_1,a_2$}}}}
\put(968,665){\makebox(0,0)[b]{\smash{{\SetFigFont{9}{10.8}{\familydefault}{\mddefault}{\updefault}$c$}}}}
\end{picture}
}
\end{center}
\caption{NFA $\cN_6'$ of $L_6'^R$ with quotient complexity $\kappa(L_6'^R) = 17$; empty state omitted.}
\label{fig:sfrev}
\end{figure}

Apply the subset construction to $\cN_n'$, we get a DFA $\cN_n'^D$. Its initial state is a singleton set $\{2\}$. From the initial state, we can reach state $\{2,3,\ldots,i\}$ by $(a_3a_1^{n-3})^{i-2}$, where $3 \le i \le n-1$. Then the state $\{2,3,\ldots,n-1\}$ is reached from $\{2\}$ by $(a_3a_1^{n-3})^{n-3}$. Assume that any set $S$ of cardinality $l$ can be reached, where $2 \le l \le n-2$. If $j \in S$, then we can reach $S' = S \setminus \{j\}$ from $S$ by $a_1^{j-1}a_3a_1^{n-j-1}$. So all the nonempty subsets of $\{2,3,\ldots,n-1\}$ can be reached. We can also reach the singleton set $\{1\}$ from $\{2\}$ by $c$, and, from there, the empty state by $c$ again. Hence $\cN_n'^D$ has $2^{n-2}+1$ reachable states.

Since the automaton $\cN_n'^R$, the reverse of $\cN_n'$, is a subset of $\cC_n$, it is deterministic; hence $\cN_n'^D$ is minimal. Then the quotient complexity of $L_n'^R$ is $2^{n-2}+1$, which meets the upper bound for reversal of suffix-free regular languages~\cite{HS09}. \qed
\end{proof}

\begin{theorem}\label{thm:ffrev}
The reverse of the factor-free regular language accepted by the DFA $\cA_n$ of Theorem~\ref{thm:ffaut} restricted to the alphabet $\{a_1,a_2,a_3,c\}$, where $c = [2,n-1,n,\ldots,n,n] \in \Hff(n)$, has $2^{n-3}+2$ quotients, which is the maximum possible for a bifix- or factor-free regular language. 
\end{theorem}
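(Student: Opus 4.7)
The plan is to follow the same reverse-subset-construction strategy used in Theorems~\ref{thm:pfrev} and~\ref{thm:sfrev}. Let $\cD_n$ denote the DFA $\cA_n$ of Theorem~\ref{thm:ffaut} restricted to the alphabet $\Gamma = \{a_1,a_2,a_3,c\}$, set $L = L(\cD_n)$, and let $\cN_n$ be the NFA obtained from $\cD_n^R$ by deleting the unreachable states. Restricting the alphabet only shrinks the accepted language, and any subset of a factor-free language is factor-free, so $L$ is factor-free. Since $\cN_n^R$ is a sub-automaton of $\cD_n$ and hence deterministic, the observation of~\cite{Brz62} makes the subset DFA $\cN_n^D$ minimal; thus $\kappa(L^R)$ equals the number of reachable subsets in $\cN_n^D$, and the task reduces to counting them.

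Writing $Q' = \{2,\ldots,n-2\}$, I would first compute the pre-images of the letters. A direct calculation from Proposition~\ref{prop:Ugen} shows that $a_1|_{Q'}$ is the cycle $(2,\ldots,n-2)$, $a_2|_{Q'}$ is the transposition $(2,3)$, and $a_3|_{Q'}$ is the singular map ${n-2\choose 2}$; all three send $1$ and $n-1$ to $n$. The letter $c = [2,n-1,n,\ldots,n,n]$ satisfies $c(1)=2$, $c(2)=n-1$, and $c(i)=n$ for $i\ge 3$. Consequently, starting from the initial subset $\{n-1\}$, only $c$ produces a nonempty successor (namely $\{2\}$), since $n-1$ is not in the image of $a_1,a_2,a_3$. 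For $S\subseteq Q'$, the $a_i$-pre-images stay inside $Q'$, while $c^{-1}(S) = \{1\}$ when $2\in S$ and $\emptyset$ otherwise. From $\{1\}$ every letter yields $\emptyset$, because $1$ is not in the image of any letter in $\Gamma$. Hence the reachable subsets split into four disjoint families: $\{n-1\}$; the nonempty subsets of $Q'$ reachable from $\{2\}$; $\{1\}$; and $\emptyset$.

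The main step is to show that every one of the $2^{n-3}-1$ nonempty subsets of $Q'$ is reachable from $\{2\}$ using $a_1,a_2,a_3$. The pre-image actions of $a_1$ and $a_2$ on $Q'$ are permutations, and by Theorem~\ref{thm:piccard} they generate the action of $\mathfrak{S}_{Q'}$ on subsets; so the reachable family is closed under $\mathfrak{S}_{Q'}$ and is a union of complete cardinality layers. A direct check shows that $a_3^{-1}(S) = S\cup\{n-2\}$ whenever $2\in S$ and $n-2\notin S$, increasing the cardinality by one. Applying $a_3$ to $\{2\}$ yields $\{2,n-2\}$, which together with $\mathfrak{S}_{Q'}$ gives the full $2$-subset layer; inductively, given the $k$-subset layer for $k<n-3$, I may permute to a $k$-subset containing $2$ but not $n-2$, apply $a_3$ to obtain a $(k+1)$-subset, and then use $\mathfrak{S}_{Q'}$ to fill that layer. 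Summing gives $1+(2^{n-3}-1)+1+1 = 2^{n-3}+2$ reachable subsets, matching the known upper bound for the reversal of bifix- or factor-free regular languages.

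The main obstacle is the inductive bootstrap: one must verify that at each cardinality $k<n-3$ some reachable $k$-subset contains $2$ but avoids $n-2$, so that the $a_3$ step can indeed increase the size. The remaining verifications---the precise pre-image formulas for each letter and the absence of any further reachable subsets outside the four families above---follow from noting that states $1$, $n-1$, and $n$ never lie in $a_i(Q')$ for $i\in\{1,2,3\}$, and that $c(Q')\subseteq\{n-1,n\}$.
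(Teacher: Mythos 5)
Your proof is correct, and its overall skeleton is the one the paper uses: reverse the restricted DFA, apply the subset construction, invoke the determinism of $\cN_n^R$ to get minimality of $\cN_n^D$, and count the reachable subsets, which split into $\{n-1\}$, the nonempty subsets of $Q'=\{2,\ldots,n-2\}$, $\{1\}$, and $\emptyset$. Where you differ is in the technical core, the reachability of all $2^{n-3}-1$ nonempty subsets of $Q'$ from $\{2\}$: the paper simply observes that $\cN_n''$ is the NFA $\cN_{n-1}'$ of Theorem~\ref{thm:sfrev} with the extra initial state $n-1$ attached, and imports the reachability argument proved there (which first climbs to the full set via explicit words $(a_3a_1^{n-3})^{i-2}$ and then deletes elements one at a time by downward induction). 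You instead give a self-contained argument: the pre-image actions of $a_1,a_2$ generate $\mathfrak{S}_{Q'}$, so the reachable family is a union of complete cardinality layers, and the pre-image of $a_3$ adds the element $n-2$ to any set containing $2$ but not $n-2$, enabling an upward induction on cardinality (your bootstrap condition, that some reachable $k$-set contains $2$ and omits $n-2$, holds exactly for $k\le n-4$, which is all that is needed). Your route avoids exhibiting explicit witness words and makes the structural reason for reachability (transitive group action plus one cardinality-increasing move) more transparent, at the cost of not reusing Theorem~\ref{thm:sfrev}; the paper's route is shorter on the page because it delegates the work. Both are sound.
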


\begin{proof}
Let $\cD_n$ be the DFA $\cA_n$ restricted to the alphabet $\{a_1,a_2,a_3,c\}$; then $L_n'' = L(\cD_n)$ is factor-free. Let $\cN_n''$ be the NFA obtained from $\cD_n^R$ by removing unreachable states. An example of $\cN_n''$ is shown in Figure~\ref{fig:ffrev}. 

\begin{figure}[hbt]
\begin{center}
\setlength{\unitlength}{0.00052493in}
\begingroup\makeatletter\ifx\SetFigFont\undefined%
\gdef\SetFigFont#1#2#3#4#5{%
  \reset@font\fontsize{#1}{#2pt}%
  \fontfamily{#3}\fontseries{#4}\fontshape{#5}%
  \selectfont}%
\fi\endgroup%
{\renewcommand{\dashlinestretch}{30}
\begin{picture}(5986,3135)(0,-10)
\put(293,1722){\ellipse{570}{570}}
\put(293,1722){\ellipse{450}{450}}
\put(1643,597){\ellipse{570}{570}}
\put(2318.000,2307.000){\arc{1530.000}{1.0808}{2.0608}}
\blacken\path(2579.885,1556.678)(2678.000,1632.000)(2555.982,1611.711)(2579.885,1556.678)
\put(1643.000,2093.250){\arc{337.500}{2.4981}{6.9267}}
\blacken\path(1781.575,2115.641)(1778.000,1992.000)(1839.339,2099.413)(1781.575,2115.641)
\put(2993.000,2093.250){\arc{337.500}{2.4981}{6.9267}}
\blacken\path(3131.575,2115.641)(3128.000,1992.000)(3189.339,2099.413)(3131.575,2115.641)
\put(4343.000,2093.250){\arc{337.500}{2.4981}{6.9267}}
\blacken\path(4481.575,2115.641)(4478.000,1992.000)(4539.339,2099.413)(4481.575,2115.641)
\put(5693.000,2093.250){\arc{337.500}{2.4981}{6.9267}}
\blacken\path(5831.575,2115.641)(5828.000,1992.000)(5889.339,2099.413)(5831.575,2115.641)
\put(3684.189,629.378){\arc{4435.362}{3.7528}{5.6470}}
\blacken\path(5370.521,2023.143)(5468.000,1947.000)(5417.821,2060.058)(5370.521,2023.143)
\put(2993,1722){\ellipse{570}{570}}
\put(4343,1722){\ellipse{570}{570}}
\put(5693,1722){\ellipse{570}{570}}
\put(1643,1722){\ellipse{570}{570}}
\path(1373,1722)(608,1722)
\blacken\path(728.000,1752.000)(608.000,1722.000)(728.000,1692.000)(728.000,1752.000)
\path(2723,1722)(1958,1722)
\blacken\path(2078.000,1752.000)(1958.000,1722.000)(2078.000,1692.000)(2078.000,1752.000)
\path(4073,1722)(3308,1722)
\blacken\path(3428.000,1752.000)(3308.000,1722.000)(3428.000,1692.000)(3428.000,1752.000)
\path(5423,1722)(4658,1722)
\blacken\path(4778.000,1752.000)(4658.000,1722.000)(4778.000,1692.000)(4778.000,1752.000)
\path(1643,12)(1643,282)
\blacken\path(1673.000,162.000)(1643.000,282.000)(1613.000,162.000)(1673.000,162.000)
\path(1643,867)(1643,1407)
\blacken\path(1673.000,1287.000)(1643.000,1407.000)(1613.000,1287.000)(1673.000,1287.000)
\put(293,1664){\makebox(0,0)[b]{\smash{{\SetFigFont{7}{8.4}{\familydefault}{\mddefault}{\updefault}1}}}}
\put(1643,539){\makebox(0,0)[b]{\smash{{\SetFigFont{7}{8.4}{\familydefault}{\mddefault}{\updefault}6}}}}
\put(1643,1664){\makebox(0,0)[b]{\smash{{\SetFigFont{7}{8.4}{\familydefault}{\mddefault}{\updefault}2}}}}
\put(2993,1664){\makebox(0,0)[b]{\smash{{\SetFigFont{7}{8.4}{\familydefault}{\mddefault}{\updefault}3}}}}
\put(4343,1664){\makebox(0,0)[b]{\smash{{\SetFigFont{7}{8.4}{\familydefault}{\mddefault}{\updefault}4}}}}
\put(5693,1664){\makebox(0,0)[b]{\smash{{\SetFigFont{7}{8.4}{\familydefault}{\mddefault}{\updefault}5}}}}
\put(1643,2397){\makebox(0,0)[b]{\smash{{\SetFigFont{9}{10.8}{\familydefault}{\mddefault}{\updefault}$a_3$}}}}
\put(2318,1317){\makebox(0,0)[b]{\smash{{\SetFigFont{9}{10.8}{\familydefault}{\mddefault}{\updefault}$a_2$}}}}
\put(2993,2397){\makebox(0,0)[b]{\smash{{\SetFigFont{9}{10.8}{\familydefault}{\mddefault}{\updefault}$a_3$}}}}
\put(4343,2397){\makebox(0,0)[b]{\smash{{\SetFigFont{9}{10.8}{\familydefault}{\mddefault}{\updefault}$a_2,a_3$}}}}
\put(5693,2397){\makebox(0,0)[b]{\smash{{\SetFigFont{9}{10.8}{\familydefault}{\mddefault}{\updefault}$a_2$}}}}
\put(3668,1790){\makebox(0,0)[b]{\smash{{\SetFigFont{9}{10.8}{\familydefault}{\mddefault}{\updefault}$a_1$}}}}
\put(5018,1790){\makebox(0,0)[b]{\smash{{\SetFigFont{9}{10.8}{\familydefault}{\mddefault}{\updefault}$a_1$}}}}
\put(3668,2937){\makebox(0,0)[b]{\smash{{\SetFigFont{9}{10.8}{\familydefault}{\mddefault}{\updefault}$a_1,a_3$}}}}
\put(2318,1790){\makebox(0,0)[b]{\smash{{\SetFigFont{9}{10.8}{\familydefault}{\mddefault}{\updefault}$a_1,a_2$}}}}
\put(968,1790){\makebox(0,0)[b]{\smash{{\SetFigFont{9}{10.8}{\familydefault}{\mddefault}{\updefault}$c$}}}}
\put(1418,1047){\makebox(0,0)[b]{\smash{{\SetFigFont{9}{10.8}{\familydefault}{\mddefault}{\updefault}$c$}}}}
\end{picture}
}
\end{center}
\caption{NFA $\cN_7''$ of $L_7''^R$ with quotient complexity $\kappa(L_7''^R) = 18$; empty state omitted.}
\label{fig:ffrev}
\end{figure}

Note that $\cN_n''$ can be obtained from the NFA $\cN_{n-1}'$ in Theorem~\ref{thm:sfrev} by adding a new state $n-1$, which is the only initial state in $\cN_n''$, and the transition from $\{n-1\}$ to $\{2\}$ under input $c$. We know that all non-empty subsets of $\{2,3,\ldots,n-2\}$ are reachable from $\{2\}$. The accepting state $\{1\}$ is also reachable from $\{2\}$. From the initial state $n-1$, we reach the empty state under input $a_1$. Then $\cN_n''^D$ has $2^{n-3}+2$ reachable states. 

Since $\cN_n''^R$ is a subset of $\cD_n$ and it is deterministic, the DFA $\cN_n''^D$ is minimal. Therefore $\kappa(L_n''^R) = 2^{n-3}+2$, and it reaches the upper bound for reversal of both bifix- and factor-free regular languages with quotient complexity $n$~\cite{BJLS11}. \qed
\end{proof}

\section{Conclusions}\label{sec:cl}

Our results are summarized in Tables~\ref{tab:Summary1} and~\ref{tab:Summary2}. Each cell of Table~\ref{tab:Summary1} shows the syntactic complexity bounds of prefix- and suffix-free regular languages, in that order, with a particular alphabet size. Table~\ref{tab:Summary2} is structured similarly for bifix- and factor-free regular languages. The figures in bold type are tight bounds verified by {\it GAP}. To compute the bounds for suffix-, bifix-, and factor-free languages, we enumerated semigroups generated by elements of $\Bsf(n)$, $\Bbf(n)$, and $\Bff(n)$ that are contained in $\Bsf(n)$, $\Bbf(n)$, and $\Bff(n)$, respectively, and recorded the largest ones. By Propositions~\ref{prop:sf},~\ref{prop:bf},~\ref{prop:ff}, we obtained the desired bounds from the enumeration. The asterisk $\ast$ indicates that the bound is already tight for a smaller alphabet. In Table~\ref{tab:Summary1}, the last four rows include the tight upper bound $n^{n-2}$ for prefix-free languages, $\vsf(n)$, which is a tight upper bound for $2 \le n \le 5$ for suffix-free languages, conjectured upper bound $\wsf(n)$ for suffix-free languages, and a weaker upper bound $\bsf(n)$ for suffix-free languages. In Table~\ref{tab:Summary2}, the last four rows include $\vbf(n)$, which is a tight upper bound for bifix-free languages for $2 \le n \le 5$, conjectured upper bounds $\wbf(n)$ for bifix-free languages and $\wff(n)$ for factor-free languages, and weaker upper bounds $\bbf(n)$ for bifix-free languages and $\bff(n)$ for factor-free languages. 
\vspace{-.4cm}

\begin{table}[H]
\caption{Syntactic complexities of prefix- and suffix-free regular languages.}
\label{tab:Summary1}
\begin{center}
$
\begin{array}{|c||c|c|c|c|c|}    
\hline
\ \  \ \ &\ \ n=2 \ \ &\ \ n=3 \  \ & \  \  n=4 \ \ 
&  \ n=5 \ & \ n=6 \ \\
\hline \hline

|\Sig|=1
&	{\bf 1}	&	{\bf 2}	&	{\bf 3}	&	{\bf 4}	
&	{\bf 5}	\\
\hline

|\Sig|=2 
&	\ast	&	{\bf 3}/{\bf 3}		&	{\bf 11}/{\bf 11}	
&	{\bf 49}/{\bf 49}	&	? \\
\hline

|\Sig|=3
&	\ast	&	\ast				&	{\bf 14}/{\bf 13}
&	{\bf 95}/{\bf 61}	&	? \\
\hline

|\Sig|=4
&	\ast	&	\ast				&	\hspace{-.2cm}{\bf 16}/\ast 
&	\hspace{-.1cm}{\bf 110}/{\bf 67}	&	? \\
\hline

|\Sig|=5
&	\ast	&	\ast				&	\ast					
&	\hspace{-.1cm}{\bf 119}/{\bf 73}	&	? \\
\hline

|\Sig|=6
&	\ast	&	\ast				&	\ast					
&	\hspace{-.2cm}{\bf 125}/~\ast		&	~~?~/501 \\
\hline

|\Sig|=7
&	\ast	&	\ast				&	\ast					
&	\ast		&	\hspace{-.5cm}~{\bf 1296}/~?~ \\
\hline

|\Sig|=8
&	\ast	&	\ast				&	\ast					
&	\ast		&	~~\ast~/{\bf 629} \\
\hline

\cdots
&			&						&							
&						&    \\
\hline

~n^{n-2}~
&	\hspace{-.3cm}{\bf 1}			&	\hspace{-.3cm}{\bf 3}		&	\hspace{-.5cm}{\bf 16} 
&	\hspace{-.6cm}{\bf 125}			&	\hspace{-.9cm}{\bf 1296} \\
\hline

~\vsf(n)~
&\hspace{.3cm}{\bf 1}		&\hspace{.3cm}{\bf 3}	&\hspace{.5cm}{\bf 13}
&\hspace{.6cm}{\bf 73}		&\hspace{.8cm}501\\
\hline

~\wsf(n)~
&\hspace{.3cm}{\bf 1}	&\hspace{.3cm}{\bf 3}	&\hspace{.5cm}11
&\hspace{.6cm}67		&\hspace{.8cm}{\bf 629}\\
\hline

\bsf(n)
&\hspace{.3cm}{\bf 1}	&\hspace{.3cm}{\bf 3}	&\hspace{.5cm}15
&\hspace{.6cm}115		&\hspace{.8cm}1169\\
\hline
\end{array}
$
\end{center}
\label{table1}
\end{table}

\vspace{-1.4cm}

\begin{table}[H]
\caption{Syntactic complexities of bifix- and factor-free regular languages.}
\label{tab:Summary2}
\begin{center}
$
\begin{array}{|c||c|c|c|c|c|}    
\hline
\ \  \ \ &\ \ n = 2 \ \ & \ \ n=3 \  \ & \  \  n=4 \ \ 
&  \ n=5 \ & \ n=6 \ \\
\hline \hline

|\Sig|=1
&	{\bf 1}	&	{\bf 2}	&	{\bf 3}	
&	{\bf 4}	&	{\bf 5}	\\
\hline

|\Sig|=2 
&	\ast	&	\ast	&	{\bf 7}/{\bf 6} 
&	{\bf 20}/{\bf 12}	&	? \\
\hline

|\Sig|=3
&	\ast	&	\ast	&	\ast 
&	{\bf 31}/{\bf 16}	&	? \\
\hline

|\Sig|=4
&	\ast	&	\ast	&	\ast 
&	{\bf 32}/{\bf 19}	&	? \\
\hline

|\Sig|=5
&	\ast	&	\ast	&	\ast					
&	{\bf 33}/{\bf 20}	&	? \\
\hline

|\Sig|=6
&	\ast	&	\ast	&	\ast					
&	{\bf 34}/~?~		&	? \\
\hline

\cdots
&			&			&							
&			&				  \\
\hline

~\vbf(n)~
&\hspace{-.35cm}{\bf 1}	&\hspace{-.35cm}{\bf 2}	&\hspace{-.35cm}{\bf 7} 
&\hspace{-.6cm}{\bf 34}	&\hspace{-.6cm}209 \\
\hline

~\wbf(n)~
&\hspace{-.35cm}{\bf 1}	&\hspace{-.35cm}{\bf 2}	&\hspace{-.35cm}{\bf 7} 
&\hspace{-.6cm}33		&\hspace{-.6cm}{\bf 213} \\
\hline

~\wff(n)~
&\hspace{.35cm}{\bf 1}	&\hspace{.35cm}{\bf 2}	&\hspace{.35cm}{\bf 6}	
&\hspace{.5cm}{\bf 25}	&\hspace{.65cm}{\bf 150}  \\
\hline

\bbf(n)/\bff(n)
&{\bf 1}/{\bf 1}	&{\bf 2}/{\bf 2}	&{\bf 7}/{\bf 6}				
&41/31					&339/246	     \\
\hline

\end{array}
$
\end{center}
\label{table2}
\end{table}

\end{document}